\newtheorem{definition}{Definition}[section]
\newtheorem{lemma}{Lemma}[section]
\newtheorem{theorem}{Theorem}[section]
\newtheorem{claim}{Claim}[section]
\newtheorem{corollary}{Corollary}[section]
\newtheorem{example}{Example}[section]
\newtheorem{assumption}{Assumption}[section]
\newtheorem{proposition}{Proposition}[section]
\newcommand{\E}{\mathbb{E}}
\renewcommand{\tilde}{\widetilde}
\newcommand{\ignore}[1]{}
\newcommand{\kibitz}[2]{\ifnum\Comments=1\textcolor{#1}{#2}\fi}
\newcommand{\sherry}[1]{{\kibitz{red}{\noindent[Sherry: #1]}}}
\newcommand{\yc}[1]{\kibitz{magenta}{\noindent[YC: #1]}}
\newcommand{\x}[0]{{\mathbf{x}}}
\newcommand{\z}[0]{{\mathbf{z}}}
\newcommand{\bth}[0]{{\bm{\theta}}}
\newcommand{\y}[0]{{\mathbf{y}}}
\newcommand{\q}[0]{{\mathbf{q}}}
\newcommand{\bphi}[0]{{\bm{\phi}}}
\title{Truthful Data Acquisition via Peer Prediction}
\author{ 
Yiling Chen\\
Harvard University\\
yiling@seas.harvard.edu\\
\And
Yiheng Shen\\
Tsinghua University\\
shen-yh17@mails.tsinghua.edu.cn\\
\And
Shuran Zheng\\
Harvard University\\
shuran\_zheng@seas.harvard.edu\\
}
\begin{document}

\maketitle

\begin{abstract}
We consider the problem of purchasing data for machine learning or statistical estimation. The data analyst has a budget to purchase datasets from multiple data providers. She does not have any test data that can be used to evaluate the collected data and can assign payments to data providers solely based on the collected datasets. We consider the problem in the standard Bayesian paradigm and in two settings: (1) data are only collected once; (2) data are collected repeatedly and each day's data are drawn independently from the same distribution. 
For both settings, our mechanisms guarantee that truthfully reporting one's dataset is \emph{always} an equilibrium by adopting techniques from peer prediction: pay each provider the \emph{mutual information} between his reported data and other providers' reported data. Depending on the data distribution, the mechanisms can also discourage misreports that would lead to inaccurate predictions. Our mechanisms also guarantee individual rationality and budget feasibility for certain underlying distributions in the first setting and for all distributions in the second setting. 
%
\end{abstract}

\section{Introduction}

Data has been the fuel of the success of machine learning and data science, which is becoming a major driving force for technological and economic growth. 
An important question is how to acquire high-quality data to enable learning and analysis when data are private possessions of data providers.  

Naively, we could issue a constant payment to data providers in exchange for their data. But data providers can report more or less data than they actually have or even misreport values of their data without affecting their received payments. Alternatively, if we have a test dataset, we could reward data providers according to how well the model trained on their reported data performs on the test data.
However, if the test dataset is biased, this could potentially incentivize data providers to bias their reported data toward the test set, which will limit the value of the acquired data for other learning or analysis tasks. Moreover, a test dataset may not even be available in many settings. 
In this work, we explore the design of reward mechanisms for acquiring high-quality data from multiple data providers when a data buyer doesn't have access to a test dataset. The ultimate goal is that, with the designed mechanisms, strategic data providers will find that truthfully reporting their possessed dataset is their best action and manipulation will lead to lower expected rewards. To make the mechanisms practical, we also require our mechanisms to always have non-negative and bounded payments so that data providers will find it beneficial to participate in (a.k.a. individual rationality) and the data buyer can afford the payments. 


In a Bayesian paradigm where data are generated independently conditioned on some unknown parameters, we design mechanisms for two settings: (1) data are acquired only once, and (2) data are acquired repeatedly and each day's data are independent from the previous days' data. For both settings, our mechanisms guarantee that truthfully reporting the datasets is always an equilibrium. For some models of data distributions, data providers in our mechanisms receive strictly lower rewards in expectation if their reported dataset leads to an inaccurate prediction of the underlying parameters, a property we called \emph{sensitivity}.\footnote{This means that a data provider can report a different dataset without changing his reward as long as the dataset leads to the same prediction for the underlying parameters as his true dataset.} While sensitivity doesn't strictly discourage manipulations of datasets that do not change the prediction of the parameters, it is a significant step toward achieving strict incentives for truthful reporting one's datasets, an ideal goal, especially because finding a manipulation without affecting the prediction of the parameters can be difficult.  Our mechanisms guarantee IR and budget feasibility for certain underlying distributions in the first setting and for any underlying distributions in the second setting. 

Our mechanisms are built upon recent developments~\cite{kong2019information, kong2018water} in the peer prediction literature. The insight is that if we reward a data provider the mutual information \cite{kong2019information} between his data and other providers' data, then by the data processing inequality, if other providers report their data truthfully, this data provider will only decrease the mutual information, hence his reward, by manipulating his dataset. We extend the peer prediction 
method developed by \cite{kong2018water} to the data acquisition setting, and to further guarantee IR and budget feasibility. One of our major technical contributions is the explicit sensitivity guarantee of the peer-prediction style mechanisms, which is absent in the previous work.

\section{Related Work}
The problem of purchasing data from people has been investigated with different focuses, e.g. privacy concerns~\cite{GR11,FL12, GLRS14, NVX14, CIL15,waggoner2015market}, effort and cost of data providers\cite{Roth2012,CDP15,Yiling15,Chen2018,zheng2017active,chen2019prior}, reward allocation~\cite{ghorbani2019data,agarwal2019marketplace}. Our work is the first to consider rewarding data without (good) test data that can be used evaluate the quality of reported data. Similar to our setting, \cite{ghorbani2019data,agarwal2019marketplace} consider paying to multiple data providers in a machine learning task. They use a test set to assess the contribution of subsets of data and then propose a fair measurement of the value of each data point in the dataset, which is based on the Shapley value in game theory. Both of the works do not formally consider the incentive compatibility of payment allocation. \cite{waggoner2015market} proposes a market framework that purchases hypotheses for a machine learning problem when the data is distributed among multiple agents. Again they assume that the market has access to some true samples and the participants are paid with their incremental contributions evaluated by these true samples. Besides, there is a small literature (see~\cite{fang2007putting} and subsequent work) on aggregating datasets using scoring rules that also considers signal distributions in exponential families. 

The main techniques of this work come from the literature of \emph{peer prediction}~\cite{miller2005eliciting,prelec2004bayesian,dasgupta2013crowdsourced,frongillo2017geometric,shnayder2016informed,kong2019information,kong2018water,liu2018surrogate, kong2020dominantly}.
  Peer prediction is the problem of information elicitation without verification.  The participants receive correlated signals of an unknown ground truth and the goal is to elicit the true signals from the participants. In our problem, the dataset can be viewed as a signal of the ground truth. What makes our problem more challenging than the standard peer prediction problem is that (1) the signal space is much larger and (2) the correlation between signals is more complicated. 
  Standard peer prediction mechanisms either require the full knowledge of the underlying signal distribution, or make assumptions on the signal distribution that are not applicable to our problem. \cite{kong2018water} applies the peer prediction method to the co-training problem, in which two participants are asked to submit forecasts of latent labels in a machine learning problem. Our work is built upon the main insights of \cite{kong2018water}. We discuss the differences between our model and theirs in the model section, and show how their techniques are applied in the result sections.

Our work is also related to Multi-view Learning (see~\cite{xu2013survey} for a survey). But our work focuses on the data acquisition, but not the machine learning methods used on the (multi-view) data. 
\newcommand{\ds}{\mathbb{D}}

\section{Model}

A data analyst wants to gather data for some future statistical estimation or machine learning tasks. There are $n$ data providers. The $i$-th data provider holds a dataset  $D_i$
consisting of $N_i$ data points $d_i^{(1)}, \dots, d_i^{(N_i)}$ with support $\mathcal{D}_i$.  
%
%
The data generation follows a standard Bayesian process. For each data set $D_i$, data points $d_i^{(j)} \in D_i$ are i.i.d. samples conditioned on some unknown parameters $\bth \in \Theta$. 
Let $p(\bth, D_1, \dots, D_n)$ be the joint distribution of $\bth$ and $n$ data providers' datasets. 
We consider two types of spaces for $\Theta$ in this paper:
(1) $\bth$ has finite support, i.e., $|\Theta| = m$ is finite, and
(2) $\bth$ has continuous support, i.e. $\bth \in \mathbb{R}^m$ and $\Theta \subseteq \mathbb{R}^m$. 
For the case of continuous support, to alleviate computational issues, we  consider a widely used class of distributions, an \emph{exponential family}. 

The data analyst's goal is to incentivize the data providers to give their true datasets with a budget $B$. She needs to design a payment rule $r_i(\tilde{D}_1, \dots, \tilde{D}_n)$ for $i\in[n]$ that decides how much to pay data provider $i$ according to all the reported datasets $\tilde{D}_1, \dots, \tilde{D}_n$. The payment rule should ideally incentivize truthful reporting, that is, $\tilde{D}_i = D_i$ for all $i$. 

Before we formally define the desirable properties of a payment rule, we note that the analyst will have to leverage the correlation between people's data to distinguish a misreported dataset from a true dataset because all she has access to is the reported datasets. 
To make the problem tractable, we thus make the following assumption about the data correlation: parameters $\bth$ contains all the mutual information between the datasets. More formally, the datasets are independent conditioned on $\bth$.
\begin{assumption}
	$D_1, \dots, D_n$ are independent conditioned on $\bth$,
	$$
	p(D_1, \dots, D_n | \bth) = p(D_1|\bth) \cdots p(D_n|\bth).
	$$
\end{assumption}
This is definitely not an assumption that would hold for arbitrarily picked parameters $\bth$ and any datasets. One can easily find cases where the datasets are correlated to some parameters other than $\bth$. So the data analyst needs to carefully decide what to include in $\bth$ and $D_i$, by either expanding $\bth$ to include all relevant parameters or reducing the content of $D_i$ to exclude all redundant data entries that can cause extra correlations. 
\begin{example} \label{exm:linear}
Consider the linear regression model where provider $i$'s data points $d_i^{(j)} = (\z_i^{(j)}, y_i^{(j)})$ consist of a feature vector $\z_i^{(j)}$ and a label  $y_i^{(j)}$. We have a linear model  
$$
y_i^{(j)} = \bth^T \z_i^{(j)} + \varepsilon_i^{(j)}. 
$$
Then datasets $D_1, \dots, D_n$ will be independent conditioning on $\bth$ as long as (1) different data providers draw their feature vectors independently, i.e., $\z_1^{(j_1)}, \dots, \z_n^{(j_n)}$ are independent for all $j_1\in [N_1], \dots, j_n\in[N_n]$,  and (2) the noises are independent. 
\end{example}

We further assume that the data analyst has some insight about the data generation process. 
\begin{assumption}
The data analyst possesses a commonly accepted prior $p(\bth)$ and a commonly accepted model for data generating process so that she can compute the posterior $p(\bth| D_i),\, \forall i, D_i$.
\end{assumption}
When $|\Theta|$ is finite, $p(\bth| D_i)$ can be computed as a function of $ p(\bth|d_i)$ using the method in Appendix~\ref{app:sens_def}. For a model in the exponential family, $p(\bth|D_i)$ can be computed as in Definition~\ref{def:exp_conj}.

Note that we do not always require the data analyst to know the whole distribution $p(D_i | \bth)$, it suffices for the data analyst to have the necessary information to compute $p(\bth | D_i)$. 
\begin{example} \label{exm:linear_reg}
Consider the linear regression model in Example~\ref{exm:linear}. We use $\z_i$ to represent all the features in $D_i$ and use $\y_i$ to represent all the labels in $D_i$. If the features  $\z_i$ are independent from $\bth$, the data analyst does not need to know the distribution of $\z_i$. It suffices to know $p(\y_i | \z_i, \bth)$ and $p(\bth)$ to know $p(\bth|D_i)$ because
\begin{align*}
p(\bth|(\z_i, \y_i))& \propto p((\z_i,\y_i)|\bth) p(\bth) = p(\y_i|\z_i, \bth) p(\z_i|\bth) p(\bth)  = p(\y_i|\z_i, \bth) p(\z_i) p(\bth) \\
 & \propto p(\y_i|\z_i, \bth)  p(\bth).
\end{align*}
\end{example}
Finally we assume that the identities of the providers can be verified.
\begin{assumption}
	The data analyst can verify the data providers' identities, so one data provider can only submit one dataset and get one payment. 
\end{assumption}

We now formally introduce some desirable properties of a payment rule. 
We say that a payment rule is \emph{truthful} if reporting true datasets is a weak equilibrium, that is, when the others report true datasets, it is also (weakly) optimal for me to report the true dataset (based on my own belief). 

\begin{definition}[Truthfulness]
	Let $D_{-i}$ be the datasets of all providers except $i$. A payment rule $\mathbf{r}(D_1, \dots, D_n)$ is truthful if: for any (commonly accepted model of) underlying distribution $p(\bth, D_1, \dots, D_n)$, for every data provider $i$ and any realization of his dataset $D_i$, when all other data providers truthfully report $D_{-i}$, truthfully reporting $D_i$ leads to the highest expected payment, where the expectation is taken over the distribution of $D_{-i}$ conditioned on $D_i$, i.e.,  
	$$
	\E_{D_{-i}\sim p(D_{-i}|D_i)}[r_i(D_i, D_{-i})] \ge \E_{D_{-i}\sim p(D_{-i}|D_i)}[r_i(D_i', D_{-i})], \quad \forall i, D_i, D_i'.
	$$
\end{definition}
Note that this definition does not require the agents to actually know the conditional distribution and to be able to evaluate the expectation themselves. It is a guarantee that no matter what the underlying distribution is, truthfully reporting is an equilibrium. 
Because truthfulness is defined as a weak equilibrium, it does not necessarily discourage misreporting.\footnote{A constant payment rule is just a trivial truthful payment rule.} What it ensures is that the mechanism does not encourage misreporting.\footnote{Using a fixed test set may encourage misreporting.}
So, we want a stronger guarantee than truthfulness. We thus define \emph{sensitivity}: the expected payment should be strictly lower when the reported data does not give the accurate prediction of $\bth$.
\begin{definition}[Sensitivity]
	A payment rule $\mathbf{r}(D_1, \dots, D_n)$ is sensitive if for any (commonly accepted model of) underlying distribution $p(\bth, D_1, \dots, D_n)$, for any provider $i$ and any realization of his dataset $D_i$, when all other providers $j\neq i$ report $\tilde{D}_{j}(D_j)$ with accurate posterior $p(\bth|\tilde{D}_{j}(D_j)) = p(\bth|D_{j})$, we have (1) truthfully reporting  $D_i$ leads to the highest expected payment 
	$$
	\E_{D_{-i}\sim p(D_{-i}|D_i)}[r_i(D_i, \tilde{D}_{-i}(D_{-i}))] \ge \E_{D_{-i}\sim p(D_{-i}|D_i)}[r_i(D_i', \tilde{D}_{-i}(D_{-i}))], \ \forall D_i'
	$$
	and (2) reporting a dataset $D_i'$ with inaccurate posterior $p(\bth|D_i') \neq p(\bth|D_i)$ is strictly worse than reporting a dataset $\tilde{D}_i$ with accurate posterior $p(\bth|\tilde{D}_i) = p(\bth|D_i)$, 
	$$
	\E_{D_{-i}\sim p(D_{-i}|D_i)}[r_i(\tilde{D}_i, \tilde{D}_{-i}(D_{-i}))] > \E_{D_{-i}\sim p(D_{-i}|D_i)}[r_i(D_i', \tilde{D}_{-i}(D_{-i}))], 
	$$
	Furthermore, let $\Delta_i = p(\bth|D_i') - p(\bth|D_i)$, a payment rule is $\alpha$-sensitive for agent $i$ if 
	$$
	\E_{D_{-i}\sim p(D_{-i}|D_i)}[r_i(D_i, \tilde{D}_{-i}(D_{-i}))] - \E_{D_{-i}\sim p(D_{-i}|D_i)}[r_i(D_i', \tilde{D}_{-i}(D_{-i}))] \ge \alpha \Vert \Delta_i \Vert,    
	$$
	for all $D_i, D_i'$ and reports $\tilde{D}_{-i}(D_{-i})$ that give the accurate posteriors. 
\end{definition}
\yc{Hmm, this definition is tricky. As it is, it doesn't define truthfully reporting one's posterior as an equilibrium, because the belief used by individual $i$ is $p(D_{-i}|D_i)$ rather than $p(\tilde{D}_{-i}|D_i)$. So, this is not exactly correct ...}\sherry{It seems to me the belief used by the agent does not really matter.}
Our definition of sensitivity guarantees that at an equilibrium, the reported datasets must give the correct posteriors $p(\bth|\tilde{D}_i) = p(\bth|D_i)$.
\yc{I still don't think the above definition guarantees that reporting $\tilde{D}_i$ with accurate posterior is a strictly equilibrium. The above definition ensures that reporting true dataset $D_i$ is weakly better than reporting any other dataset. Then, reporting a dataset $\tilde{D}_i$ with the correct posterior is strictly better than reporting any dataset $D'_i$ with a wrong posterior. But the definition doesn't exclude the possibility that reporting $D_i$ is strictly better than reporting $\tilde{D}_i$. If that's the case, reporting $\tilde{D}_i \neq D_i$ but with the correct posterior is not an equilibrium.} 
We can further show that at an equilibrium, the analyst will get the accurate posterior $p(\bth|D_1, \dots, D_n)$.
\begin{lemma} \label{lem:sens_def}
	When $D_1, \dots, D_n$ are independent conditioned on $\bth$, for any $(D_1,\dots,D_n)$ and $(\tilde{D}_1, \dots, \tilde{D}_n)$, if $ p(\bth|D_i)=p(\bth|\tilde{D}_i)\ \forall i$, then $p(\bth|D_1,\dots,D_n)=p(\bth|\tilde{D}_1, \dots, \tilde{D}_n)$.  
\end{lemma}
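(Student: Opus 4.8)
The plan is to write the joint posterior explicitly via Bayes' rule together with the conditional independence assumption, and then re-express everything in terms of the single-dataset posteriors $p(\bth\mid D_i)$, so that the joint posterior is revealed to be a function of the realized data \emph{only} through these individual posteriors and the fixed prior. Once that is established, the hypothesis $p(\bth\mid D_i)=p(\bth\mid\tilde D_i)$ for all $i$ forces the two joint posteriors to coincide.

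First I would apply Bayes' rule, writing (as functions of $\bth$, with the $\bth$-independent factor $1/p(D_1,\dots,D_n)$ absorbed into the normalizer)
\[
p(\bth\mid D_1,\dots,D_n)\ \propto\ p(D_1,\dots,D_n\mid\bth)\,p(\bth).
\]
Using the assumption that the datasets are independent conditioned on $\bth$, the likelihood factorizes as $p(D_1,\dots,D_n\mid\bth)=\prod_{i=1}^n p(D_i\mid\bth)$, giving
\[
p(\bth\mid D_1,\dots,D_n)\ \propto\ p(\bth)\prod_{i=1}^n p(D_i\mid\bth).
\]

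Next I would eliminate the likelihoods in favor of the posteriors. For each $i$, Bayes' rule gives $p(D_i\mid\bth)=p(\bth\mid D_i)\,p(D_i)/p(\bth)$; substituting and folding the $\bth$-independent factors $\prod_i p(D_i)$ into the normalizing constant yields
\[
p(\bth\mid D_1,\dots,D_n)\ \propto\ p(\bth)^{1-n}\prod_{i=1}^n p(\bth\mid D_i).
\]
The right-hand side depends on the realized datasets only through the individual posteriors $p(\bth\mid D_i)$ and the fixed prior. Hence, if $p(\bth\mid D_i)=p(\bth\mid\tilde D_i)$ for every $i$ and every $\bth$, the two unnormalized functions of $\bth$ are identical; normalizing (summing over $\Theta$ when $|\Theta|=m$, integrating in the continuous case) then gives $p(\bth\mid D_1,\dots,D_n)=p(\bth\mid\tilde D_1,\dots,\tilde D_n)$, which is exactly the claim.

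The manipulations are elementary, so the only point requiring care is the division by $p(\bth)$ (and by $p(D_i)$) when converting likelihoods to posteriors. I would handle this by restricting attention to the support $\{\bth:p(\bth)>0\}$, on which every posterior is supported and all divisions are legitimate; for $\bth$ with $p(\bth)=0$ all posteriors vanish by definition, so the equality holds trivially there. I expect this support bookkeeping, rather than any substantive difficulty, to be the main thing to get right, together with stating the normalization argument cleanly so that the $\bth$-independent constants are correctly absorbed and the two normalizers are seen to be equal.
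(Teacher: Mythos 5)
Your proposal is correct and follows essentially the same route as the paper's proof: Bayes' rule, factorization of the likelihood via conditional independence, substitution of $p(D_i\mid\bth)=p(\bth\mid D_i)p(D_i)/p(\bth)$ to obtain $p(\bth\mid D_1,\dots,D_n)\propto p(\bth)^{1-n}\prod_i p(\bth\mid D_i)$, and then the normalization argument. Your additional care about the support $\{\bth: p(\bth)>0\}$ is a minor refinement the paper omits, but it does not change the substance of the argument.
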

A more ideal property would be that the expected payment is strictly lower for any dataset $D_i' \neq D_i$. Mechanisms that satisfy sensitivity can be viewed as an important step toward this ideal goal, as the only possible payment-maximizing manipulations are to report a dataset $\tilde{D}_i$ that has the correct posterior $p(\bth|\tilde{D}_i) = p(\bth|D_i)$. Arguably, finding such a manipulation can be challenging. Sensitivity guarantees the accurate prediction of $\bth$ at an equilibrium. 


Second, we want a fair payment rule that is indifferent to data providers' identities.
\begin{definition}[Symmetry]
A payment rule $r$ is symmetric if for all permutation of $n$ elements $\pi(\cdot)$,
$
r_i(D_1, \dots, D_n) = r_{\pi(i)}(D_{\pi(1)}, \dots, D_{\pi(n)})
$
for all $i$.
\end{definition}

Third,  we want non-negative payments and the total payment should not exceed the budget.
\begin{definition}[Individual rationality and budget feasibility]
A payment rule $r$ is individually rational if 
$
r_i(D_1, \dots, D_n) \ge 0, \quad \forall i, D_1, \dots, D_n.
$
A payment rule $r$ is budget feasible if 
$
\sum_{i=1}^n r_i(D_1, \dots, D_n) \le B$, $\forall D_1, \dots, D_n.
$
\end{definition}


We will consider two acquisition settings in this paper:

\textbf{One-time data acquisition.} The data analyst collects data in one batch. In this case, our problem is very similar to the single-task forecast elicitation in~\cite{kong2018water}. But our model considers the budget feasibility and the IR, whereas they only consider the truthfulness of the mechanism. 

\textbf{Multiple-time data acquisition.} The data analyst repeatedly collects data for $T\ge 2$ days. On day~$t$, $(\bth^{(t)}, D_1^{(t)}, \dots, D_n^{(t)})$ is drawn independently from the same distribution $p(\bth, D_1, \dots, D_n)$. The analyst has a budget $B^{(t)}$ and wants to know the posterior of $\bth^{(t)}$, $p(\bth^{(t)}|D_1^{(t)}, \dots, D_n^{(t)})$. In this case,  our setting differs from the multi-task forecast elicitation in~\cite{kong2018water} because providers can decide their strategies on a day based on all the observed historical data before that day.\footnote{This is not to say that the providers will update their prior for $\bth^{(t)}$ using the data on first $t-1$ days. Because we assume that $\bth^{(t)}$ is independent from $\bth^{(t-1)}, \dots, \bth^{(t-1)}$, so the data on first $t-1$ days contains no information about $\bth^{(t)}$. We use the same prior $p(\bth)$ throughout all $T$ days. What it means is that when the analyst decides the payment for day $t$ not only based on the report on day $t$ but also the historical reports, the providers may also use different strategies for different historical reports.} The multi-task forecast elicitation in~\cite{kong2018water} asks the agents to submit forecasts of latent labels in multiple similar independent tasks. It is assumed that the agent's forecast strategy for one task only depends on his information about that task but not the information about other tasks. 


\section{Preliminaries}

In this section, we introduce some necessary background for developing our mechanisms. We first give the definitions of {\em exponential family} distributions.
Our designed mechanism will leverage the idea of mutual information between reported datasets to incentivize truthful reporting. 

\subsection{Exponential Family}

\begin{definition}[Exponential family~\cite{murphy2012machine}]
A likehihood function $p(\x|\bth)$, for $\x = (x_1, \dots, x_n) \in \mathcal{X}^n$ and $\bth \in \Theta \subseteq \mathbb{R}^m$ is said to be in the \emph{exponential family} in canonical form if it is of the form 
\begin{equation} \label{eqn:exp_fam_prob}
p(\x|\bth) = \frac{1}{Z(\bth)} h(\x) \exp \left[\bth^T \bphi(\x) \right] \quad\text{ or }\quad  p(\x|\bth) = h(\x) \exp \left[\bth^T \bphi(\x) - A(\bth) \right] 
\end{equation}
Here $\bm{\phi}(x) \in \mathbb{R}^m$ is called a vector of \emph{sufficient statistics}, $Z(\bth) = \int_{\mathcal{X}^n} h(\x) \exp\left[\bth^T \bphi(\x) \right]$ is called the \emph{partition function}, $A(\bth) = \ln Z(\bth)$ is called the \emph{log partition function}.
\end{definition}
In Bayesian probability theory, if the posterior distributions $p(\bth|\x)$ are in the same probability distribution family as the prior probability distribution $p(\bth)$, the prior and posterior are then called conjugate distributions, and the prior is called a conjugate prior for the likelihood function.
\begin{definition}[Conjugate prior for the exponential family~\cite{murphy2012machine}] \label{def:exp_conj}
For a likelihood function in the exponential family $p(\x|\bth) = h(\x) \exp \left[\bth^T \bphi(\x) - A(\bth) \right]$. The conjugate prior for $\bth$ with parameters $\nu_0, \overline{\bm{\tau}}_0$  is of the form 
\begin{equation} \label{eqn:exp_fam_prior}
p(\bth) = \mathcal{P}(\bth| \nu_0, \overline{\bm{\tau}}_0) = g(\nu_0, \overline{\bm{\tau}}_0) \exp\left[ \nu_0 \bth^T \overline{\bm{\tau}}_0 - \nu_0 A(\bth ) \right].
\end{equation}
Let $\overline{\bm{s}} = \frac{1}{n} \sum_{i=1}^n \bm{\phi}(x_i) $. Then the posterior of $\bth$ can be represented in the same form as the prior
\begin{align*}
p(\bth|\x) \propto \exp \left[ \bth^T (\nu_0\overline{\bm{\tau}}_0 + n \overline{\bm{s}}) - (\nu_0 + n) A(\bth) \right] = \mathcal{P}\big(\bth| \nu_0 + n, \frac{\nu_0\overline{\bm{\tau}}_0 + n \overline{\bm{s}}}{\nu_0 + n}\big),
\end{align*}
where $\mathcal{P}\big(\bth| \nu_0 + n, \frac{\nu_0\overline{\bm{\tau}}_0 + n \overline{\bm{s}}}{\nu_0 + n}\big)$ is the conjugate prior with parameters $\nu_0 + n$ and $\frac{\nu_0\overline{\bm{\tau}}_0 + n \overline{\bm{s}}}{\nu_0 + n}$.	
\end{definition}

A lot of commonly used distributions belong to the exponential family. Gaussian, Multinoulli, Multinomial, Geometric, etc.
Due to the space limit, we introduce only the definitions and refer the readers who are not familiar with the exponential family to~\cite{murphy2012machine} for more details.

\subsection{Mutual Information}
We will use the \emph{point-wise mutual information} and the \emph{$f$-mutual information gain} defined in~\cite{kong2018water}. We introduce this notion of mutual information in the context of our problem.
\begin{definition}[Point-wise mutual information] \label{def:PMI}
	We define the point-wise mutual information between two datasets $D_1$ and $D_2$ to be 
	\begin{eqnarray}
	PMI(D_1, D_2) = \int_{\bth\in \Theta} \frac{p(\bth|D_1)p(\bth|D_2)}{p(\bth)} \, d\bth.
	\end{eqnarray}
\end{definition}
For finite case, we define $PMI(D_1, D_2) = \sum_{\bth\in \Theta} \frac{p(\bth|D_1)p(\bth|D_2)}{p(\bth)} \, d\bth$. 

When $|\Theta|$ is finite or a model in the exponential family is used, the PMI will be computable.
\begin{lemma} \label{lem:multi_comp}
When $|\Theta|$ is finite, $PMI(\cdot)$ can be computed in $O(|\Theta|)$ time. 	If a model in exponential family is used, so that the prior and all the posterior of $\bth$ can be written in the form 
\begin{eqnarray*}
&p(\bth) = \mathcal{P}(\bth|\nu_0, \overline{\bm{\tau}}_0) = g(\nu_0, \overline{\bm{\tau}}_0) \exp\left[ \nu_0 \bth^T \overline{\bm{\tau}}_0 - \nu_0 A(\bth ) \right],
\end{eqnarray*}
 $p(\bth|D_i) = \mathcal{P}(\bth|\nu_i, \overline{\bm{\tau}}_i)$ and $p(\bth|D_{-i}) = \mathcal{P}(\bth|\nu_{-i}, \overline{\bm{\tau}}_{-i})$,
then the point-wise mutual information can be computed as
$$
PMI(D_i, D_{-i}) = \frac{g(\nu_i, \overline{\bm{\tau}}_i) g(\nu_{-i}, \overline{\bm{\tau}}_{-i})}{g(\nu_0, \overline{\bm{\tau}}_0) g(\nu_i +\nu_{-i}-\nu_0, \frac{\nu_{i} \overline{\bm{\tau}}_{i} + \nu_{-i} \overline{\bm{\tau}}_{-i} - \nu_{0}\overline{\bm{\tau}}_{0}}{\nu_i +\nu_{-i}-\nu_0} )}.
$$
\end{lemma}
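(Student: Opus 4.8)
The plan is to treat the two claims separately, since apart from Definition~\ref{def:PMI} they have little in common. For the finite-support case I would argue directly from the definition: $PMI(D_1,D_2)=\sum_{\bth\in\Theta}\frac{p(\bth|D_1)p(\bth|D_2)}{p(\bth)}$ is a sum of $|\Theta|=m$ terms, and by the assumption that the analyst can evaluate the prior $p(\bth)$ and every posterior $p(\bth|D_i)$ at each $\bth$, each summand is an $O(1)$ arithmetic operation. Hence the full sum is formed in $O(|\Theta|)$ time, and there is nothing further to prove for this part.

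For the exponential-family case the content is an algebraic identity that I would establish by substitution. Writing the canonical conjugate-prior forms
\[
p(\bth)=g(\nu_0,\overline{\bm{\tau}}_0)\exp[\nu_0\bth^T\overline{\bm{\tau}}_0-\nu_0 A(\bth)],
\]
together with $p(\bth|D_i)=g(\nu_i,\overline{\bm{\tau}}_i)\exp[\nu_i\bth^T\overline{\bm{\tau}}_i-\nu_i A(\bth)]$ and the analogous expression for $p(\bth|D_{-i})$, I would plug these into the integrand $\frac{p(\bth|D_i)p(\bth|D_{-i})}{p(\bth)}$. The three normalizing factors pull out of the integral as the constant $\frac{g(\nu_i,\overline{\bm{\tau}}_i)g(\nu_{-i},\overline{\bm{\tau}}_{-i})}{g(\nu_0,\overline{\bm{\tau}}_0)}$, and because $\bth^T\overline{\bm{\tau}}$ and $A(\bth)$ enter every exponent linearly in the parameters, the remaining exponent collapses to $\bth^T(\nu_i\overline{\bm{\tau}}_i+\nu_{-i}\overline{\bm{\tau}}_{-i}-\nu_0\overline{\bm{\tau}}_0)-(\nu_i+\nu_{-i}-\nu_0)A(\bth)$.

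The key step is to recognize this residual integral as the reciprocal of a conjugate-prior normalizer. Setting $\nu':=\nu_i+\nu_{-i}-\nu_0$ and $\overline{\bm{\tau}}':=\frac{\nu_i\overline{\bm{\tau}}_i+\nu_{-i}\overline{\bm{\tau}}_{-i}-\nu_0\overline{\bm{\tau}}_0}{\nu'}$ so that $\nu'\overline{\bm{\tau}}'=\nu_i\overline{\bm{\tau}}_i+\nu_{-i}\overline{\bm{\tau}}_{-i}-\nu_0\overline{\bm{\tau}}_0$, the surviving exponent is exactly that of the density $\mathcal{P}(\bth|\nu',\overline{\bm{\tau}}')$ from Definition~\ref{def:exp_conj} stripped of its normalizing factor. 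Since that density integrates to $1$, we have $\int_{\bth}\exp[\nu'\bth^T\overline{\bm{\tau}}'-\nu'A(\bth)]\,d\bth=1/g(\nu',\overline{\bm{\tau}}')$, and substituting this back produces precisely the claimed closed form. I expect the only delicate point to be justifying that $(\nu',\overline{\bm{\tau}}')$ is a \emph{valid} parameter pair, so that the integral converges and $g(\nu',\overline{\bm{\tau}}')$ is finite; this follows from Definition~\ref{def:exp_conj}, where conditioning on $N_i$ and $N_{-i}$ data points gives $\nu_i=\nu_0+N_i$ and $\nu_{-i}=\nu_0+N_{-i}$, whence $\nu'=\nu_0+N_i+N_{-i}>0$ and $(\nu',\overline{\bm{\tau}}')$ are in fact the parameters of the posterior $p(\bth\mid D_i,D_{-i})$ obtained from the common prior, hence automatically an admissible member of the conjugate family.
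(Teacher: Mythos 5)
Your proof is correct and takes essentially the same route as the paper, which establishes the exponential-family identity as Lemma~\ref{lem:exp_int} in Appendix~\ref{app:math}: substitute the canonical conjugate forms, factor the three normalizers out of the integral, and recognize the residual integral as $1/g\big(\nu_i+\nu_{-i}-\nu_0, \frac{\nu_i\overline{\bm{\tau}}_i+\nu_{-i}\overline{\bm{\tau}}_{-i}-\nu_0\overline{\bm{\tau}}_0}{\nu_i+\nu_{-i}-\nu_0}\big)$ because the corresponding conjugate-family pdf integrates to one. Your closing remark verifying that the combined parameter pair is admissible (being the parameters of the posterior $p(\bth\mid D_i, D_{-i})$) is a minor point the paper's proof leaves implicit, and the finite-$\Theta$ part is handled the same trivial way in both.
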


For single-task forecast elicitation, \cite{kong2018water} proposes a truthful payment rule. 
\begin{definition}[$\log$-PMI payment~\cite{kong2018water}]
	Suppose there are two data providers reporting $\tilde{D}_A$ and $\tilde{D}_B$ respectively. Then the $\log$-PMI rule pays them $r_A = r_B= \log(PMI(\tilde{D}_A, \tilde{D}_B))$.
\end{definition}

\begin{proposition} \label{prop:mutual_info}
	When the $\log$-PMI rule is used, the expected payment equals the mutual information between $\tilde{D}_A$ and $\tilde{D}_B$, 
	where the expectation is taken over the distribution of $\tilde{D}_A$ and $\tilde{D}_B$.
\end{proposition}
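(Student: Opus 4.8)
The plan is to reduce the point-wise mutual information to the familiar information-theoretic ratio $p(\tilde{D}_A, \tilde{D}_B)/(p(\tilde{D}_A)p(\tilde{D}_B))$, after which the claim is immediate from the definition of mutual information. The conditional-independence assumption on the datasets is the crucial ingredient that makes this reduction go through.

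First I would rewrite each posterior appearing in the integrand of Definition~\ref{def:PMI} using Bayes' rule, $p(\bth|\tilde{D}_A) = p(\tilde{D}_A|\bth)\,p(\bth)/p(\tilde{D}_A)$ and analogously for $\tilde{D}_B$. Substituting these in and canceling one of the two resulting factors of $p(\bth)$ against the $p(\bth)$ in the denominator of the PMI gives
$$PMI(\tilde{D}_A, \tilde{D}_B) = \frac{1}{p(\tilde{D}_A)\,p(\tilde{D}_B)} \int_{\bth} p(\tilde{D}_A|\bth)\, p(\tilde{D}_B|\bth)\, p(\bth)\, d\bth.$$
Next I would invoke the conditional-independence assumption, which yields $p(\tilde{D}_A|\bth)\, p(\tilde{D}_B|\bth) = p(\tilde{D}_A, \tilde{D}_B|\bth)$. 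The integral then collapses to the joint marginal $\int_{\bth} p(\tilde{D}_A, \tilde{D}_B, \bth)\, d\bth = p(\tilde{D}_A, \tilde{D}_B)$, producing the clean identity
$$PMI(\tilde{D}_A, \tilde{D}_B) = \frac{p(\tilde{D}_A, \tilde{D}_B)}{p(\tilde{D}_A)\, p(\tilde{D}_B)}.$$
Taking logarithms and then the expectation over the joint distribution of $(\tilde{D}_A, \tilde{D}_B)$, the expected $\log$-PMI payment becomes
$$\E_{\tilde{D}_A, \tilde{D}_B}\big[\log PMI(\tilde{D}_A, \tilde{D}_B)\big] = \E_{\tilde{D}_A, \tilde{D}_B}\left[\log \frac{p(\tilde{D}_A, \tilde{D}_B)}{p(\tilde{D}_A)\,p(\tilde{D}_B)}\right],$$
which is exactly the definition of the mutual information $I(\tilde{D}_A; \tilde{D}_B)$.

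The computation is short and the only substantive step is the use of conditional independence, which converts a product of individual likelihoods into a single joint likelihood; without it the integral does not simplify and the PMI need not equal the information-theoretic ratio. The one subtlety I would flag is the role of the expectation's base measure: the identity for $PMI$ holds pointwise for any pair of datasets, but for the expectation to coincide with the mutual information of the reports one needs the reports to be distributed according to the common model (the truthful case), since the posteriors used to define $PMI$ are computed from that same prior and model. Under any other report distribution the final expectation would instead be a cross-entropy-type quantity rather than the mutual information itself, so this assumption is what ties the two together.
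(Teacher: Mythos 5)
Your proof is correct and takes essentially the same route as the paper: the key identity $PMI(\tilde{D}_A, \tilde{D}_B) = p(\tilde{D}_A, \tilde{D}_B)/\left(p(\tilde{D}_A)\,p(\tilde{D}_B)\right)$ is exactly Lemma~\ref{lem12} in the appendix, proved there in the same way (Bayes' rule plus conditional independence given $\bth$), after which taking logarithms and the expectation yields the mutual information by definition. Your closing caveat---that the expectation must be taken under the model (truthful) distribution of the reports for the claim to be the mutual information rather than a cross-entropy-type quantity---is a sound observation and matches the paper's implicit reading of the proposition.
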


We now give the definition of the $f$-mutual information. An $f$-divergence is a function that measures the difference between two probability distributions.
\begin{definition}[$f$-divergence]
Given a convex function $f$ with $f(1) = 0$, for two distributions over $\Omega$, $p,q \in \Delta \Omega$, define the $f$-divergence of $p$ and $q$ to be
$
D_f(p,q) = \int_{\omega \in \Omega} p(\omega) f\left( \frac{q(\omega) }{ p(\omega) } \right).
$
\end{definition}
The \emph{$f$-mutual information} of two random variables is a measure of  the mutual dependence of two random variables, which is defined as the $f$-divergence between their joint distribution and the product of their marginal distributions.

In duality theory, the convex conjugate of a function is defined as follows.
\begin{definition}[Convex conjugate]
\label{def:convex_conjugate}
For any function $f: \mathbb{R} \to \mathbb{R}$, define the convex conjugate function of $f$ as 
$
f^*(y) = \sup_x  xy - f(x). 
$
\end{definition}
The following inequality (\cite{nguyen2010estimating, kong2018water}) will be used in our proof.
\begin{lemma}[Lemma 1 in \cite{nguyen2010estimating}] \label{lem11}
For any differentiable convex function $f$ with $f(1) = 0$, any two distributions over $\Omega$, $p,q \in \Delta \Omega$, let $\mathcal{G}$ be the set of all functions from $\Omega$ to $\mathbb{R}$, then we have
$$
D_f(p,q) \ge \sup_{g \in \mathcal{G}} \int_{\omega \in \Omega}  g(\omega) q(\omega) - f^*(g(\omega)) p(\omega)\, d\omega = \sup_{g \in \mathcal{G}} \E_q g - \E_p f^*(g).
$$
A function $g$ achieves equality if and only if  
$
g(\omega) \in \partial f\left(\frac{q(\omega)}{p(\omega)}\right) \forall \omega
$
with $p(\omega)>0$,
where $\partial f\big(\frac{q(\omega)}{p(\omega)}\big)$ represents the subdifferential of $f$ at point $q(\omega)/p(\omega)$.
\end{lemma}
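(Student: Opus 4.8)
The plan is to reduce the entire statement to the Fenchel--Young inequality applied pointwise at each $\omega \in \Omega$, with the choice $x = q(\omega)/p(\omega)$ and $y = g(\omega)$. By the definition of the convex conjugate $f^*(y) = \sup_x (xy - f(x))$, we immediately have $f^*(y) \ge xy - f(x)$ for every $x$, which rearranges to the Fenchel--Young inequality $f(x) \ge xy - f^*(y)$, with equality precisely when $x$ attains the supremum defining $f^*(y)$, i.e.\ when $y \in \partial f(x)$. This single inequality, together with the nonnegativity of $p(\omega)$, is the engine behind both the bound and the tightness condition.

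First I would establish the inequality. Fix any $g \in \mathcal{G}$. At each $\omega$ with $p(\omega) > 0$, apply Fenchel--Young with $x = q(\omega)/p(\omega)$ and $y = g(\omega)$ to get
\[
f\!\left(\frac{q(\omega)}{p(\omega)}\right) \ge g(\omega)\frac{q(\omega)}{p(\omega)} - f^*(g(\omega)).
\]
Multiplying through by $p(\omega) \ge 0$ (which preserves the inequality) yields the pointwise bound $p(\omega) f(q(\omega)/p(\omega)) \ge g(\omega) q(\omega) - f^*(g(\omega)) p(\omega)$. Integrating over $\Omega$ gives $D_f(p,q) \ge \int_{\omega\in\Omega} [g(\omega) q(\omega) - f^*(g(\omega)) p(\omega)]\, d\omega = \E_q g - \E_p f^*(g)$. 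Since $g$ was arbitrary, taking the supremum over $\mathcal{G}$ delivers the claimed inequality.

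For the equality characterization I would invoke the tightness clause of Fenchel--Young: the pointwise inequality above is an equality exactly when $g(\omega) \in \partial f(q(\omega)/p(\omega))$. Hence if $g$ is chosen so that $g(\omega) \in \partial f(q(\omega)/p(\omega))$ for every $\omega$ with $p(\omega) > 0$, each term in the integrand matches $p(\omega) f(q(\omega)/p(\omega))$, the integral equals $D_f(p,q)$, and the supremum is attained; conversely, if some $g$ attains the supremum, the (nonnegative) pointwise gaps must vanish wherever $p(\omega) > 0$, forcing $g(\omega) \in \partial f(q(\omega)/p(\omega))$ on the support of $p$. Because $f$ is differentiable, the subdifferential is the singleton $\{f'(q(\omega)/p(\omega))\}$, so the optimal test function is simply $g^*(\omega) = f'(q(\omega)/p(\omega))$.

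The algebra is trivial; the only genuine care is measure-theoretic, and this is where the main obstacle lies. On the set where $p(\omega) = 0$ the ratio $q(\omega)/p(\omega)$ is undefined and the term $g(\omega) q(\omega)$ on the right may be unbounded, so one must either assume $q \ll p$ (under which $D_f$ is finite) or adopt the standard convention for $0 \cdot f(\cdot/0)$ and restrict the equality condition to the support of $p$, exactly as the statement does by qualifying it with ``$p(\omega) > 0$''. A secondary point worth a sentence is justifying that the supremum may be pushed inside the integral to make the bound tight: this holds because $g$ can be specified independently at each $\omega$ (a measurable-selection argument), which is precisely what converts the inequality into an equality for the optimal $g$.
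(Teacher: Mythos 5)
The paper never proves this lemma itself: it is imported verbatim as Lemma~1 of \cite{nguyen2010estimating}, and both places it appears (the Preliminaries and Appendix~\ref{app:math}) state it without proof. Your argument is correct and is essentially the canonical proof of the cited result: apply the Fenchel--Young inequality pointwise with $x = q(\omega)/p(\omega)$ and $y = g(\omega)$, multiply by $p(\omega)\ge 0$, integrate, and take the supremum over $g$; the tightness clause of Fenchel--Young (equality iff $y\in\partial f(x)$) then yields the equality characterization, with differentiability of $f$ collapsing the subdifferential to $\{f'(q(\omega)/p(\omega))\}$ so the supremum is actually attained. You also correctly identify the one genuinely delicate point, the set $\{\omega : p(\omega)=0\}$, where the ratio is undefined and the right-hand side can blow up unless one assumes $q \ll p$ or adopts the usual conventions; this is precisely why the statement restricts the equality condition to $p(\omega)>0$, and in the paper's applications (finite $\Theta$ with the restriction to $\ds_i(D_{-i})$, or exponential-family posteriors, which have full support) this degenerate set does not cause trouble. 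No gap.
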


\section{One-time Data Acquisition}\label{sec:single}

In this section we apply \cite{kong2018water}'s $\log$-PMI payment rule to our one-time data acquisition problem. The $\log$-PMI payment rule ensures truthfulness, but its payment can be negative or unbounded or even ill-defined. So we mainly focus on the mechanism's sensitivity, budget feasibility and IR. To guarantee budget feasibility and IR, our mechanism requires a lower bound and an upper bound of PMI, which may be difficult to find for some models in the exponential family.


 If the analyst knows the distribution $p(D_i|\bth)$, 
 then she will be able to compute $p(D_{-i}|D_i)=\sum_\bth p(D_{-i}|\bth) p(\bth|D_i)$. In this case, 
 we can employ peer prediction mechanisms~\cite{miller2005eliciting} to design payments and guarantee truthfulness. 
 In Appendix \ref{app:single_trivial}, we give an example of such mechanisms.

In this work we do not assume that $p(D_i|\bth)$ is known (see Example~\ref{exm:linear_reg}).
When $p(D_i | \bth)$ is unknown but the analyst can compute $p(\bth | D_i)$, our idea is to use the $\log$-PMI payment rule in~\cite{kong2018water}  and then add a normalization step to ensure budget feasibility and IR. However the $\log$-PMI will be ill-defined if $PMI = 0$. 
To avoid this, for each possible $D_{-i}$, 
we define set
    $\ds_i(D_{-i})=\{D_i| PMI(D_i, D_{-i})>0\}$ and the log-PMI will only be computed for $\tilde{D}_i \in \ds_i(\tilde{D}_{-i})$. 
    
 The normalization step will require an upper bound $R$ and lower bound $L$ of the log-PMI payment.\footnote{WLOG, we can assume that $L<R$ here. Because $L=R$ implies that all agents' datasets are independent. 
}
If $|\Theta|$ is finite, we can find a lower bound and an upper bound in polynomial time, which we prove in Appendix~\ref{app:single_LR}.
When a model in the exponential family is used, it is more difficult to find $L$ and $R$. By Lemma~\ref{lem:multi_comp}, if the $g$ function is bounded, we will be able to bound the payment. For example, if we are estimating the mean of a univariate Gaussian with known variance, $L$ and $R$ will be bounded if the number of data points is bounded.  Details can be found in Appendix~\ref{app:gaussian}. Our mechanism works as follows.


\begin{algorithm}[H]               
\SetAlgorithmName{Mechanism}{mechanism}{List of Mechanisms}
\begin{algorithmic}
    \STATE(1) Ask all data providers to report their datasets $\tilde{D}_1, \dots, \tilde{D}_n$.
    \STATE(2) If data provider $i$'s reported dataset $\tilde{D}_i \in \ds_i(\tilde{D}_{-i})$, we compute a score for his dataset $s_i=\log PMI(\tilde{D}_i, \tilde{D}_{-i})$. 
    \STATE(3) The final payment for data provider $i$ is: 
    \begin{align*}
        r_i(\tilde{D}_1, \dots, \tilde{D}_n)=\left\{\begin{array}{ll}
           \frac{B}{n}\cdot \frac{s_i-L}{R-L}  & \text{  if  } \tilde{D}_i\in \ds_i(\tilde{D}_{-i}) \\
            0 & \text{  otherwise.} 
        \end{array}
        \right.
    \end{align*}
\end{algorithmic}
\caption{One-time data collecting mechanism.}
\label{alg:single}
\end{algorithm}



\begin{theorem}\label{thm_truthful}
	Mechanism~\ref{alg:single} is IR, truthful, budget feasible, symmetric.
\end{theorem}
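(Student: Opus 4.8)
The plan is to dispose of individual rationality, budget feasibility, and symmetry immediately, and then to spend the effort on truthfulness.

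\textbf{IR, budget feasibility, symmetry.} Since $L$ and $R$ are a valid lower and upper bound on the log-PMI score, whenever $\tilde D_i\in\ds_i(\tilde D_{-i})$ we have $s_i=\log PMI(\tilde D_i,\tilde D_{-i})\in[L,R]$, so $\frac{s_i-L}{R-L}\in[0,1]$ and $r_i\in[0,B/n]$; otherwise $r_i=0$. Hence $r_i\ge0$ (IR) and $\sum_i r_i\le n\cdot\frac{B}{n}=B$ (budget feasibility). Symmetry is immediate because $PMI(\tilde D_i,\tilde D_{-i})$, the membership test $\tilde D_i\in\ds_i(\tilde D_{-i})$, and the constants $L,R,B/n$ are all invariant under relabeling the providers: the score depends only on $\tilde D_i$ together with the unordered collection of the other reports.

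\textbf{Reduction for truthfulness.} Fix provider $i$ with realized dataset $D_i$ and assume the others report truthfully. The engine is the pointwise identity implied by the conditional-independence assumption (Assumption~1) and Bayes' rule,
\[
PMI(D_i,D_{-i})=\int_{\bth}\frac{p(\bth|D_i)\,p(\bth|D_{-i})}{p(\bth)}\,d\bth=\frac{p(D_{-i}|D_i)}{p(D_{-i})}.
\]
Write $q=p(\cdot|D_i)$ and $q'=p(\cdot|D_i')$ for the induced beliefs over $D_{-i}$. Two facts follow. First, on the support of $q$ both $p(D_{-i}|D_i)>0$ and $p(D_{-i})>0$, so $PMI(D_i,D_{-i})>0$: truthful reporting always lands in $\ds_i(D_{-i})$ and is never truncated, giving expected payment $\frac{B}{n(R-L)}\big(\E_q[\log PMI(D_i,\cdot)]-L\big)$. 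Second, on $\ds_i$ the payment is an increasing affine image of the log-PMI score, and such maps preserve the maximizer of an expectation. For any misreport that keeps $PMI(D_i',\cdot)>0$ $q$-almost surely, truthfulness is therefore exactly the Gibbs inequality $\E_q[\log PMI(D_i,\cdot)]-\E_q[\log PMI(D_i',\cdot)]=D_{\mathrm{KL}}(q\,\|\,q')\ge0$, which is the content behind Proposition~\ref{prop:mutual_info}.

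\textbf{Main obstacle: the zero-PMI floor.} The genuine difficulty is the clause ``$r_i=0$ otherwise.'' A misreport $D_i'$ can force $PMI(D_i',D_{-i})=0$ on the event $E^c=\{q'=0\}$, which may carry positive mass $\mu=q(E^c)$; there the clipped payment is $0$ instead of the $-\infty$ that the raw log-PMI rule would charge, so clipping can make a manipulation look better than the truth --- indeed one can build a finite-$\bth$ example in which, if $L$ equals the exact minimum log-PMI, a misreport strictly beats truthful reporting. My resolution exploits the freedom in choosing $L$. After the reduction, truthfulness is equivalent to $m-m'\ge L\mu$, where $m=\E_q[\log PMI(D_i,\cdot)]$ and $m'=\E_q[\log PMI(D_i',\cdot)\,\mathbf 1_E]$. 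Splitting
\[
m-m'=\int_{E} q\log\tfrac{q}{q'}\,dD_{-i}+\int_{E^c} q\,\log PMI(D_i,\cdot)\,dD_{-i},
\]
and using the elementary bounds $\int_E q\log\frac{q}{q'}\ge(1-\mu)\log(1-\mu)\ge-\mu$ (Jensen, since $q'$ is a probability measure on $E$) together with $\log PMI(D_i,\cdot)\ge L_0$ on the support (with $L_0$ the true minimum log-PMI), yields $m-m'\ge\mu(L_0-1)$. Consequently any lower bound taken with the slack $L\le L_0-1$ guarantees $m-m'\ge L\mu$ for every misreport, which restores truthfulness while still satisfying $L\le L_0$ so that IR and budget feasibility are untouched. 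I expect the crux to be establishing this margin and confirming that the bound $L$ constructed in the appendix can be chosen this loose; everything else is bookkeeping.
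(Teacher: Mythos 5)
Your handling of IR, budget feasibility, and symmetry, and your use of the identity $PMI(D_i,D_{-i})=p(D_{-i}|D_i)/p(D_{-i})$ plus the Gibbs/KL inequality for the core of truthfulness, coincide with the paper's own proof. The genuine divergence is the clause you call the main obstacle, and here your proposal is more careful than the paper, not merely different. The paper's proof introduces the idealized score $Rev_i'=\E_{D_{-i}\sim p(\cdot|D_i)}[\log PMI(\tilde D_i,D_{-i})]$ with the convention $0\cdot\log 0=0$, asserts that $Rev_i'$ is a positive affine transformation of the mechanism's actual expected payment $Rev_i$, and then maximizes $Rev_i'$ by the KL argument. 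That affine identity holds only for reports whose PMI is positive $p(\cdot|D_i)$-almost surely: if $PMI(\tilde D_i,\cdot)=0$ on an event of mass $\mu>0$ under $p(\cdot|D_i)$, then $Rev_i'$ contains terms $p(D_{-i}|D_i)\cdot\log 0=-\infty$ while the mechanism clips those payments to $0$, so $Rev_i$ stays finite and the two objectives are not order-equivalent. The paper's split of the sum writes exactly those terms as ``$0\cdot\log 0$'', silently identifying the support of $p(\cdot|D_i)$ with that of $p(\cdot|\tilde D_i)$. So the paper's argument is complete only for deviations with $\mu=0$ (in particular, in the exponential-family case, where all posteriors are positive and clipping never binds), whereas your proof confronts the general finite-$\Theta$ case.

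I verified your repair line by line: the requirement truth-minus-misreport $\ge L\mu$; the decomposition $m-m'=\int_E q\log(q/q')+\int_{E^c}q\log PMI(D_i,\cdot)$; the bound $\int_E q\log(q/q')\ge(1-\mu)\log(1-\mu)\ge-\mu$ (valid since $q'(E)=1$, natural log); and $\int_{E^c}q\log PMI(D_i,\cdot)\ge\mu L_0$, giving $m-m'\ge\mu(L_0-1)\ge L\mu$ whenever $L\le L_0-1$. Your claim that slack is necessary is also correct. Take $\Theta=\{1,2,3\}$ uniform, a peer whose single data point reveals $\bth$, and an agent whose true data point has posterior $(1/2,\,1/2-\epsilon,\,\epsilon)$ while another feasible data point has posterior $(1/2,\,1/2,\,0)$: with $L=L_0=\log(3\epsilon)$, both reports earn $0$ against the third peer outcome (the truth because its PMI attains the minimum, the misreport because it is clipped), but the misreport earns strictly more against the second outcome, beating the truth in expected score by $-(1/2-\epsilon)\log(1-2\epsilon)>0$. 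Two caveats. First, be explicit that what you prove is a corrected theorem: the statement as literally written (any valid bounds $L,R$, including $L=L_0$) is false, and your argument is a repair of the paper's proof rather than an alternative to a correct one. Second, your closing worry about whether the appendix's $L$ ``can be chosen this loose'' dissolves: one can always subtract $1$ from any valid lower bound, which preserves $L\le L_0$ and hence IR and budget feasibility, so no property of the appendix construction needs to be checked.
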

Note that by Proposition~\ref{prop:mutual_info}, the expected payment for a data provider is decided by the mutual information between his data and other people's data. The payments are efficiently computable for finite-size $\Theta$ and for models in exponential family (Lemma~\ref{lem:multi_comp}).

Next, we discuss the sensitivity. 
 In~\cite{kong2018water},  checking whether the mechanism will be sensitive requires knowing whether a system of linear equations (which has an exponential size in our problem) has a unique solution. So it is not clear how likely the mechanisms will be sensitive. In our data acquisition setting, we are able to give much stronger and more explicit guarantees. This kind of stronger guarantee is possible because of the special structure of the reports (or the signals) that each dataset consists of i.i.d. samples.
 
We first define some notations.
When $|\Theta|$ is finite,  let $\bm{Q_{-i}}$ be a $(\Pi_{j\in [n], j\neq i} |\mathcal{D}_j|^{N_j})\times |\Theta|$ matrix that represents the  conditional distribution of $\bth$ conditioning on every realization of $D_{-i}$. So the element in row $D_{-i}$ and column $\bth$ is equal to 
$
p(\bth | D_{-i}).	
$
We also define the  data generating matrix $G_i$ with $|\mathcal{D}_i|$ rows and $|\Theta|$ columns. Each row corresponds to a possible data point $d_i\in \mathcal{D}_i$ in the dataset and each column corresponds to a $\bth\in \Theta$. The element in the row corresponding to data point $d_i$ and the column $\bth$ is $p(\bth|d_i)$. 
We give the sufficient condition for the mechanism to be sensitive.
\begin{theorem}\label{thm_sensitive}
 When $|\Theta|$ is finite, Mechanism~\ref{alg:single} is sensitive if for all $i$, $Q_{-i}$ has rank $|\Theta|$.
\end{theorem}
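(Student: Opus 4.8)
The plan is to reduce the sensitivity claim to a single strictly concave maximization and then invoke the rank hypothesis. Fix agent $i$ and a realization $D_i$, and suppose every other provider reports some $\tilde{D}_{-i}$ with accurate posterior $p(\bth\mid\tilde{D}_{-i})=p(\bth\mid D_{-i})$. Since $PMI(D_i',D_{-i})=\sum_{\bth}p(\bth\mid D_i')\,p(\bth\mid D_{-i})/p(\bth)$ depends on the two datasets only through their posteriors, the score $s_i$ and hence the affine payment $r_i=\frac{B}{n(R-L)}(s_i-L)$ depend on $i$'s report only through $p(\bth\mid D_i')$. Writing $V(D_i'):=\E_{D_{-i}\sim p(D_{-i}\mid D_i)}[\log PMI(D_i',D_{-i})]$ and noting the payment is an increasing affine function of $V$ (using $R>L$), both parts of sensitivity follow once I show that, over all reported posteriors, $V$ is maximized at the true posterior $p(\bth\mid D_i)$, uniquely so under the rank hypothesis.

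The key step is to recognize $V(D_i)-V(D_i')$ as a KL divergence. Using conditional independence one checks $PMI(D_1,D_2)=p(D_1,D_2)/\big(p(D_1)p(D_2)\big)$, so $PMI(D_i,D_{-i})\,p(D_{-i})=p(D_{-i}\mid D_i)$. Define $b(D_{-i}):=PMI(D_i',D_{-i})\,p(D_{-i})$. Summing over $D_{-i}$ and using the marginalization identity $\sum_{D_{-i}}p(\bth\mid D_{-i})p(D_{-i})=p(\bth)$ together with $\sum_{\bth}p(\bth\mid D_i')=1$ shows $\sum_{D_{-i}}b(D_{-i})=1$, so $b$ is a probability distribution over $D_{-i}$. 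Then
\begin{equation*}
V(D_i)-V(D_i')=\sum_{D_{-i}}p(D_{-i}\mid D_i)\log\frac{p(D_{-i}\mid D_i)}{b(D_{-i})}=D_{\mathrm{KL}}\!\big(p(\cdot\mid D_i)\,\big\|\,b\big)\ge 0,
\end{equation*}
with equality iff $b=p(\cdot\mid D_i)$, i.e. $PMI(D_i',D_{-i})=PMI(D_i,D_{-i})$ for every $D_{-i}$ with $p(D_{-i})>0$. This already yields part (1): truthfully reporting $D_i$, and more generally any report with the accurate posterior, maximizes the expected payment.

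For the strict part (2) I translate the equality condition into the rank hypothesis. Setting $v_{\bth}=\big(p(\bth\mid D_i)-p(\bth\mid D_i')\big)/p(\bth)$, the equality $PMI(D_i',D_{-i})=PMI(D_i,D_{-i})$ over all positive-probability realizations $D_{-i}$ (which are exactly the rows of $Q_{-i}$, whose $(D_{-i},\bth)$ entry is $p(\bth\mid D_{-i})$) is precisely $Q_{-i}\,v=0$. If $Q_{-i}$ has rank $|\Theta|$, i.e. full column rank, then $v=0$, forcing $p(\bth\mid D_i')=p(\bth\mid D_i)$. Hence whenever the report has an inaccurate posterior the divergence is strictly positive, $V(D_i')<V(D_i)$, so its expected payment is strictly below that of any accurate-posterior report, which is exactly sensitivity.

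The step I expect to be most delicate is the interface with the $PMI=0$ guard in Mechanism~\ref{alg:single}: the clean identity presumes $PMI(D_i',D_{-i})>0$ on the support of $p(D_{-i}\mid D_i)$, whereas a deviation can drive $PMI$ to zero on part of that support, where the mechanism floors the payment at $0$ instead of $-\infty$. I would handle this separately by observing that the truthful report satisfies $PMI(D_i,D_{-i})>0$ precisely on the support (since $PMI=p(D_i,D_{-i})/(p(D_i)p(D_{-i}))$), so the KL identity applies to it, and then arguing that any report hitting $PMI=0$ on a positive-probability set is flooring terms at $L\le\log PMI$ and thus cannot beat the truthful report. Establishing this domination cleanly, and confirming $R>L$ so the affine map is genuinely increasing, are the remaining routine checks.
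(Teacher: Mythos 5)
Your core argument is essentially the paper's own proof. The paper likewise reduces the payment to an increasing affine function of the expected log-PMI, observes that $\tilde p(D_{-i}\mid \tilde D_i)=\sum_{\bth}p(\bth\mid\tilde D_i)\,p(\bth,D_{-i})/p(\bth)$ --- exactly your $b(D_{-i})$ --- sums to one, applies Gibbs' inequality so that the deficit of any report equals $D_{\mathrm{KL}}\bigl(p(\cdot\mid D_i)\,\big\|\,b\bigr)$, and then converts the equality case into a linear system. The only substantive difference is the last step: you argue directly that $D_{\mathrm{KL}}=0$ forces $Q_{-i}v=0$ with $v_{\bth}=\bigl(p(\bth\mid D_i)-p(\bth\mid D_i')\bigr)/p(\bth)$, so full column rank kills $v$; the paper instead lower-bounds the KL by $\Vert P_{-i}\Lambda\Delta_i\Vert^2$, where $P_{-i}$ is the likelihood matrix with entries $p(D_{-i}\mid\bth)$, uses its least singular value $e_i$, and then shows $\mathrm{rank}(P_{-i})=\mathrm{rank}(Q_{-i})$ by diagonal scaling. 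Your kernel argument is cleaner for the qualitative statement; the paper's detour additionally buys the explicit $\alpha$-sensitivity constant $e_i B/(n(R-L))$ reported in the appendix.

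The genuine gap is the step you flagged, and your proposed patch runs in the wrong direction. Flooring the payment at $0$ when $PMI(D_i',D_{-i})=0$ does not hurt the deviator: it replaces a score of $\log 0=-\infty$ by an effective score of $L$, i.e.\ it \emph{helps} him, so ``$L\le\log PMI$'' yields domination backwards. Whether such a deviation can overtake truth depends on how far $L$ sits below the realized scores, and for a tight lower bound it can. Concretely, take $\Theta=\{1,2\}$ with uniform prior, one other provider whose two positive-probability realizations $A,B$ have posteriors $(1,0)$ and $(0,1)$ (so $Q_{-i}$ is the identity and the rank hypothesis holds for every agent), and let agent $i$'s true data have posterior $(\tfrac12,\tfrac12)$ while a feasible report $D_i'$ has posterior $(1,0)$. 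Then $PMI(D_i,\cdot)\equiv 1$, $PMI(D_i',A)=2$, $PMI(D_i',B)=0$, and $p(A\mid D_i)=p(B\mid D_i)=\tfrac12$; with the legitimate tight bounds $L=0$, $R=\log 2$, truth earns expected payment $0$ while reporting $D_i'$ earns $B/(2n)>0$. So with that choice of $L$ the claimed domination is false (indeed truthfulness itself fails), and the case can only be closed by imposing slack on $L$ --- a log-sum argument shows $L\le \min\log PMI-1$ suffices --- which any correct proof must state. For what it is worth, the paper's own proof shares this lacuna: its identity $Rev_i'=Rev_i\cdot\frac{n}{B}(R-L)+L$ silently presumes $PMI(\tilde D_i,D_{-i})>0$ on the support of $p(D_{-i}\mid D_i)$ and never treats PMI-zeroing deviations; but it does not assert the false flooring claim that your patch relies on.
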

Since the size of $Q_{-i}$ can be exponentially large, it may be computationally infeasible to check the rank of $Q_{-i}$. We thus give a simpler condition that only uses $G_i$, which has a  polynomial size. 
 \begin{definition}
 	The Kruskal rank (or $k$-rank) of a matrix $M$, denoted by $rank_k(M)$, is the maximal number $r$ such that any set of $r$ columns of $M$ is linearly independent.
 \end{definition}
\begin{corollary}\label{coro_sensitive}
	When $|\Theta|$ is finite, Mechanism~\ref{alg:single} is sensitive if for all $i$, $\sum_{j\neq i}\left(rank_k(G_{j})-1\right)\cdot N_{j}+1\ge |\Theta|$, where $N_j$ is the number of data points in $D_j$.
\end{corollary}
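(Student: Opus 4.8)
The plan is to deduce the corollary directly from Theorem~\ref{thm_sensitive} by showing that the stated Kruskal-rank condition forces $Q_{-i}$ to have full column rank $|\Theta|=m$. Write $\Theta=\{\bth_1,\dots,\bth_m\}$ and assume without loss of generality that the prior has full support. The first step is to pass from posterior matrices to likelihood matrices. For each provider $j$ let $P_j$ be the $|\mathcal{D}_j|\times m$ matrix with entry $p(d_j|\bth)$ in row $d_j$ and column $\bth$, and let $L_{-i}$ be the $(\prod_{j\neq i}|\mathcal{D}_j|^{N_j})\times m$ matrix with entry $p(D_{-i}|\bth)$. By Bayes' rule $p(\bth|D_{-i})=p(D_{-i}|\bth)\,p(\bth)/p(D_{-i})$, so $Q_{-i}=\mathrm{diag}(1/p(D_{-i}))\,L_{-i}\,\mathrm{diag}(p(\bth))$, a product of $L_{-i}$ with two diagonal matrices that are invertible on the support. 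Left- and right-multiplication by invertible diagonal matrices preserves both ordinary rank and Kruskal rank, so $\mathrm{rank}(Q_{-i})=\mathrm{rank}(L_{-i})$ and, by the identical argument applied to $G_j$ versus $P_j$, $rank_k(G_j)=rank_k(P_j)$ for every $j$.

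The second step identifies $L_{-i}$ as a column-wise Kronecker (Khatri--Rao) product. By the conditional-independence assumption and the i.i.d.\ structure within each dataset, $p(D_{-i}|\bth)=\prod_{j\neq i}\prod_{k=1}^{N_j}p(d_j^{(k)}|\bth)$. Indexing the rows of $L_{-i}$ by the tuple $(d_j^{(k)})_{j\neq i,\,k\in[N_j]}$, this says that column $\bth$ of $L_{-i}$ is the entrywise tensor product of the columns $\bth$ of a list of factor matrices, where the list consists of $N_j$ copies of $P_j$ for each $j\neq i$. Hence $L_{-i}=\bigodot_{j\neq i} P_j^{\odot N_j}$ is a Khatri--Rao product of $K=\sum_{j\neq i}N_j$ matrices, each with $m$ columns.

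The key tool is the classical bound on the Kruskal rank of a Khatri--Rao product: for matrices $A,B$ with $m$ columns, $rank_k(A\odot B)\ge \min\{\,rank_k(A)+rank_k(B)-1,\ m\,\}$. Applying this inductively over the $K$ factors of $L_{-i}$ gives $rank_k(L_{-i})\ge \min\{\,m,\ \sum_{j\neq i}N_j\,(rank_k(P_j)-1)+1\,\}$. Substituting $rank_k(P_j)=rank_k(G_j)$ and invoking the hypothesis $\sum_{j\neq i}(rank_k(G_j)-1)N_j+1\ge m$, the minimum equals $m$, so $rank_k(L_{-i})=m$. Since $L_{-i}$ has exactly $m$ columns, a Kruskal rank of $m$ means its (only) set of $m$ columns is linearly independent, i.e.\ $\mathrm{rank}(L_{-i})=m$, and therefore $\mathrm{rank}(Q_{-i})=|\Theta|$. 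Theorem~\ref{thm_sensitive} then yields that Mechanism~\ref{alg:single} is sensitive.

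I expect the main obstacle to be the Khatri--Rao Kruskal-rank lemma, which is the only nontrivial ingredient and must be either cited precisely or proved; its inductive application also requires the mild nondegeneracy $rank_k(P_j)\ge 1$, which holds because each column of $P_j$ is a nonzero probability vector. The remaining care points are the reindexing that recasts the product-form likelihood as a genuine Khatri--Rao product, and the observation that the diagonal rescalings preserve Kruskal rank (not merely ordinary rank); both are routine once stated.
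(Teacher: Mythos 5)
Your proof is correct and follows essentially the same route as the paper's: reduce to the full-rank condition of Theorem~\ref{thm_sensitive}, pass from the posterior matrix $Q_{-i}$ to the likelihood matrix via invertible diagonal scalings, recognize that likelihood matrix as a Khatri--Rao product of per-data-point likelihood matrices, and apply the Sidiropoulos--Bro Kruskal-rank bound inductively until the cap $|\Theta|$ is reached. If anything you are slightly more careful than the paper, which states the corollary in terms of $rank_k(G_j)$ (posterior entries) but silently switches to the likelihood version $G_j'$ in its proof; your explicit remark that diagonal rescaling preserves Kruskal rank, not just ordinary rank, closes that small gap.
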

In Appendix~\ref{app:single_alpha}, we also give a lower bound for $\alpha$ so that Mechanism~\ref{alg:single} is $\alpha$-sensitive.

Our sensitivity results (Theorem~\ref{thm_sensitive} and Corollary~\ref{coro_sensitive}) basically mean that when there is enough correlation between other people's data $D_{-i}$ and $\theta$, the mechanism will be sensitive.  Corollary~\ref{coro_sensitive} quantifies the correlation using the k-rank of the data generating matrix. It is arguably not difficult to have enough correlation:  a naive relaxation of Corollary~\ref{coro_sensitive} says that assuming different $\theta$ lead to different data distributions (so that $rank_k(G_j)\ge 2$), the mechanism will be sensitive if for any provider $i$, the total number of other people's data points $\ge |\Theta| -1$.

%
%

When $\Theta \subseteq \mathbb{R}^m$, it becomes more difficult to guarantee sensitivity. Suppose the data analyst uses a model from the exponential family so that the prior and all the posterior of $\bth$ can be written in the form in Lemma~\ref{lem:multi_comp}. The sensitivity of the mechanism will depend on the normalization term $g(\nu, \overline{\bm{\tau}})$ (or equivalently, the partition function) of the pdf. More specifically, define
	\begin{align} \label{eqn:h}
	h_{D_{-i}}(\nu_i, \overline{\bm{\tau}}_i)= \frac{g(\nu_i, \overline{\bm{\tau}}_i) }{g(\nu_i +\nu_{-i}-\nu_0, \frac{\nu_{i} \overline{\bm{\tau}}_{i} + \nu_{-i} \overline{\bm{\tau}}_{-i} - \nu_{0}\overline{\bm{\tau}}_{0}}{\nu_i +\nu_{-i}-\nu_0} ) },
	\end{align}
then we have the following sufficient and necessary conditions for the sensitivity of the mechanism. 
\begin{theorem} \label{thm:single_sensitive_cont}
When $\Theta \subseteq \mathbb{R}^m$, if the data analyst uses a model in the exponential family, then Mechanism~\ref{alg:single} is sensitive if and only if
 for any $(\nu_i', \overline{\bm{\tau}}_i') \neq (\nu_i, \overline{\bm{\tau}}_i)$, we have
	 $
	 \Pr_{D_{-i}} [h_{D_{-i}}(\nu_i', \overline{\bm{\tau}}_i') \neq h_{D_{-i}}(\nu_i, \overline{\bm{\tau}}_i)] > 0.
	 $
\end{theorem}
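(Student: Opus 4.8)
The plan is to collapse the entire statement into a single Kullback--Leibler divergence computation. First I would note that, because $g$ is a normalization constant, it is strictly positive wherever it is defined, so the formula in Lemma~\ref{lem:multi_comp} yields $PMI(D_i',D_{-i})>0$ for every report with a well-defined posterior. Hence on the support the domain restriction $\ds_i(\tilde D_{-i})$ and the clamp-to-zero branch of Mechanism~\ref{alg:single} never bind, and the realized payment is the strictly increasing affine function $\frac{B}{n(R-L)}\big(\log PMI -L\big)$ of $\log PMI$. Consequently the expected-payment ordering coincides with the ordering of $\E_{D_{-i}\sim p(D_{-i}|D_i)}[\log PMI(\cdot,D_{-i})]$, and since part~(1) of sensitivity is exactly the truthfulness already established in Theorem~\ref{thm_truthful}, everything reduces to the strictness in part~(2).

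The key identity I would prove is $PMI(D_i',D_{-i})=\frac{p(D_{-i}\mid D_i')}{p(D_{-i})}$. This follows from the conditional-independence assumption and Bayes' rule: writing $p(D_{-i}\mid D_i')=\int_\bth p(D_{-i}\mid\bth)\,p(\bth\mid D_i')\,d\bth$ and substituting $p(D_{-i}\mid\bth)=p(\bth\mid D_{-i})\,p(D_{-i})/p(\bth)$ reproduces exactly the integral of Definition~\ref{def:PMI} times $p(D_{-i})$. With this identity, for a misreport $D_i'$ (all others reporting accurate posteriors, so their contribution equals the true $p(\bth\mid D_j)$), I can compute
\begin{align*}
\E_{D_{-i}\sim p(D_{-i}|D_i)}[\log PMI(D_i,D_{-i})]-\E_{D_{-i}\sim p(D_{-i}|D_i)}[\log PMI(D_i',D_{-i})]
=\E_{D_{-i}\sim p(D_{-i}|D_i)}\Big[\log\tfrac{p(D_{-i}\mid D_i)}{p(D_{-i}\mid D_i')}\Big],
\end{align*}
which is precisely the KL divergence between $p(D_{-i}\mid D_i)$ and $p(D_{-i}\mid D_i')$. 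Gibbs' inequality then says this quantity is nonnegative and vanishes if and only if $p(D_{-i}\mid D_i)=p(D_{-i}\mid D_i')$ for almost every $D_{-i}$; this simultaneously re-derives truthfulness and isolates exactly when the gap is strict.

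Finally I would translate $p(D_{-i}\mid D_i)=p(D_{-i}\mid D_i')$ into the $h$-condition. Combining the same identity with Lemma~\ref{lem:multi_comp} gives $p(D_{-i}\mid D_i')=p(D_{-i})\,\frac{g(\nu_{-i},\overline{\bm{\tau}}_{-i})}{g(\nu_0,\overline{\bm{\tau}}_0)}\,h_{D_{-i}}(\nu_i',\overline{\bm{\tau}}_i')$, where the prefactor $p(D_{-i})\frac{g(\nu_{-i},\overline{\bm{\tau}}_{-i})}{g(\nu_0,\overline{\bm{\tau}}_0)}$ depends only on $D_{-i}$ and the prior, not on agent $i$'s report. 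Thus at each $D_{-i}$ in the support, $p(D_{-i}\mid D_i)=p(D_{-i}\mid D_i')$ holds if and only if $h_{D_{-i}}(\nu_i,\overline{\bm{\tau}}_i)=h_{D_{-i}}(\nu_i',\overline{\bm{\tau}}_i')$. Since the conjugate parameters $(\nu_i,\overline{\bm{\tau}}_i)$ are in one-to-one correspondence with the posterior $p(\bth\mid D_i)$, a report has an inaccurate posterior exactly when $(\nu_i',\overline{\bm{\tau}}_i')\neq(\nu_i,\overline{\bm{\tau}}_i)$. Chaining the three reductions, the expected-payment gap is strictly positive for every inaccurate report if and only if $\Pr_{D_{-i}}[h_{D_{-i}}(\nu_i',\overline{\bm{\tau}}_i')\neq h_{D_{-i}}(\nu_i,\overline{\bm{\tau}}_i)]>0$ for all $(\nu_i',\overline{\bm{\tau}}_i')\neq(\nu_i,\overline{\bm{\tau}}_i)$, which is the claim.

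The step I expect to be most delicate is the measure-theoretic bookkeeping around the base measure in $\Pr_{D_{-i}}$ and the ``almost every'' qualifier: the KL computation weights $D_{-i}$ by $p(D_{-i}\mid D_i)$, whereas the statement's $\Pr_{D_{-i}}$ must be read under a measure mutually absolutely continuous with it (the marginal $p(D_{-i})$ works, since $PMI>0$ forces $p(D_{-i}\mid D_i)$ and $p(D_{-i})$ to share the same support, so they agree on null sets). I would also verify once and for all that $L\le\log PMI\le R$ for all admissible posterior parameters, so that the affine payment stays in $[0,B/n]$ and the reduction to comparing $\log PMI$ is exact rather than merely monotone.
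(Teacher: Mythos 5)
Your proposal is correct and follows essentially the same route as the paper's own proof: both reduce the expected-payment gap (after unwinding the affine normalization of $\log PMI$) to the KL divergence $D_{KL}\bigl(p(D_{-i}\mid D_i)\,\Vert\, p(D_{-i}\mid D_i')\bigr)$, invoke its strict positivity unless the conditionals agree almost everywhere, translate pointwise agreement into $h_{D_{-i}}(\nu_i,\overline{\bm{\tau}}_i)=h_{D_{-i}}(\nu_i',\overline{\bm{\tau}}_i')$ via the closed form of $PMI$ in Lemma~\ref{lem:multi_comp}, and justify swapping $\Pr_{D_{-i}}$ for $\Pr_{D_{-i}\mid D_i}$ by mutual absolute continuity (the paper does this by conditioning on $\bth$ and using positivity of exponential-family densities, which is a slightly cleaner justification than your support argument, but the substance is identical).
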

The theorem basically means that the mechanism will be sensitive if any pairs of different reports that will lead to different posteriors of $\bth$ can be distinguished by $h_{D_{-i}}(\cdot)$ with non-zero probability.
 However, for different models in the exponential family, this  is not always true. 
 For example, if we estimate the mean $\mu$ of a univariate Gaussian with a known variance and the Gaussian conjugate prior is used, then the normalization term only depends on the variance but not the mean, so in this case $h(\cdot)$ can only detect the change in variance, which means that the mechanism will be sensitive to replication and withholding, but not necessarily other types of manipulations. But if we estimate the mean of a Bernoulli distribution whose conjugate prior is the Beta distribution, then the partition function will be the Beta function, which can detect different posteriors and thus the mechanism will be sensitive. See Appendix~\ref{app:multi_exp_sensitive} for more details.
The missing proofs can be found in Appendix~\ref{app:single_proofs}.

\section{Multiple-time Data Acquisition}\label{sec:multi}
Now we consider the case when the data analyst needs to repeatedly collect data for the same task. 
At day $t$, the analyst has a budget $B^{(t)}$ and  a new ensemble $(\bth^{(t)}, D_1^{(t)}, \dots, D_n^{(t)})$ is drawn from the same distribution $p(\bth, D_1, \dots, D_n)$, independent of the previous data. 
Again we assume that the data generating distribution $p(D_i|\bth)$ can be unknown, but the analyst is able to compute $p(\bth | D_i)$) after seeing the data(See Example~\ref{exm:linear_reg}).
The data analyst can use the one-time purchasing mechanism (Section~\ref{sec:single}) at each round. But we show that if the data analyst can give the payment one day after the data is reported, a broader class of mechanisms can be applied to guarantee the desirable properties, which ensures bounded payments without any assumptions on the underlying distribution. 
Our method is based on the \emph{$f$-mutual information gain}  in~\cite{kong2018water} for multi-task forecast elicitation. The payment function in \cite{kong2018water} has a minor error. We correct the payment function in this work.\footnote{The term $PMI(\cdot)$ in the payment function of \cite{kong2018water} should actually be $1/PMI(\cdot)$. This is because when \cite{kong2018water} cited Lemma 1 in \cite{nguyen2010estimating},  $q(\cdot)/p(\cdot)$ is mistakenly replaced by $p(\cdot)/q(\cdot)$.}


Our mechanism (Mechanism~\ref{alg:multi}) works as follows.
On day $t$, the data providers are first asked to report their data for day $t$. Then for each provider $i$, we use the other providers' reported data on day $t-1$ and day $t$ to evaluate provider $i$'s reported data on day $t-1$. 
 A score $s_i$ will be computed for each provider's  $\tilde{D}_i^{(t-1)}$. 
The score $s_i$ is defined in the same way as the \emph{$f$-mutual information gain} in~\cite{kong2018water}, which is specified by a differentiable convex function $f:\mathbb{R}\to\mathbb{R}$ and its convex conjugate $f^*$ (Definition~\ref{def:convex_conjugate}), 
\begin{align} \label{eqn:multi_score}
	s_i = f'\left(\frac{1}{PMI(\tilde{D}_i^{(t-1)}, \tilde{D}_{-i}^{(t)})}\right) - f^*\left( f'\left( \frac{1}{PMI(\tilde{D}_i^{(t-1)}, \tilde{D}_{-i}^{(t-1)}) }\right)\right). \text{\footnotemark}
\end{align}
\footnotetext{Here we assume that $PMI(\cdot)$ is non-zero. For $PMI(\cdot)=0$, we can just do the same as in the one-time acquisition mechanism.}
The score is defined in this particular way because it can guarantee truthfulness according to Lemma~\ref{lem11}. 
It can be proved that when the agents truthfully report $D_i$, the expectation of $s_i$ will reach the supremum in Lemma~\ref{lem11}, and will then be equal to the $f$-mutual information of $D_i^{(t-1)}$ and $D_{-i}^{(t-1)}$. We can further prove that if data provider $i$ reports a dataset $\tilde{D}_i^{(t-1)}$ that leads to a different posterior $p(\bth|\tilde{D}_i^{(t-1)})\neq p(\bth|D_i^{(t-1)})$, the expectation of $s_i$ will deviate from the supremum in Lemma~\ref{lem11} and thus get lower.

According to the definition~\eqref{eqn:multi_score}, if we carefully choose the convex function $f$ to be a differentiable convex function  with a bounded derivative $f' \in [0, U]$ and with the convex conjugate $f^*$ bounded on $[0,U]$,  then the scores $s_1, \dots, s_n$ will always be bounded.
We can then normalize $s_1, \dots, s_n$ so that the payments are non-negative and the total payment does not exceed $B^{(t-1)}$. Here we give one possible  choice of  $f'$ that can guarantee bounded scores:  the Logistic function $\frac{1}{1+e^{-x}}$. 

\begin{center}
\begin{tabular}{|c|c|c|c|c|}
	\hline
	$f(x)$ & $f'(x)$ & range of $f'(x)$  & $f^*(x)$ & range of $f^*(x)$  \\
	\hline
	$\ln(1 + e^x)$ & $\frac{1}{1+e^{-x}}$ & $[\frac{1}{2}, 1)$ on $\mathbb{R}^{\ge 0}$ & $x \ln x + (1-x) \ln (1-x)$ & $ [-\ln 2, 0]$ on $[\frac{1}{2},1)$\\ 
	\hline 
\end{tabular}
\end{center}

Finally, if day $t$ is the last day, we adopt the one-time mechanism to pay for day $t$'s data as well.
\begin{algorithm}[!h]               
\SetAlgorithmName{Mechanism}{mechanism}{List of Mechanisms}
\begin{algorithmic}
    \STATE Given a differentiable convex function $f$ with $f'\in[0, U]$ and $f^*$ bounded on $[0,U]$
    \FOR {$t = 1,\dots,T$}
    \STATE(1) On day $t$, ask all data providers to report their datasets $\tilde{D}_1^{(t)}, \dots, \tilde{D}_n^{(t)}$.
    \STATE(2) If $t$ is the last day $t=T$, use the payment rule of Mechanism~\ref{alg:single} to pay for day $T$'s data or just give each data provider $B^{(T)}/n$. 
    \STATE (3) If $t>1$, give the payments for day $t-1$ as follows. First compute all the scores $s_i$ as in~\eqref{eqn:multi_score}.
    Then normalize the scores so that the total payment is no more than $B^{(t-1)}$. 
    Let the range of the scores be $[L,R]$. Assign payments
\begin{center}
    $
    r_i(\tilde{D}_1^{(t-1)}, \dots, \tilde{D}_n^{(t-1)}) = \frac{B^{(t-1)}}{n}\cdot\frac{s_i - L}{R-L}.
    $
\end{center}
    \ENDFOR
\end{algorithmic}
\caption{Multi-time data collecting mechanism.}
\label{alg:multi}
\end{algorithm}


Our first result is that Mechanism~\ref{alg:multi} guarantees all the basic properties of a desirable mechanism.
\begin{theorem} \label{thm:multi_main}
	Given any differentiable convex function $f$ that has (1) a bounded derivative $f' \in [0, U]$ and (2) the convex conjugate $f^*$ bounded on $[0,U]$, Mechanism~\ref{alg:multi} is IR, budget feasible, truthful and symmetric in all $T$ rounds.
\end{theorem}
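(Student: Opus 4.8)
The plan is to dispatch symmetry, individual rationality, and budget feasibility quickly from the bounded normalization, and then spend the real effort on truthfulness, where the variational inequality of Lemma~\ref{lem11} does the work. Symmetry is immediate because the score~\eqref{eqn:multi_score} is built from $PMI(\tilde{D}_i^{(t-1)},\cdot)$ against the aggregate report $\tilde{D}_{-i}$, and $PMI$ is symmetric in its two arguments while $p(\bth\mid\tilde{D}_{-i})$ is invariant under permuting the other providers; since the normalization is identical across $i$, relabeling by $\pi$ relabels the scores the same way. For IR and budget feasibility the point is that the two hypotheses on $f$ force every $s_i$ into a \emph{fixed, data-independent} interval $[L,R]$: the first term of~\eqref{eqn:multi_score} lies in $[0,U]$, the argument of $f^*$ in the second term lies in $[0,U]$, and $f^*$ is bounded there. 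The affine map $s\mapsto\frac{B^{(t-1)}}{n}\cdot\frac{s-L}{R-L}$ then sends $[L,R]$ into $[0,B^{(t-1)}/n]$, giving $r_i\ge 0$ and $\sum_i r_i\le B^{(t-1)}$; the final day is paid by Mechanism~\ref{alg:single} (IR and budget feasibility being Theorem~\ref{thm_truthful}) or by the uniform split, for which both are trivial.

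Because $L,R$ are fixed a priori, provider $i$'s day-$(t-1)$ payment is a \emph{monotone} affine function of $s_i^{(t-1)}$, so maximizing expected payment is the same as maximizing expected score. Inspecting~\eqref{eqn:multi_score} across days shows that provider $i$'s own report $\tilde{D}_i^{(t-1)}$ enters \emph{only} the day-$(t-1)$ score — the day-$(t-2)$ score uses the others' day-$(t-1)$ reports $\tilde{D}_{-i}^{(t-1)}$, not $i$'s own. Hence the reports can be optimized one day at a time, and it suffices to show that, with the others truthful, truthfully reporting $D_i^{(t-1)}$ maximizes the conditional expected score on that day.

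Fix $D_i^{(t-1)}=D$ and condition on it. Since day $t$ is independent of day $t-1$, the others' day-$t$ data $D_{-i}^{(t)}$ follows the marginal $p(D_{-i})$ while their day-$(t-1)$ data follows the posterior $p(D_{-i}\mid D)$. Writing $g_{\tilde D}(\cdot)=f'\!\big(1/PMI(\tilde D,\cdot)\big)$ for a report $\tilde D$, the conditional expected score is
\[
\E\big[s_i\mid D_i^{(t-1)}=D\big]=\E_{D_{-i}\sim p(D_{-i})}\big[g_{\tilde D}(D_{-i})\big]-\E_{D_{-i}\sim p(D_{-i}\mid D)}\big[f^*\!\big(g_{\tilde D}(D_{-i})\big)\big],
\]
which is exactly the quantity bounded by Lemma~\ref{lem11} with $p=p(D_{-i}\mid D)$ and $q=p(D_{-i})$; thus it is at most $D_f\big(p(D_{-i}\mid D),\,p(D_{-i})\big)$ for \emph{every} report. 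To locate the maximizer I would use the identity $p(D_i,D_{-i})=p(D_i)p(D_{-i})\,PMI(D_i,D_{-i})$, which follows from Bayes' rule and the conditional-independence assumption plugged into Definition~\ref{def:PMI}; it gives $q/p=p(D_{-i})/p(D_{-i}\mid D)=1/PMI(D,\cdot)$. By the equality clause of Lemma~\ref{lem11}, the supremum is attained iff $g_{\tilde D}=f'(q/p)=f'(1/PMI(D,\cdot))=g_D$, i.e. iff the report induces the same posterior as $D$ — in particular when $\tilde D=D$. So truthful reporting attains $D_f\big(p(D_{-i}\mid D),p(D_{-i})\big)$ and every other report attains at most this, which is the claimed weak optimality.

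The step I expect to be the main obstacle is this equality analysis: one must check that the \emph{specific} $g_{\tilde D}$ inserted into the score — defined through the posteriors $p(\bth\mid\tilde D)$ rather than through the report's true distribution — coincides with the Lemma~\ref{lem11} maximizer exactly at posterior-preserving (hence truthful) reports. This is where the PMI-as-likelihood-ratio identity and the conditional-independence assumption are essential, and it is also the hinge to the sensitivity results, since strict convexity of $f$ would turn the inequality strict whenever $p(\bth\mid\tilde D)\neq p(\bth\mid D)$. A minor technicality to clear first is that Lemma~\ref{lem11} assumes $f(1)=0$; replacing $f$ by $f-f(1)$ leaves $f'$ unchanged and shifts $f^*$, hence shifts every $s_i$ by a common constant that the normalization absorbs, so the hypothesis may be taken without loss.
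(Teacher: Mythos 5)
Your proposal is correct and follows essentially the same route as the paper's proof: bound the scores via $f'\in[0,U]$ and the boundedness of $f^*$ to get IR, budget feasibility and symmetry, then apply Lemma~\ref{lem11} with $p = p(D_{-i}\mid D_i)$, $q = p(D_{-i})$ together with the identity $PMI(D_i,D_{-i}) = p(D_{-i}\mid D_i)/p(D_{-i})$ (Lemma~\ref{lem12}) to show that the truthful report attains the supremum $D_f\bigl(p(D_{-i}\mid D_i),\,p(D_{-i})\bigr)$ while any report attains at most it. Your two additions — the explicit day-by-day decomposition showing $\tilde{D}_i^{(t-1)}$ enters only provider $i$'s day-$(t-1)$ score, and the reduction to $f(1)=0$ by replacing $f$ with $f-f(1)$ (which the paper silently skips even though its example $f(x)=\ln(1+e^x)$ has $f(1)\neq 0$) — are both correct and tighten the argument; only your parenthetical ``iff the report induces the same posterior'' is slightly too strong for merely convex $f$, but it is not used for the truthfulness conclusion.
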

If we choose computable $f'$ and $f^*$ (e.g. $f'$ equal to the Logistic function), the payments will also be computable for finite-size $\Theta$ and for models in exponential family (Lemma~\ref{lem:multi_comp}).
If we use a strictly convex function $f$ with $f'>0$, then Mechanism~\ref{alg:multi} has basically the same sensitivity guarantee as Mechanism~\ref{alg:single} in the first $T-1$ rounds. We defer the sensitivity analysis to Appendix~\ref{app:multi_SA}. 
The missing proofs in this section can be found in Appendix~\ref{app:multi_proofs}.

\section{Discussion}

Our work leaves some immediate open questions. Our one-time data acquisition mechanism requires a lower bound and an upper bound of PMI, which may be difficult to find for some models in the exponential family. Can we find a mechanism that would work for any data distribution, just as our multi-time data acquisition mechanism? Another interesting direction is to design stronger mechanisms to strengthen the sensitivity guarantees.  Finally, our method incentivizes truthful reporting, but it is not guaranteed that datasets that give more accurate posteriors will receive higher payments (in expectation). It would be desirable if the mechanism could have this property as well.

An observation of our mechanism also raises an important issue of adversarial attacks for data acquisition. Our mechanism guarantees that the data holders who have some data will truthfully report the data at the equilibrium. But an adversary without any real data can submit some fake data and get a positive payoff. The situation is even worse if the adversary can create multiple fake accounts to submit data. This is tied to the individual rationality guarantee that we placed on our mechanism, which requires the payments to always be non-negative and which is generally considered an important property for incentivizing desirable behavior of strategic agents. However, this observation suggests that future work needs to carefully explore the tradeoff between guarantees for strategic agents and guarantees for adversarial agents for data acquisition problems.  


\newpage

\section*{Broader Impact}
The results in this work are mainly theoretical. They contribute to the ongoing efforts on encouraging data sharing, ensuring data quality and distributing values of generated from data back to data contributors. 

\begin{ack}
This work is supported by the National Science Foundation under grants CCF-1718549 and IIS-2007887.
\end{ack}

\bibliography{ref,sample,bibliography,library}
\bibliographystyle{plain}

\newpage
\appendix
\section{Mathematical Background} \label{app:math}
Our mechanisms are built with some important mathematical tools. First, in probability theory, an $f$-divergence is a function that measures the difference between two probability distributions.
\begin{definition}[$f$-divergence]
Given a convex function $f$ with $f(1) = 0$, for two distributions over $\Omega$, $p,q \in \Delta \Omega$, define the $f$-divergence of $p$ and $q$ to be
$$
D_f(p,q) = \int_{\omega \in \Omega} p(\omega) f\left( \frac{q(\omega) }{ p(\omega) } \right).
$$
\end{definition}
In duality theory, the convex conjugate of a function is defined as follows.
\begin{definition}[Convex conjugate]
For any function $f: \mathbb{R} \to \mathbb{R}$, define the convex conjugate function of $f$ as 
$$
f^*(y) = \sup_x  xy - f(x). 
$$
\end{definition}
Then the following inequality (\cite{nguyen2010estimating, kong2018water}) holds.
\begin{lemma}[Lemma 1 in \cite{nguyen2010estimating}] 
For any differentiable convex function $f$ with $f(1) = 0$, any two distributions over $\Omega$, $p,q \in \Delta \Omega$, let $\mathcal{G}$ be the set of all functions from $\Omega$ to $\mathbb{R}$, then we have 
$$
D_f(p,q) \ge \sup_{g \in \mathcal{G}} \int_{\omega \in \Omega}  g(\omega) q(\omega) - f^*(g(\omega)) p(\omega)\, d\omega = \sup_{g \in \mathcal{G}} \E_q g - \E_p f^*(g).
$$
A function $g$ achieves equality if and only if  
$
g(\omega) \in \partial f\left(\frac{q(\omega)}{p(\omega)}\right) \forall \omega
$
with $p(\omega)>0$,
where $\partial f\big(\frac{q(\omega)}{p(\omega)}\big)$ represents the subdifferential of $f$ at point $q(\omega)/p(\omega)$.
\end{lemma}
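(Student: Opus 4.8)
The plan is to reduce the entire statement to the pointwise Fenchel--Young inequality and then integrate. Recall that for a convex function $f$ with conjugate $f^*(y) = \sup_x [xy - f(x)]$, the definition of $f^*$ immediately yields the Fenchel--Young inequality $f(x) + f^*(y) \ge xy$, i.e. $f(x) \ge xy - f^*(y)$ for all $x,y \in \mathbb{R}$; moreover, equality holds exactly when $y$ lies in the subdifferential $\partial f(x)$ (for differentiable $f$ this means $y = f'(x)$). This single inequality, applied at every point $\omega$, is the engine behind both the bound and its tightness.

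Concretely, I would fix an arbitrary $g \in \mathcal{G}$ and apply Fenchel--Young at each $\omega$ with $x = q(\omega)/p(\omega)$ and $y = g(\omega)$, obtaining
$$
f\!\left(\frac{q(\omega)}{p(\omega)}\right) \ge g(\omega)\frac{q(\omega)}{p(\omega)} - f^*(g(\omega)).
$$
Multiplying through by $p(\omega) \ge 0$ (which preserves the inequality) turns the left-hand side into the integrand of $D_f(p,q)$ and the right-hand side into $g(\omega)q(\omega) - f^*(g(\omega))p(\omega)$. Integrating over $\Omega$ then yields
$$
D_f(p,q) \ge \int_{\omega \in \Omega} g(\omega) q(\omega) - f^*(g(\omega)) p(\omega)\, d\omega = \E_q g - \E_p f^*(g),
$$
and since $g$ was arbitrary, taking the supremum over $\mathcal{G}$ gives the claimed lower bound.

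For the equality characterization I would exploit that $\mathcal{G}$ is the set of \emph{all} functions $\Omega \to \mathbb{R}$, so the dual variable $g(\omega)$ may be chosen independently at each point. Choosing $g(\omega) \in \partial f(q(\omega)/p(\omega))$ --- that is, $g = f'\circ(q/p)$ in the differentiable case --- makes Fenchel--Young an equality at every $\omega$ with $p(\omega)>0$, hence the integral inequality becomes an equality and the supremum is attained; this is exactly the stated optimality condition. Conversely, at any $\omega$ where $g(\omega)\notin \partial f(q(\omega)/p(\omega))$ the pointwise inequality is strict, and because $p(\omega)>0$ this strictness survives integration, so such a $g$ cannot attain the supremum.

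The argument is almost entirely bookkeeping around one convexity inequality, so the only real subtleties are technical: I would need the pointwise selection $\omega \mapsto f'(q(\omega)/p(\omega))$ to be measurable (automatic here, since $f'$ is continuous for a differentiable convex $f$ and $q/p$ is measurable) and the resulting integral to be well-defined, and I would handle the set $\{p = 0\}$ separately --- consistent with the lemma restricting the equality condition to $p(\omega)>0$, since such points contribute nothing to either side. The hard part, such as it is, lies purely in justifying that the supremum is genuinely achieved (and that integrability is not violated), rather than in any nontrivial inequality.
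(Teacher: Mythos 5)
The paper never actually proves this lemma: it is imported by citation from \cite{nguyen2010estimating} (and restated in Appendix~A, again without proof), so there is no internal proof to compare against. The right benchmark is the cited source, and your argument is precisely its proof --- the pointwise Fenchel--Young inequality $f(x) \ge xy - f^*(y)$ evaluated at $x = q(\omega)/p(\omega)$, $y = g(\omega)$, multiplied by $p(\omega)$, integrated, and combined with the Fenchel--Young equality case $y \in \partial f(x)$ to characterize attainment. The main line of the reduction is correct and complete.

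Two of your side remarks, however, assert more than is true. First, in the converse direction, strictness of the pointwise inequality at a single $\omega$ with $p(\omega) > 0$ does not ``survive integration'' when $\Omega$ is continuous; you need the set $\{\omega : p(\omega) > 0,\ g(\omega) \notin \partial f(q(\omega)/p(\omega))\}$ to have positive measure, so the equality characterization is genuinely an almost-everywhere statement (an imprecision the quoted lemma itself shares). Second, and more substantively, the claim that points with $p(\omega) = 0$ ``contribute nothing to either side'' is false unless $q \ll p$: where $p(\omega) = 0$ but $q(\omega) > 0$, the left side requires the recession convention $0 \cdot f(q/0) := q \lim_{x\to\infty} f(x)/x$, while the dual integrand equals $g(\omega) q(\omega)$ minus the ambiguous product $f^*(g(\omega)) \cdot 0$, which is controlled only if $g(\omega)$ is restricted to the effective domain of $f^*$. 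Note that your main-line derivation cannot cover these points either, since $q(\omega)/p(\omega)$ is undefined there, so the dismissal is doing real work that it cannot support. This edge case is not vacuous for the paper: in its application of the lemma, $p$ is the conditional law $p(D_{-i} \mid D_i)$ and $q$ is the marginal $p(D_{-i})$, and any $D_{-i}$ that is marginally possible but impossible given $D_i$ lands exactly in $\{p = 0,\ q > 0\}$ --- which is why the paper states the equality condition only on $\{p(D_{-i} \mid D_i) > 0\}$, insists on bounded $f'$ with $f^*$ bounded on its range, and adds a separate rule for $PMI(\cdot) = 0$. Neither point undermines your core argument, but the $\{p = 0\}$ set needs the convention spelled out rather than waved away.
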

The $f$-mutual information of two random variables is a measure of  the mutual dependence of two random variables, which is defined as the $f$-divergence between their joint distribution and the product of their marginal distributions.
\begin{definition}[Kronecker product]
Consider two matrices $\bm{A}\in \mathbb{R}^{m\times n}$ and $\bm{B}\in \mathbb{R}^{p\times q}$. The Kronecker product of $\bm{A}$ and $\bm{b}$, denoted as $\bm{A}\otimes \bm{B}$, is defined as the following $pm\times qn$ matrix:
\begin{equation*}
	\bm{A}\otimes \bm{B} =
	\begin{bmatrix}
		a_{11}\bm{B} &\cdots &a_{1n} \bm{B}\\
		\vdots&\ddots&\vdots\\
		a_{m1}\bm{B} &\cdots &a_{mn} \bm{B}
	\end{bmatrix}.
\end{equation*}
\end{definition}
\begin{definition}[$f$-mutual information and pointwise MI]
Let $(X,Y)$ be a pair of random variables with values over the space $\mathcal{X} \times \mathcal{Y}$. If their joint distribution is $p_{X,Y}$ and marginal distributions are $p_X$ and $p_Y$, then given a convex function $f$ with $f(1)=0$, the $f$-mutual information between $X$ and $Y$ is 
$$
I_f(X;Y) = D_f(p_{X,Y}, p_X \otimes p_Y) = \int_{x\in \mathcal{X},y\in \mathcal{Y}} p_{X,Y}(x,y) f\left( \frac{p_X(x)\cdot p_Y(y)}{p_{X,Y}(x,y)} \right).
$$
We define function $K(x,y)$ as the reciprocal of the ratio inside $f$, 
$$
K(x,y) = \frac{p_{X,Y}(x,y)}{p_X(x)\cdot p_Y(y)}.
$$
\end{definition}
If two random variables are independent conditioning on another random variable, we have the following formula for the function $K$.
\begin{lemma} \label{lem12}
When random variables $X, Y$ are independent conditioning on $\bth$, for any pair of $(x,y) \in \mathcal{X} \times \mathcal{Y}$, we have 
$$
K(x,y) = \sum_{\bth \in \Theta} \frac{p(\bth|x)p(\bth|y)}{p(\bth)}
$$
if $|\Theta|$ is finite, and 
$$
K(x,y) = \int_{\bth \in \Theta} \frac{p(\bth|x)p(\bth|y)}{p(\bth)} \,d \bth
$$
if $\Theta\subseteq \mathbb{R}^m$.
\end{lemma}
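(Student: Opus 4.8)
The plan is to prove the identity by a direct computation that starts from the definition $K(x,y) = p_{X,Y}(x,y)/(p_X(x)p_Y(y))$ and rewrites the right-hand side entirely in terms of the likelihoods $p(x|\bth)$, $p(y|\bth)$ and the prior $p(\bth)$, invoking the conditional-independence hypothesis at the critical moment. First I would handle the finite case. The key algebraic step is to apply Bayes' rule to each posterior appearing in the summand: $p(\bth|x) = p(x|\bth)p(\bth)/p(x)$ and $p(\bth|y) = p(y|\bth)p(\bth)/p(y)$. Substituting these into $\frac{p(\bth|x)p(\bth|y)}{p(\bth)}$ cancels one factor of $p(\bth)$ in the denominator, leaving the term $\frac{p(x|\bth)p(y|\bth)p(\bth)}{p(x)p(y)}$.

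Next I would pull the marginals $p(x)$ and $p(y)$ out of the sum, since they do not depend on $\bth$, obtaining $\frac{1}{p(x)p(y)}\sum_{\bth\in\Theta} p(x|\bth)p(y|\bth)p(\bth)$. At this point the conditional-independence assumption enters: because $X$ and $Y$ are independent given $\bth$, we have $p(x|\bth)p(y|\bth) = p(x,y|\bth)$, so the sum becomes $\sum_{\bth\in\Theta} p(x,y|\bth)p(\bth)$, which by the law of total probability equals the joint marginal $p_{X,Y}(x,y)$. Dividing by $p_X(x)p_Y(y)$ reproduces exactly the definition of $K(x,y)$, completing the finite case. For the continuous case $\Theta\subseteq\mathbb{R}^m$, I would run the identical chain of equalities with $\sum_{\bth\in\Theta}$ replaced by $\int_{\bth\in\Theta}\,d\bth$; each step — Bayes' rule, cancellation of a prior factor, the use of conditional independence, and marginalization — carries over verbatim.

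The computation is routine, so the only care needed is bookkeeping about where the prior can vanish: the manipulation implicitly requires $p(\bth)>0$ so that dividing by $p(\bth)$ is legitimate, and one should observe that wherever $p(\bth)=0$ the corresponding term contributes nothing to either side. I expect this support/positivity bookkeeping (and, in the continuous case, a harmless appeal to Fubini to justify the interchange of integration with marginalization) to be the only subtlety; there is no genuine obstacle beyond it, since the heart of the argument is simply Bayes' rule combined with the defining factorization $p(x,y|\bth)=p(x|\bth)p(y|\bth)$.
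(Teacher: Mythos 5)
Your proposal is correct and is essentially the paper's own proof read in reverse: the paper starts from $K(x,y)=p(x,y)/(p(x)p(y))$, expands $p(x,y)$ via conditional independence and total probability, and finishes with Bayes' rule, while you start from the posterior-ratio expression and apply the same three identities in the opposite order. The extra remarks on positivity of $p(\bth)$ and Fubini are harmless bookkeeping that the paper omits.
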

\begin{proof}
We only prove the second equation for $\Theta\subseteq \mathbb{R}^m$ as the proof for finite $\Theta$ is totally similar.
	\begin{align*}
		K(x,y) & = \frac{p(x,y)}{p(x)\cdot p(y)}\\
		& = \frac{\int_{\bth\in\Theta}p(x|\bth)p(y|\bth)p(\bth)\, d\bth}{p(x)\cdot p(y)}\\
		& = \int_{\bth\in\Theta}\frac{p(\bth|x)p(\bth|y)}{p(\bth)} \, d\bth,
	\end{align*}
	where the last equation uses Bayes' Law.
\end{proof}

\begin{definition}[Exponential family~\cite{murphy2012machine}]
A probability density function or probability mass function $p(\x|\bth)$, for $\x = (x_1, \dots, x_n) \in \mathcal{X}^n$ and $\bth \in \Theta \subseteq \mathbb{R}^m$ is said to be in the \emph{exponential family} in canonical form if it is of the form 
\begin{equation} \label{eqn:exp_fam_prob}
p(\x|\bth) = h(\x) \exp \left[\bth^T \bphi(\x) - A(\bth) \right]
\end{equation}
where $A(\bth) = \log \int_{\mathcal{X}^m} h(\x) \exp\left[\bth^T \bphi(\x) \right]$.
The conjugate prior with parameters $\nu_0, \overline{\bm{\tau}}_0$ for $\bth$ has the form 
\begin{equation} \label{eqn:exp_fam_prior}
p(\bth) = \mathcal{P}(\bth| \nu_0, \overline{\bm{\tau}}_0) = g(\nu_0, \overline{\bm{\tau}}_0) \exp\left[ \nu_0 \bth^T \overline{\bm{\tau}}_0 - \nu_0 A(\bth ) \right].
\end{equation}
Let $\overline{\bm{s}} = \frac{1}{n} \sum_{i=1}^n \bm{\phi}(x_i) $. Then the posterior of $\bth$ is of the form
\begin{align*}
p(\bth|\x) & \propto \exp \left[ \bth^T (\nu_0\overline{\bm{\tau}}_0 + n \overline{\bm{s}}) - (\nu_0 + n) A(\bth) \right]\\
& = \mathcal{P}\big(\bth| \nu_0 + n, \frac{\nu_0\overline{\bm{\tau}}_0 + n \overline{\bm{s}}}{\nu_0 + n}\big),
\end{align*}
where $\mathcal{P}\big(\bth| \nu_0 + n, \frac{\nu_0\overline{\bm{\tau}}_0 + n \overline{\bm{s}}}{\nu_0 + n}\big)$ is the conjugate prior with parameters $\nu_0 + n$ and $\frac{\nu_0\overline{\bm{\tau}}_0 + n \overline{\bm{s}}}{\nu_0 + n}$.
\end{definition}

\begin{lemma} \label{lem:exp_int}
Let $\bth$ be the parameters of a pdf in the exponential family. Let $\mathcal{P}(\bth|\nu, \overline{\bm{\tau}}) = g(\nu, \overline{\bm{\tau}}) \exp\left[ \nu \bth^T \overline{\bm{\tau}} - \nu A(\bth ) \right]$ denote the conjugate prior for $\bth$ with parameters $\nu, \overline{\bm{\tau}}$. For any three distributions of $\bth$,
\begin{eqnarray*}
	p_1(\bth) = \mathcal{P}(\bth|\nu_1, \overline{\bm{\tau}}_1),\\
	p_2(\bth) = \mathcal{P}(\bth|\nu_2, \overline{\bm{\tau}}_2),\\
	p_0(\bth) = \mathcal{P}(\bth|\nu_0, \overline{\bm{\tau}}_0),
\end{eqnarray*}
we have  
$$
\int_{\bth \in \Theta} \frac{p_1(\bth) p_2(\bth)}{p_0(\bth)} \, d\bth = \frac{g(\nu_1, \overline{\bm{\tau}}_1) g(\nu_{2}, \overline{\bm{\tau}}_{2})}{g(\nu_0, \overline{\bm{\tau}}_0) g(\nu_1 +\nu_{2}-\nu_0, \frac{\nu_{1} \overline{\bm{\tau}}_{1} + \nu_{2} \overline{\bm{\tau}}_{2} - \nu_{0}\overline{\bm{\tau}}_{0}}{\nu_1 +\nu_{2}-\nu_0} )}.
$$
\end{lemma}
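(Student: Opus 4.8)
The plan is to prove this by direct substitution of the explicit conjugate-prior forms, followed by recognizing the remaining integrand as the (unnormalized) density of another member of the same conjugate family. First I would write out the integrand using the definition of $\mathcal{P}$, obtaining
\begin{align*}
\frac{p_1(\bth) p_2(\bth)}{p_0(\bth)} = \frac{g(\nu_1, \overline{\bm{\tau}}_1) g(\nu_2, \overline{\bm{\tau}}_2)}{g(\nu_0, \overline{\bm{\tau}}_0)} \exp\left[ \bth^T(\nu_1 \overline{\bm{\tau}}_1 + \nu_2 \overline{\bm{\tau}}_2 - \nu_0 \overline{\bm{\tau}}_0) - (\nu_1 + \nu_2 - \nu_0) A(\bth) \right].
\end{align*}
The point is that all three densities share the same log-partition term $A(\bth)$ and the same linear-in-$\bth$ structure, so the $A(\bth)$ contributions combine additively (with the $-\nu_0 A(\bth)$ from the denominator), the $\bth^T\overline{\bm{\tau}}$ contributions combine additively as well, and the three scalar prefactors $g(\cdot)$ pull out of the integral, leaving an expression of exactly the conjugate-prior shape.

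The key step is then to set $\nu' = \nu_1 + \nu_2 - \nu_0$ and $\overline{\bm{\tau}}' = (\nu_1 \overline{\bm{\tau}}_1 + \nu_2 \overline{\bm{\tau}}_2 - \nu_0 \overline{\bm{\tau}}_0)/\nu'$, so that $\nu' \overline{\bm{\tau}}' = \nu_1 \overline{\bm{\tau}}_1 + \nu_2 \overline{\bm{\tau}}_2 - \nu_0 \overline{\bm{\tau}}_0$ and the exponential above equals $\exp[\nu' \bth^T \overline{\bm{\tau}}' - \nu' A(\bth)]$. By the definition of $g$ as the reciprocal normalizing constant of $\mathcal{P}$, this exponential equals $\mathcal{P}(\bth|\nu', \overline{\bm{\tau}}')/g(\nu', \overline{\bm{\tau}}')$. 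Integrating over $\Theta$ and using that $\mathcal{P}(\bth|\nu', \overline{\bm{\tau}}')$ is a probability density, hence integrates to $1$, yields
\begin{align*}
\int_{\bth \in \Theta} \frac{p_1(\bth) p_2(\bth)}{p_0(\bth)} \, d\bth = \frac{g(\nu_1, \overline{\bm{\tau}}_1) g(\nu_2, \overline{\bm{\tau}}_2)}{g(\nu_0, \overline{\bm{\tau}}_0) \, g(\nu', \overline{\bm{\tau}}')},
\end{align*}
which is the claimed identity once $\nu'$ and $\overline{\bm{\tau}}'$ are written out in full.

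The only real subtlety, and thus the main obstacle, is the well-definedness of $g(\nu', \overline{\bm{\tau}}')$: the manipulation implicitly assumes that the combined parameters $(\nu_1 + \nu_2 - \nu_0, \ldots)$ land in the set for which the conjugate prior is a proper, integrable distribution. This requires $\nu' > 0$ and $\overline{\bm{\tau}}'$ in the admissible range so that $\int_\Theta \exp[\nu' \bth^T \overline{\bm{\tau}}' - \nu' A(\bth)] \, d\bth$ converges, equivalently that the left-hand integral is finite to begin with. I would flag this as a mild regularity hypothesis, noting that it holds automatically in all of our applications: there $p_1, p_2$ are posteriors obtained from the prior $p_0$ by conditioning on data, so $\nu_1 = \nu_0 + N_1$ and $\nu_2 = \nu_0 + N_2$ give $\nu' = \nu_0 + N_1 + N_2 > 0$. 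Granting this, the normalization $\int_\Theta \mathcal{P}(\bth|\nu', \overline{\bm{\tau}}') \, d\bth = 1$ is precisely the identity defining $g(\nu', \overline{\bm{\tau}}')$, and the computation closes with no further work.
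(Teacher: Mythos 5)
Your proof is correct and follows essentially the same route as the paper's: substitute the conjugate-prior forms, combine the exponents, pull out the $g$ factors, and recognize the remaining integrand as the unnormalized density $\mathcal{P}(\bth|\nu_1+\nu_2-\nu_0, \cdot)$, which integrates to $1$. Your explicit flagging of the well-definedness of $g(\nu', \overline{\bm{\tau}}')$ (i.e., that the combined parameters remain in the admissible range) is a point the paper's proof passes over silently, and it is a worthwhile caveat, but it does not change the argument.
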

\begin{proof}
To compute the integral, we first write $p_1(\bth)$, $p_2(\bth)$ and $p_3(\bth)$ in full,
\begin{eqnarray*}
	p_1(\bth) = \mathcal{P}(\bth|\nu_1, \overline{\bm{\tau}}_1) = g(\nu_1, \overline{\bm{\tau}}_1) \exp\left[ \nu_1 \bth^T \overline{\bm{\tau}}_1 - \nu_1 A(\bth ) \right],\\
	p_2(\bth) = \mathcal{P}(\bth|\nu_2, \overline{\bm{\tau}}_2) = g(\nu_2, \overline{\bm{\tau}}_2) \exp\left[ \nu_2 \bth^T \overline{\bm{\tau}}_2 - \nu_2 A(\bth ) \right],\\
	p_0(\bth) = \mathcal{P}(\bth|\nu_0, \overline{\bm{\tau}}_0) = g(\nu_0, \overline{\bm{\tau}}_0) \exp\left[ \nu_0 \bth^T \overline{\bm{\tau}}_0 - \nu_0 A(\bth ) \right].
\end{eqnarray*}
Then we have the integral equal to 
\begin{align*}
&\int_{\bth \in \Theta} \frac{p_1(\bth) p_2(\bth)}{p_0(\bth)} \, d\bth  \\
=&  \int_{\bth \in \Theta} \frac{g(\nu_1, \overline{\bm{\tau}}_1) \exp\left[ \nu_1 \bth^T \overline{\bm{\tau}}_1 - \nu_1 A(\bth ) \right] g(\nu_2, \overline{\bm{\tau}}_2) \exp\left[ \nu_2 \bth^T \overline{\bm{\tau}}_2 - \nu_2 A(\bth ) \right]}{g(\nu_0, \overline{\bm{\tau}}_0) \exp\left[ \nu_0 \bth^T \overline{\bm{\tau}}_0 - \nu_0 A(\bth ) \right]}\, d\bth	\\
= & \frac{g(\nu_1, \overline{\bm{\tau}}_1) g(\nu_{2}, \overline{\bm{\tau}}_{2})}{g(\nu_0, \overline{\bm{\tau}}_0)} \int_{\bth \in \Theta} \exp \left[ \bth^T (\nu_{1} \overline{\bm{\tau}}_{1} + \nu_{2} \overline{\bm{\tau}}_{2} - \nu_{0}\overline{\bm{\tau}}_{0}) - A(\bth)(\nu_1 +\nu_{2}-\nu_0) \right] \, d\bth \\
= & \frac{g(\nu_1, \overline{\bm{\tau}}_1) g(\nu_{2}, \overline{\bm{\tau}}_{2})}{g(\nu_0, \overline{\bm{\tau}}_0) }\cdot \frac{1}{g(\nu_1 +\nu_{2}-\nu_0, \frac{\nu_{1} \overline{\bm{\tau}}_{1} + \nu_{2} \overline{\bm{\tau}}_{2} - \nu_{0}\overline{\bm{\tau}}_{0}}{\nu_1 +\nu_{2}-\nu_0} )}.
\end{align*}
The last equality is because 
$$
g\left(\nu_1 +\nu_{2}-\nu_0, \frac{\nu_{1} \overline{\bm{\tau}}_{1} + \nu_{2} \overline{\bm{\tau}}_{2} - \nu_{0}\overline{\bm{\tau}}_{0}}{\nu_1 +\nu_{2}-\nu_0} \right)\exp \left[ \bth^T (\nu_{1} \overline{\bm{\tau}}_{1} + \nu_{2} \overline{\bm{\tau}}_{2} - \nu_{0}\overline{\bm{\tau}}_{0}) - A(\bth)(\nu_1 +\nu_{2}-\nu_0) \right]
$$
is the pdf $$p\left(\bth|\nu_1 +\nu_{2}-\nu_0,\frac{\nu_{1} \overline{\bm{\tau}}_{1} + \nu_{2} \overline{\bm{\tau}}_{2} - \nu_{0}\overline{\bm{\tau}}_{0}}{\nu_1 +\nu_{2}-\nu_0}\right)$$ and thus has the integral over $\bth$ equal to $1$.
\end{proof}

\section{Missing proof for Lemma~\ref{lem:sens_def}} \label{app:sens_def}
	
\begin{lemma}[Lemma~\ref{lem:sens_def}]
	When $D_1, \dots, D_n$ are independent conditioned on $\bth$, for any $(D_1,\dots,D_n)$ and $(\tilde{D}_1, \dots, \tilde{D}_n)$, if $ p(\bth|D_i)=p(\bth|\tilde{D}_i)\ \forall i$, then $p(\bth|D_1,\dots,D_n)=p(\bth|\tilde{D}_1, \dots, \tilde{D}_n)$.  
\end{lemma}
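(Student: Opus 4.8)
The plan is to express the joint posterior $p(\bth \mid D_1, \dots, D_n)$ entirely in terms of the single-dataset posteriors $p(\bth \mid D_i)$ and the prior $p(\bth)$, after which equality of the individual posteriors will immediately force equality of the joint posterior. The only structural input I need is Assumption~1 (conditional independence) together with Bayes' rule.

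First I would apply Bayes' rule and the conditional independence $p(D_1, \dots, D_n \mid \bth) = \prod_i p(D_i \mid \bth)$ to write
\[
p(\bth \mid D_1, \dots, D_n) = \frac{p(\bth) \prod_{i} p(D_i \mid \bth)}{p(D_1, \dots, D_n)}.
\]
Then I would rewrite each likelihood factor, again by Bayes' rule, as $p(D_i \mid \bth) = p(\bth \mid D_i)\, p(D_i) / p(\bth)$, and substitute. After cancelling powers of $p(\bth)$ this gives
\[
p(\bth \mid D_1, \dots, D_n) = \frac{\prod_i p(\bth \mid D_i)}{p(\bth)^{\,n-1}} \cdot \frac{\prod_i p(D_i)}{p(D_1, \dots, D_n)}.
\]
The crucial observation is that the second factor does not depend on $\bth$: it is simply the normalizing constant that makes the left-hand side a probability distribution over $\bth$. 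Hence, viewed as a function of $\bth$,
\[
p(\bth \mid D_1, \dots, D_n) \ \propto\ \frac{\prod_i p(\bth \mid D_i)}{p(\bth)^{\,n-1}}.
\]

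Finally I would invoke the hypothesis $p(\bth \mid D_i) = p(\bth \mid \tilde{D}_i)$ for every $i$: the unnormalized function on the right is then literally identical for the two reports $(D_1, \dots, D_n)$ and $(\tilde{D}_1, \dots, \tilde{D}_n)$. Since both $p(\bth \mid D_1, \dots, D_n)$ and $p(\bth \mid \tilde{D}_1, \dots, \tilde{D}_n)$ are normalized distributions proportional to the same function of $\bth$, they must coincide, which is the claim. The one point requiring care is the bookkeeping of the $\bth$-independent normalization: I must check that every data-dependent but $\bth$-independent quantity collapses into a single constant fixed by normalization, so that two realizations with matching marginal posteriors cannot yield different joint posteriors. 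This is routine once the proportionality above is in place, and the identical argument applies to the finite case of Definition~\ref{def:PMI} with the integral over $\bth$ replaced by a sum.
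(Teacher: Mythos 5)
Your proposal is correct and follows essentially the same route as the paper's own proof: apply Bayes' rule and conditional independence, rewrite each likelihood as $p(D_i \mid \bth) = p(\bth \mid D_i)\, p(D_i)/p(\bth)$ to obtain $p(\bth \mid D_1, \dots, D_n) \propto \prod_i p(\bth \mid D_i) / p(\bth)^{n-1}$, and conclude by normalization. Nothing is missing; the bookkeeping concern you flag is handled exactly as you describe, since the data-dependent factors are $\bth$-independent and absorbed into the normalizing constant.
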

\begin{proof}
		Suppose $\forall i, p(\bth|D_i)=p(\bth|D_i')$, then we have 
		\begin{align*}
		p(\bth|D_1,D_2,\cdots,D_n)&=\frac{p(D_1,D_2,\cdots,D_n,\bth)}{p(D_1,D_2,\cdots,D_n)}\\
		&=\frac{p(D_1,D_2,\cdots,D_n|\bth)\cdot p(\bth)}{p(D_1,D_2,\cdots,D_n)}\\
		&=\frac{p(D_1|\bth)\cdot p(D_2|\bth)\cdots p(D_n|\bth)\cdot p(\bth)}{p(D_1,D_2,\cdots,D_n)}\\
		&=\frac{p(D_1,\bth)\cdot p(D_2,\bth)\cdots p(D_n,\bth)\cdot p(\bth)}{p(D_1,D_2,\cdots,D_n)\cdot p^n(\bth)}\\
		&=\frac{p(\bth|D_1)\cdot p(\bth|D_2)\cdots p(\bth|D_n)\cdot p(D_1)\cdot p(D_2)\cdot \cdots p(D_n)}{ p(D_1,D_2,\cdots,D_n)\cdot p^{n-1}(\bth)}\\
		&\propto \frac{p(\bth|D_1)\cdot p(\bth|D_2)\cdots p(\bth|D_n)}{p^{n-1}(\bth)}.
		\end{align*}
		Similarly, we have 
		$$
		p(\bth|D_1',D_2',\cdots,D_n') \propto \frac{p(\bth|D_1')\cdot p(\bth|D_2')\cdots p(\bth|D_n')}{p^{n-1}(\bth)},
		$$
		since the analyst calculate the posterior by normalize the terms, we have $$p(\bth|D_1,D_2,\cdots,D_n)=p(\bth|D_1',D_2',\cdots,D_n').$$
\end{proof}

\section{One-time data acquisition} \label{app:single}
\subsection{An example of applying peer prediction} \label{app:single_trivial}
 The mechanism is as follows.\\
\begin{algorithm}[H]               
	\SetAlgorithmName{Mechanism}{mechanism}{List of Mechanisms}
	\begin{algorithmic}
		\STATE(1) Ask all data providers to report their datasets $\tilde{D}_1, \dots, \tilde{D}_n$.
		\STATE(2) For all $D_{-i}$, calculate probability $p(D_{-i}|D_{i})$ by the reported $D_i$ and $p(D_i|\bth)$.
		\STATE(3) The Brier score for agent $i$ is $s_i=1-\frac{1}{|D_{-i}|}\sum_{D_{-i}}(p(D_{-i}|\tilde{D}_i)-\mathbb{I}[D_{-i}=\tilde{D}_{-i}])^2$,\\ where $\mathbb{I}[D_{-i}=\tilde{D}_{-i}]=1$ if $D_{-i}$ is the same as the reported $\tilde{D}_{-i}$ and 0 otherwise.
		\STATE(4) The final payment for agent $i$ is
		$r_i=\frac{B\cdot s_i}{n}$.
	\end{algorithmic}
	\caption{One-time data collecting mechanism by using Brier Score.}
	\label{alg:single_brier}
\end{algorithm}

This payment function is actually the mean square error of the reported distribution on $D_{-i}$. It is based on the Brier score which is first proposed in~\cite{brier1950verification} and is a well-known bounded proper scoring rule. The payments of the mechanism are always bounded between $0$ and $1$. 
\begin{theorem}
	Mechanism~\ref{alg:single_brier} is IR, truthful, budget feasible, symmetric.
\end{theorem}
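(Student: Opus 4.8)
The plan is to verify the four properties essentially independently, treating truthfulness as the only substantive one. For individual rationality and budget feasibility, I would first observe that every summand $\big(p(D_{-i}\mid\tilde D_i)-\mathbb{I}[D_{-i}=\tilde D_{-i}]\big)^2$ lies in $[0,1]$, since the probability lies in $[0,1]$ and the indicator in $\{0,1\}$. Hence the normalized average in step~(3) lies in $[0,1]$, so $s_i\in[0,1]$ and $r_i=Bs_i/n\in[0,B/n]$. Non-negativity of $r_i$ gives IR, and summing gives $\sum_{i=1}^n r_i\le\sum_{i=1}^n B/n=B$, which is budget feasibility. Symmetry is immediate by inspection: the payment uses the same functional form for every agent, depending only on his own report (through the predictive distribution) and the others' reports (through the indicator), so permuting the agents permutes the payments accordingly.

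The heart of the argument is truthfulness, which I would reduce to the fact that the Brier score is a strictly proper scoring rule. Fix agent $i$ with realized dataset $D_i$, and suppose every other agent reports truthfully, so that the realized $\tilde D_{-i}=D_{-i}$ is drawn from the true conditional law $p(\cdot\mid D_i)$. A report $\tilde D_i$ induces, through step~(2), the predictive distribution $q(\cdot)=p(\cdot\mid\tilde D_i)$ over the $M$ possible values of $D_{-i}$. Writing $p(\cdot)=p(\cdot\mid D_i)$ and taking the expectation of $s_i$ over $D_{-i}\sim p$, I would expand the quadratic using $\sum_x\mathbb{I}[x=y]^2=1$ and $\sum_x q(x)\mathbb{I}[x=y]=q(y)$ to obtain
$$
\E_{D_{-i}\sim p}[s_i]=1-\frac{1}{M}\Big(\textstyle\sum_x q(x)^2-2\sum_x p(x)q(x)+1\Big).
$$
Completing the square rewrites the bracketed term as $\sum_x\big(q(x)-p(x)\big)^2-\sum_x p(x)^2+1$, so the expected payment is maximized exactly when $q=p$ as distributions.

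Finally I would connect this optimum back to reporting. Truthful reporting $\tilde D_i=D_i$ yields $q=p(\cdot\mid D_i)=p$, attaining the maximum, so it is a best response; this establishes the weak equilibrium demanded by the definition of truthfulness. I would also remark that the same maximum is attained by any report with $p(\bth\mid\tilde D_i)=p(\bth\mid D_i)$, since by Assumption~1 such reports induce the identical predictive $q(\cdot)=\sum_\bth p(\cdot\mid\bth)\,p(\bth\mid\tilde D_i)$; this explains why the equilibrium is only weak and is consistent with the later sensitivity discussion. The main obstacle, though minor, is bookkeeping: one must take the expectation against the \emph{true} law $p(\cdot\mid D_i)$ of $D_{-i}$ (justified because the others report truthfully) rather than against the report-induced $q$, and must keep in mind that step~(2)'s computation of $p(D_{-i}\mid\tilde D_i)$ relies on knowing $p(D_i\mid\bth)$ together with Assumption~1.
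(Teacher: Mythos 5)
Your proof is correct and takes essentially the same route as the paper's: boundedness of the quadratic score gives IR and budget feasibility, symmetry is by inspection, and truthfulness reduces to the properness of the Brier score under the true conditional law $p(D_{-i}\mid D_i)$. The only cosmetic difference is that you complete the square to exhibit $\sum_x(q(x)-p(x))^2$, while the paper maximizes $-x^2+2ax$ pointwise at $x=a$ — the same computation.
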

\begin{proof}
	The symmetric property is easy to verify. Moreover, since the payment for each agent is in the interval $[0,1]$, the mechanism is then budget feasible and IR. We only need to prove the truthfulness.
	Suppose that all the other agents except $i$ reports truthfully. Agent $i$ has true dataset $D_i$ and reports $\tilde{D}_i$. Since in the setting, the analyst is able to calculate $p(D_{-i}|D_i)$, then if the agent receives $s_i$ as their payment, from agent $i$'s perspective, his expected revenue is then: 
	\begin{align*}
		Rev_i'&=\sum_{D_{-i}}p(D_{-i}|D_{i})\cdot \left(1-\sum_{D_{-i}'} (p(D_{-i}'|\tilde{D}_i)-\mathbb{I}[D_{-i}'=D_{-i}])^2 \right)\\
		&= -\sum_{D_{-i}}p(D_{-i}|D_{i})\left(\sum_{D_{-i}'}\left( p(D_{-i}'|\tilde{D}_{i})^2 \right)-2 p(D_{-i}|\tilde{D}_i)\right)\\
		&=\sum_{D_{-i}}\left(-{p(D_{-i}|\tilde{D}_i)}^2+2p(D_{-i}|\tilde{D}_i)p(D_{-i}|D_{i})\right)
	\end{align*}
	Since the function $-x^2+2ax$ is maximized when $x=a$, the revenue $Rev_i'$ is maximized when $\forall D_{-i}, p(D_{-i}|D_{-i})=p(D_{-i}|D_i)$. Since the real payment $r_i$ is a linear transformation of $s_i$ and the coefficients are independent of the reported datasets, reporting the dataset with the true posterior will still maximize the agent's revenue and the mechanism is truthful. 
\end{proof}

\subsection{Bounding log-PMI: discrete case} \label{app:single_LR}
In this section, we give a method to compute the bounds of the $\log$-PMI score when $|\Theta|$ is finite.
First we give the upper bound of the PMI. We have for any $i, D_i\in \ds_i(D_{-i})$
\begin{align*}
 PMI(D_i, D_{-i})&\le \max_{i,D_{-i}',D_i'\in \ds_i(D_{-i}')}\{ PMI(D_i', D_{-i}')\}\\
&=\max_{i,D_{-i}',D_i'\in \ds_i(D_{-i}')}\left\{ \sum_{\bth\in \Theta} \frac{p(\bth|D_i')p(\bth|D_{-i}')}{p(\bth)} \right\}\\
&\le \max_{i,D_i'}\left\{ \sum_{\bth\in \Theta} \frac{p(\bth|D_i')}{\min_{\bth}\{p(\bth)\}} \right\}\\
&\le {\frac{1}{\min_{\bth}\{p(\bth)\}}}. \sherry{\text{need to explain more clearly}}
\end{align*}
The last inequality is because we have $\sum_\bth p(\bth|D_i')=1$. 

Since we have assumed that $p(\bth)$ is positive, the term $\frac{1}{\min_{\bth}\{p(\bth)\}}$ could then be computed and is finite. Thus we just let $R$ be $\log \left({\frac{1}{\min_{\bth}\{p(\bth)\}}}\right)$. Then we need to calculate a lower bound of the score. We have for any $i, D_{-i}$ and $D_i\in \ds_i(D_{-i})$
\begin{align}\label{eqn:app_bound_LR}
PMI(D_i, D_{-i}) =\sum_{\bth\in \Theta} \frac{p(\bth|D_i)p(\bth|D_{-i})}{p(\bth)} \ge \sum_{\bth\in \Theta} p(\bth|D_i)p(\bth|D_{-i}).
\end{align}

\begin{claim} \label{clm:app_bound_LR}
Let $D = \{d^{(1)}, \dots, d^{(N)}\}$ be a dataset with $N$   data points that are i.i.d. conditioning on $\bth$. Let $\mathcal{D}$ be the support of the data points $d$.
 Define 
$$
T = \frac{\max_{\bth\in \Theta} p(\bth)}{\min_{\bth\in \Theta} p(\bth)}, \quad U(\mathcal{D}) = \max_{\bth\in \Theta, d\in \mathcal{D}}\  p(\bth|d)\left/\min_{\bth\in \Theta, d\in \mathcal{D}:p(\bth|d)>0} \ p(\bth|d)\right.,
$$

Then we have  
$$
	\frac{\max_{\bth\in \Theta} \ p(\bth|D)}{\min_{\bth: p(\bth|D)>0} p(\bth|D)} \le U(\mathcal{D})^N\cdot T^{N-1}.
$$

\end{claim}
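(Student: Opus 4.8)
The plan is to first reduce everything to the explicit posterior formula already derived in the proof of Lemma~\ref{lem:sens_def}. Since the $N$ data points are i.i.d.\ conditioned on $\bth$, applying Bayes' rule to each conditional factor via $p(d^{(j)}|\bth) = p(\bth|d^{(j)})p(d^{(j)})/p(\bth)$ and absorbing every term independent of $\bth$ into the normalizing constant yields
$$
p(\bth|D) \propto \frac{\prod_{j=1}^N p(\bth|d^{(j)})}{p(\bth)^{N-1}}.
$$
Writing $f(\bth)$ for the right-hand side, the key observation is that the proportionality constant does not depend on $\bth$ and hence cancels in the ratio of interest, so that $\max_\bth p(\bth|D)/\min_{\bth:p(\bth|D)>0}p(\bth|D) = \max_\bth f(\bth)/\min_{\bth:f(\bth)>0}f(\bth)$.

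Next I would bound the numerator and the denominator of this ratio separately. For the numerator, evaluating $f$ at any maximizer and replacing each single-point posterior factor by its global maximum while replacing the prior in the denominator by its global minimum gives
$$
\max_\bth f(\bth) \le \frac{\prod_{j=1}^N \max_{\bth} p(\bth|d^{(j)})}{\left(\min_{\bth} p(\bth)\right)^{N-1}}.
$$
For the denominator I would argue symmetrically, with attention to the support condition: if $\bth_0$ attains $\min_{\bth:f(\bth)>0}f(\bth)$, then $f(\bth_0)>0$ forces $p(\bth_0|d^{(j)})>0$ for every $j$, so each such factor is bounded below by $\min_{\bth':p(\bth'|d^{(j)})>0}p(\bth'|d^{(j)})$, producing a matching lower bound with $\left(\max_\bth p(\bth)\right)^{N-1}$ in the denominator.

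Finally I would divide the two bounds and regroup the result as a product of per-data-point posterior ratios times a power of the prior ratio. Each posterior ratio $\max_\bth p(\bth|d^{(j)})\big/\min_{\bth':p(\bth'|d^{(j)})>0}p(\bth'|d^{(j)})$ is at most $U(\mathcal{D})$ by definition (the numerator is dominated by the maximum over all $d\in\mathcal{D}$, and the support-restricted denominator is at least the minimum over all $d\in\mathcal{D}$), and there are $N$ such factors; meanwhile the prior ratio $\max_\bth p(\bth)/\min_\bth p(\bth) = T$ appears to the power $N-1$. This yields exactly $U(\mathcal{D})^N\,T^{N-1}$. The only real subtlety is the careful bookkeeping of the support sets in the denominator step: one must check that the $\bth$ minimizing the joint posterior keeps every single-point posterior factor strictly positive, so that the per-point lower bound applies with the support-restricted minimum that matches the definition of $U(\mathcal{D})$.
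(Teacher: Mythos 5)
Your proposal is correct and follows essentially the same route as the paper: the paper's proof likewise starts from the factorization $p(\bth|D) \propto \prod_j p(\bth|d^{(j)})/p(\bth)^{N-1}$ (which it obtains by invoking Lemma~\ref{lem:sens_def}, whereas you rederive it via Bayes' rule) and then reads off the bound. The paper leaves the numerator/denominator bounding and the support-positivity bookkeeping implicit, so your write-up is simply a more detailed version of the same argument.
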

\begin{proof}
	By Lemma~\ref{lem:sens_def}, we have 
	$$
	p(\bth|D) \propto \frac{\prod_j p(\bth|d^{(j)})}{p(\bth)^{N-1}},
	$$
	for a fixed $D$, it must hold that 
	$$
	\frac{\max_{\bth\in \Theta} \ p(\bth|D)}{\min_{\bth: p(\bth|D)>0} p(\bth|D)} \le U(\mathcal{D})^N\cdot T^{N-1}.
	$$
\end{proof}

\begin{claim} \label{clm:app_bound_multi}
For any two datasets $D_i$ and $D_j$ with $N_i$ and $N_j$ data points respectively, let $\mathcal{D}_i$ be the support of the data points in $D_i$ and let $\mathcal{D}_j$ be the support of the data points in $D_j$. Then 
$$
	\frac{\max_{\bth\in \Theta} \ p(\bth|D_i,D_j)}{\min_{\bth: p(\bth|D_i, D_j)>0} p(\bth|D_i, D_j)} \le U(\mathcal{D}_i)^{N_i}\cdot U(\mathcal{D}_j)^{N_j} \cdot T^{N_i + N_j -1}.
$$
\end{claim}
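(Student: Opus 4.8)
The plan is to reduce the combined-dataset bound to the single-dataset bound of Claim~\ref{clm:app_bound_LR} applied separately to $D_i$ and $D_j$. First I would record the posterior factorization that follows from conditional independence (Assumption~1): since $D_i$ and $D_j$ are independent given $\bth$,
$$
p(\bth | D_i, D_j) \propto p(D_i | \bth)\, p(D_j | \bth)\, p(\bth) \propto \frac{p(\bth | D_i)\, p(\bth | D_j)}{p(\bth)},
$$
where the second proportionality (as a function of $\bth$) uses $p(D_i|\bth)\propto p(\bth|D_i)/p(\bth)$ and likewise for $D_j$, absorbing the data-dependent constants $p(D_i),p(D_j)$. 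The unknown normalizing constant is irrelevant because the quantity we bound, $\max_\bth p(\bth|D_i,D_j)/\min_\bth p(\bth|D_i,D_j)$, is scale-invariant.

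Next I would bound the max-to-min ratio factor by factor. For any two $\bth_1,\bth_2$ with positive combined posterior, the factorization gives
$$
\frac{p(\bth_1|D_i,D_j)}{p(\bth_2|D_i,D_j)} = \frac{p(\bth_1|D_i)}{p(\bth_2|D_i)}\cdot\frac{p(\bth_1|D_j)}{p(\bth_2|D_j)}\cdot\frac{p(\bth_2)}{p(\bth_1)}.
$$
I would bound the three factors by $U(\mathcal{D}_i)^{N_i}T^{N_i-1}$, $U(\mathcal{D}_j)^{N_j}T^{N_j-1}$, and $T$ respectively --- the first two by Claim~\ref{clm:app_bound_LR} and the last by the definition of $T$. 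Multiplying, and noting that the exponents of $T$ sum to $(N_i-1)+(N_j-1)+1=N_i+N_j-1$, yields exactly $U(\mathcal{D}_i)^{N_i}U(\mathcal{D}_j)^{N_j}T^{N_i+N_j-1}$. Choosing $\bth_1$ to be the argmax and $\bth_2$ the argmin of the combined posterior over its support completes the bound.

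The one point requiring care --- and the only real obstacle --- is the handling of the positivity restrictions in the denominators, since Claim~\ref{clm:app_bound_LR} only controls $p(\bth|D_i)$ over those $\bth$ with $p(\bth|D_i)>0$. Here I would observe that because $p(\bth)>0$ everywhere by assumption, $p(\bth|D_i,D_j)>0$ forces both $p(\bth|D_i)>0$ and $p(\bth|D_j)>0$ through the factorization; hence the $\bth_1,\bth_2$ under consideration automatically lie in the supports over which Claim~\ref{clm:app_bound_LR}'s ratios are defined, so every factor ratio above is well defined and bounded. Alternatively, one could bypass the two-dataset factorization and treat $(D_i,D_j)$ as a single collection of $N_i+N_j$ points that are independent conditioned on $\bth$, applying the identity $p(\bth|D_i,D_j)\propto \big(\prod_k p(\bth|d_i^{(k)})\prod_l p(\bth|d_j^{(l)})\big)/p(\bth)^{N_i+N_j-1}$ and bounding the $N_i$ ratios from $D_i$ by $U(\mathcal{D}_i)$, the $N_j$ ratios from $D_j$ by $U(\mathcal{D}_j)$, and the denominator by $T^{N_i+N_j-1}$; this gives the same constant and makes the bookkeeping of the exponents transparent.
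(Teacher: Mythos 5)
Your proposal is correct and follows essentially the same route as the paper's own (very terse) proof: invoke the factorization $p(\bth|D_i,D_j)\propto p(\bth|D_i)\,p(\bth|D_j)/p(\bth)$ (the paper cites Lemma~\ref{lem:sens_def} for this) and combine it with Claim~\ref{clm:app_bound_LR}, picking up one extra factor of $T$ from the prior. Your write-up simply makes explicit the ratio decomposition and the positivity bookkeeping that the paper's one-line proof leaves implicit, which is a welcome addition rather than a deviation.
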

\begin{proof}
	Again by Lemma~\ref{lem:sens_def}, we have 
	$$
	p(\bth|D_i, D_j) \propto \frac{p(\bth|D_i)p(\bth|D_j)}{p(\bth)}.
	$$
	Combine it with Claim~\ref{clm:app_bound_LR}, we prove the statement.
\end{proof}

Then for any $D_i$, since $\sum_{\bth \in \Theta} p(\bth|D_i) = 1$, by Claim~\ref{clm:app_bound_LR},
	$$
	\min_{\bth: p(\bth|D_i)>0} p(\bth|D_i) \ge \frac{1}{1 + |\Theta| \cdot U(\mathcal{D}_i)^{N_i}\cdot T^{N_i-1}}\triangleq \eta(\mathcal{D}_i,N_i). 
	$$
And for any $D_{-i}$, since $\sum_{\bth \in \Theta} p(\bth|D_{-i}) = 1$, by Claim~\ref{clm:app_bound_multi},
$$
\min_{\bth: p(\bth|D_{-i})>0} p(\bth|D_{-i}) \ge \frac{1}{1 + |\Theta| \cdot \Pi_{j\neq i} U(\mathcal{D}_j)^{N_j}\cdot T^{\sum_{j\neq i} N_j-1}}\triangleq \eta(\mathcal{D}_{-i},N_{-i}).
$$
Finally, for any $i, D_{-i}$,and  $D_i\in \ds_i(D_{-i})$, according to~\eqref{eqn:app_bound_LR},
\begin{align*}
PMI(D_i, D_{-i}) \ge \sum_{\bth\in \Theta} p(\bth|D_i)p(\bth|D_{-i}) \ge \eta(\mathcal{D}_i,N_i)\cdot \eta(\mathcal{D}_{-i},N_{-i}).
\end{align*}
The last inequality is because $D_i\in \ds_i(D_{-i})$ and there must exists $\bth \in \Theta$ so that both $p(\bth|D_i)$ and $p(\bth|D_{-i})$ are non-zero. Both $\eta(\mathcal{D}_i,N_i)$ and $\eta(\mathcal{D}_{-i},N_{-i})$ can be computed in polynomial time. Take minimum over $i$, we find the lower bound for PMI.

\subsection{Bounding log-PMI: continuous case}\label{app:gaussian}
Consider estimating the mean $\mu$ of a univariate Gaussian $\mathcal{N}(x|\mu, \sigma^2)$ with known variance $\sigma^2$. Let $D = \{x_1, \dots, x_N\}$ be the dataset and denote the mean by $\overline{x} = \frac{1}{N} \sum_{j} x_j$. We use the Gaussian conjugate prior, 
$$
\mu \sim \mathcal{N}(\mu | \mu_0, \sigma_0^2).
$$
Then according to~\cite{murphy2007conjugate}, the posterior of $\mu$ is equal to 
$$
p(\mu|D) = \mathcal{N}(\mu | \mu_N, \sigma_N^2),
$$
where 
$$
\frac{1}{\sigma_N^2} = \frac{1}{\sigma_0^2} + \frac{N}{\sigma^2} 
$$
only depends on the number of data points. 

By Lemma~\ref{lem:multi_comp}, we know that the payment function for exponential family is in the form of
$$PMI(D_i, D_{-i}) = \frac{g(\nu_i, \overline{\bm{\tau}}_i) g(\nu_{-i}, \overline{\bm{\tau}}_{-i})}{g(\nu_0, \overline{\bm{\tau}}_0) g(\nu_i +\nu_{-i}-\nu_0, \frac{\nu_{i} \overline{\bm{\tau}}_{i} + \nu_{-i} \overline{\bm{\tau}}_{-i} - \nu_{0}\overline{\bm{\tau}}_{0}}{\nu_i +\nu_{-i}-\nu_0} )}.$$
The normalization term for Gaussian is $\frac{1}{\sqrt{2\pi\sigma^2}}$, so we have
\begin{align*}
PMI(D_i, D_{-i}) = \frac{  \sqrt{\frac{1}{\sigma_0^2} + \frac{N_i}{\sigma^2} }  \sqrt{\frac{1}{\sigma_0^2} + \frac{N_{-i}}{\sigma^2} }}{\sqrt{\frac{1}{\sigma_0^2}}\sqrt{\frac{1}{\sigma_0^2} + \frac{N_i + N_{-i}}{\sigma^2} }}.
\end{align*}
When the total number of data points has an upper bound $N_{\max}$, each of the square root term should be bounded in the interval
$$
\left[ \frac{1}{\sigma_0},\  \sqrt{\frac{1}{\sigma_0^2} + \frac{N_{max}}{\sigma^2} } \ \right]
$$
Therefore $PMI(D_i, D_{-i})$ is bounded in the interval
 $$
 \left[\left(1+N_{max}\sigma_0^2/\sigma^2\right)^{-1/2}, 1+N_{max}\sigma_0^2/\sigma^2 \right].
 $$

\subsection{Sensitivity analysis for the exponential family}\label{app:multi_exp_sensitive} 

If we are estimating the mean $\mu$ of a univariate Gaussian $\mathcal{N}(x|\mu, \sigma^2)$ with known variance $\sigma^2$. Let $D = \{x_1, \dots, x_N\}$ be the dataset and denote the mean by $\overline{x} = \frac{1}{N} \sum_{j} x_j$. We use the Gaussian conjugate prior, 
$$
\mu \sim \mathcal{N}(\mu | \mu_0, \sigma_0^2).
$$
Then according to~\cite{murphy2007conjugate}, the posterior of $\mu$ is equal to 
$$
p(\mu|D) = \mathcal{N}(\mu | \mu_N, \sigma_N^2),
$$
where 
$$
\frac{1}{\sigma_N^2} = \frac{1}{\sigma_0^2} + \frac{N}{\sigma^2} 
$$
only depends on the number of data points. Since the normalization term $\frac{1}{\sqrt{2\pi\sigma^2}}$ of Gaussian distributions only depends on the variance, function $h(\cdot)$ defined in~\eqref{eqn:h} 
\begin{align*}
h_{D_{-i}}(N_i, \overline{x}_i) &= \frac{g(\nu_i, \overline{\bm{\tau}}_i) }{g(\nu_i +\nu_{-i}-\nu_0, \frac{\nu_{i} \overline{\bm{\tau}}_{i} + \nu_{-i} \overline{\bm{\tau}}_{-i} - \nu_{0}\overline{\bm{\tau}}_{0}}{\nu_i +\nu_{-i}-\nu_0} ) }\\
 & =\sqrt{\frac{1}{\sigma_0^2} + \frac{N_i}{\sigma^2} }\left/\sqrt{\frac{1}{\sigma_0^2} + \frac{N_i + N_{-i}}{\sigma^2} }\right.
\end{align*}
will only be changed if the number of data points $N_i$ changes, which means that the mechanism will be sensitive to replication and withholding, but not necessarily other types of manipulations. 

If we are estimating the mean $\mu$ of a Bernoulli distribution $Ber(x|\mu)$. Let $D = \{x_1, \dots, x_N\}$ be the data points. Denote by $\alpha = \sum_{i}  x_i$ the number of ones and denote by $\beta = \sum_i 1- x_i$ the number of zeros. The conjugate prior is the Beta distribution, 
$$
p(\mu) = \text{Beta}(\mu|\alpha_0, \beta_0) = \frac{1}{B(\alpha_0, \beta_0)} \mu^{\alpha_0-1}(1-\mu)^{\beta_0-1}.
$$
where $B(\alpha_0, \beta_0)$ is the Beta function
$$
B(\alpha_0, \beta_0) = \frac{(\alpha_0 + \beta_0 -1)!}{(\alpha_0 -1)! (\beta_0 -1)!}.
$$
The posterior of $\mu$ is equal to 
$$
p(\mu|D) = \text{Beta}(\mu|\alpha_0 + \alpha, \beta_0 + \beta).
$$
Then we have 
\begin{align*}
	h_{D_{-i}}(\alpha, \beta) & = \frac{B(\alpha_0 + \alpha_i + \alpha_{-i}, \beta_0 + \beta_i + \beta_{-i})}{B(\alpha_0 + \alpha_i, \beta_0 + \beta_i)}\\
	& = \frac{(\alpha_0 + \beta_0 + N_i + N_{-i} -1)!(\alpha_0+\alpha_i-1)!(\beta_0+\beta_i-1)!}{(\alpha_0 + \alpha_i + \alpha_{-i} -1)! (\beta_0 + \beta_i + \beta_{-i} -1)! (\alpha_0 + \beta_0 + N_i-1)!}.
\end{align*}
Define $A_i = \alpha_0 + \alpha_i -1$ and $B_i = \beta_0 + \beta_i -1$, since $N_i=\alpha_i+\beta_i$ and $N_{-i}=\alpha_{-i}+\beta_{-i}$, we have 
$$
h_{D_{-i}}(\alpha, \beta)=h_{\alpha_{-i},\beta_{-i}}(A_i, B_i)=\frac{A_i!B_i!(A_i+B_i+\alpha_{-i}+\beta_{-i}+1)!}{(A_i+\alpha_{-i})!(B_i+\beta_{-i})!(A_i+B_i+1)!}
$$
Now we are going to prove that for any two different pairs $(A_i, B_i)$ and $(A_i', B_i')$, there should always exists a pair $(\alpha_{-i}',\beta_{-i}')$ selected from the four pairs: $(\alpha_{-i}, \beta_{i}),(\alpha_{-i}+1, \beta_{i}),(\alpha_{-i}, \beta_{i}+1),(\alpha_{-i}+1, \beta_{i}+1)$, such that $h_{\alpha_{-i}',\beta_{-i}'}(A_i, B_i)\neq h_{\alpha_{-i}',\beta_{-i}'}(A_i', B_i')$.

Suppose that this does not hold, then there should exist two pairs $(A_i,B_i)$ and $(A_i',B_i')$ such that for each $(\alpha_{-i}',\beta_{-i}')$ in the four pairs, $h_{\alpha_{-i}',\beta_{-i}'}(A_i, B_i)= h_{\alpha_{-i}',\beta_{-i}'}(A_i', B_i').$

Then by the two cases when $(\alpha_{-i}',\beta_{-i}')=(\alpha_{-i}, \beta_{-i})$ and $(\alpha_{-i}+1, \beta_{-i})$ we can derive that
\begin{align*}
	\frac{h_{\alpha_{-i}+1,\beta_{-i}}(A_i, B_i)}{h_{\alpha_{-i},\beta_{-i}}(A_i, B_i)}&=\frac{h_{\alpha_{-i}+1,\beta_{-i}}(A_i', B_i')}{h_{\alpha_{-i},\beta_{-i}}(A_i', B_i')}\\
	\frac{A_i+B_i+\alpha_{-i}+1+\beta_{-i}+1}{A_i+\alpha_{-i}+1}&=\frac{A_i'+B_i'+\alpha_{-i}+1+\beta_{-i}+1}{A_i'+\alpha_{-i}+1}
\end{align*}
$$ (A_i+B_i-A_i'-B_i')(\alpha_{-i}+1)+(A_i'-A_i)(\alpha_{-i}+\beta_{-i}+2)+A_i'B_i-A_iB_i'=0$$  

Replacing $\beta_{-i}$ with $\beta_{-i}+1$, we could get 
$$ (A_i+B_i-A_i'-B_i')(\alpha_{-i}+1)+(A_i'-A_i)(\alpha_{-i}+\beta_{-i}+3)+A_i'B_i-A_iB_i'=0$$ 

Subtracting the last equation from this, we get $A_i'-A_i=0$. Symmetrically, when $(\alpha_{-i}',\beta_{-i}')=(\alpha_{-i}, \beta_{-i})$ and $(\alpha_{-i}, \beta_{-i}+1)$ and replacing $\alpha_{-i}$ with $\alpha_{-i}+1$, we have $B_{i}'-B_i=0$ and thus $(A_i,B_i)=(A_i',B_i')$. This contradicts to the assumption that $(A_i,B_i)\neq (A_i',B_i')$. Therefore for any two different pairs of reported data in the Bernoulli setting, at least one in the four others' reported data $(\alpha_{-i}, \beta_{i}),(\alpha_{-i}+1, \beta_{i}),(\alpha_{-i}, \beta_{i}+1),(\alpha_{-i}+1, \beta_{i}+1)$ would make the agent strictly truthfully report his posterior.


\subsection{Missing proofs} \label{app:single_proofs}

\subsubsection{Proof for Theorem~\ref{thm_truthful} and Theorem~\ref{thm_sensitive}} \label{app:single_alpha}

\begin{theorem}[Theorem~\ref{thm_truthful}]
	Mechanism~\ref{alg:single} is IR, truthful, budget feasible, symmetric.
\end{theorem}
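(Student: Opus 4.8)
The plan is to verify the four properties in increasing order of difficulty, treating symmetry, individual rationality, and budget feasibility as immediate consequences of the normalization step, and reserving the real work for truthfulness. Throughout I would use that $r_i$ is an \emph{increasing affine} function of the score $s_i=\log PMI$, with slope $\frac{B}{n(R-L)}>0$ (here $R>L$ by the footnote convention).

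First I would dispatch the three structural properties. Symmetry holds because $PMI(\tilde D_i,\tilde D_{-i})$ is a symmetric function of the reports (it depends only on the induced posteriors), the membership test $\tilde D_i\in\ds_i(\tilde D_{-i})$ is permutation-equivariant, and every provider is scored and normalized by the identical map $s\mapsto\frac{B}{n}\cdot\frac{s-L}{R-L}$; permuting providers thus permutes the payments. For individual rationality, whenever $\tilde D_i\in\ds_i(\tilde D_{-i})$ the bound $s_i\ge L$ gives $\frac{s_i-L}{R-L}\ge0$, and the complementary case pays exactly $0$, so $r_i\ge0$ always. For budget feasibility, $s_i\le R$ yields $\frac{s_i-L}{R-L}\le1$, hence each $r_i\le\frac{B}{n}$ and $\sum_i r_i\le B$.

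The core is truthfulness. Since the payment is increasing affine in $s_i$, maximizing the expected payment is equivalent to maximizing $\E_{D_{-i}\sim p(D_{-i}|D_i)}[\log PMI(\tilde D_i,D_{-i})]$ over the report $\tilde D_i$. I would first record, from Definition~\ref{def:PMI} together with Lemma~\ref{lem12} under conditional independence, that $PMI(\tilde D_i,D_{-i})=\frac{p(D_{-i}|\tilde D_i)}{p(D_{-i})}$. The truthful score minus that of any deviation $D_i'$ then telescopes into a relative entropy:
\begin{align*}
\E_{D_{-i}|D_i}\!\big[\log PMI(D_i,D_{-i})\big]-\E_{D_{-i}|D_i}\!\big[\log PMI(D_i',D_{-i})\big]=\E_{D_{-i}|D_i}\!\Big[\log\tfrac{p(D_{-i}|D_i)}{p(D_{-i}|D_i')}\Big]=D_{KL}\!\big(p(\cdot|D_i)\,\|\,p(\cdot|D_i')\big)\ge0,
\end{align*}
where the marginal $p(D_{-i})$ cancels; equivalently this is the Gibbs/cross-entropy inequality, and Proposition~\ref{prop:mutual_info} identifies the truthful value with the mutual information $I(D_i;D_{-i})$. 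Moreover $D_i\in\ds_i(D_{-i})$ holds with probability one when $D_{-i}\sim p(\cdot|D_i)$, since any $D_{-i}$ of positive conditional probability shares a consistent $\bth$ with $D_i$ and hence has $PMI(D_i,D_{-i})>0$; so the truthful expectation is the genuine unclamped log-PMI.

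The step I expect to be the main obstacle is reconciling this clean divergence argument with the boundary convention that pays $0$ (rather than $-\infty$) when $\tilde D_i\notin\ds_i(\tilde D_{-i})$. A deviation $D_i'$ can force $PMI(D_i',D_{-i})=0$ on an event of positive probability under $p(\cdot|D_i)$ — namely on $S^c=\{D_{-i}:p(D_{-i}|D_i)>0,\ p(D_{-i}|D_i')=0\}$ — where the deviator collects the minimum payment instead of the divergent penalty that the raw $D_{KL}$ would charge. I would therefore split the expectation over $S=\{D_{-i}:PMI(D_i',D_{-i})>0\}$ and its complement: on $S$ the gap contributes a restricted log-ratio sum that I would control with the log-sum inequality, and on $S^c$ it contributes $\E[\log PMI(D_i,D_{-i})-L]\ge0$ because $\log PMI(D_i,\cdot)\ge L$ on the support of $p(\cdot|D_i)$. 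The crux of the write-up is then to show that the guaranteed surplus on $S^c$ (where the truthful score exceeds the floor $L$) compensates the restricted $S$-term, so that diverting probability mass into the $PMI=0$ region never beats truthful reporting and truthful reporting remains weakly optimal.
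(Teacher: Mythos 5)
Your treatment of symmetry, IR, and budget feasibility is correct and matches the paper, and your main truthfulness route --- using conditional independence to write $PMI(\tilde D_i,D_{-i})=p(D_{-i}|\tilde D_i)/p(D_{-i})$ and reducing the comparison of expected scores to a relative entropy --- is the same argument the paper makes (the paper passes to the unclamped quantity $Rev_i'=\E_{D_{-i}\sim p(D_{-i}|D_i)}[\log PMI(\tilde D_i,D_{-i})]$, asserts it is an increasing affine transform of the expected payment, and maximizes it by the Gibbs inequality). The problem is that your proof is incomplete at exactly the step you flag as ``the crux'': you set up the split over $S=\{D_{-i}:PMI(D_i',D_{-i})>0\}$ and $S^c$, but you never prove the compensation claim, i.e.\ that the surplus $\sum_{S^c}p(D_{-i}|D_i)\bigl[\log PMI(D_i,D_{-i})-L\bigr]$ outweighs the (possibly negative) restricted term $\sum_{T\cap S}p(D_{-i}|D_i)\log\bigl(p(D_{-i}|D_i)/p(D_{-i}|D_i')\bigr)$, where $T$ is the support of $p(\cdot|D_i)$. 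As written this is a conjecture, not a proof.

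Moreover, that deferred step is not just missing --- it is false when the floor $L$ is tight, so no argument can close it under the theorem's stated hypotheses. Take $\Theta=\{\theta_1,\theta_2\}$ with uniform prior; provider $2$ holds one point with $p(u|\theta_1)=1$, $p(u|\theta_2)=p(v|\theta_2)=\tfrac12$; provider $1$ holds one point taking values $x,y,w$ with likelihoods $(0.1,0.4,0.5)$ under $\theta_1$ and $(0,0.2,0.8)$ under $\theta_2$. The posteriors are $p(\theta_1|x)=1$, $p(\theta_1|y)=p(\theta_1|u)=\tfrac23$, $p(\theta_1|w)=\tfrac{5}{13}$, $p(\theta_1|v)=0$, so the valid pairs have $PMI(x,u)=\tfrac43$, $PMI(y,u)=\tfrac{10}{9}$, $PMI(w,u)=\tfrac{12}{13}$, $PMI(w,v)=\tfrac{16}{13}$, $PMI(y,v)=\tfrac23$, while $PMI(x,v)=0$. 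With the legitimate (tight) bounds $L=\log\tfrac23$, $R=\log\tfrac43$, a provider holding $y$ (so $p(u|y)=\tfrac56$, $p(v|y)=\tfrac16$) gets payment proportional to $\log\tfrac53$ against $u$ and $0$ against $v$ by reporting truthfully, but proportional to $\log 2$ against $u$ and $0$ against $v$ by reporting $x$: truth already sits at the floor against $v$ (the pair $(y,v)$ attains $L$), so the zero-PMI clamp costs the deviator nothing while the overconfident report strictly gains against $u$. Reporting $x$ pointwise weakly dominates and is strictly better with probability $\tfrac56$, so truthfulness fails. Note that the paper's own proof has the same hole: its identity $Rev_i'=Rev_i\cdot\tfrac{n}{B}(R-L)+L$ holds only for reports $\tilde D_i$ whose conditional $p(\cdot|\tilde D_i)$ is supported on all of $T$; for reports creating zero-PMI events on $T$ the left side is $-\infty$ while the right side is finite, which is precisely the case your decomposition isolates. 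The statement can be rescued by adding slack to the floor: if $L\le\min\log PMI-1$ (natural log), then your plan does go through, since the log-sum inequality gives a deficit of at least $P\log P$ on $T\cap S$ with $P=\Pr[S\mid D_i]$, the surplus on $T\setminus S$ is at least $(1-P)\bigl(\min\log PMI-L\bigr)\ge 1-P$, and $P\log P+(1-P)\ge 0$ for all $P\in[0,1]$. So your instinct about where the difficulty lies was exactly right, but resolving it requires strengthening the hypothesis on $L$, not a cleverer inequality.
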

\newcommand{\dds}{\bm{\mathcal{D}}}
We suppose that the dataset space of agent $i$ is $\dds_i$. We first give the definitions of several matrices. These matrices are essential for our proofs, but they are unknown to the data analyst. Since the dataset $D_i$ consists of $N_i$ i.i.d data points drawn from the data generating matrix $G_i$, we define prediction matrix $P_i$ of agent $i$ to be a matrix with $|\dds_i|=|\mathcal{D}|^{N_i}$ rows and $|\Theta|$ columns. Each column corresponds to a $\bth\in \Theta$ and each row corresponds to a possible dataset $D_i\in\dds_i$. The matrix element on the column corresponding to $\bth$ and the row corresponding to $D_i$ is $p(D_i|\bth)$. Intuitively, this matrix is the posterior of agent $i$'s dataset conditioned on the parameter $\bth$.

Similarly, we define the out-prediction matrix $P_{-i}$ of agent $i$ to be a matrix with $\prod_{j\neq i} |\dds_j|$ rows and $|Y|$ columns. Each column corresponds to a $\bth\in \Theta$ and each row corresponds to a possible dataset $D_{-i}\in \dds_{-i}$. The element corresponding to $D_{-i}$ and $\bth$ is $p(D_{-i}|\bth)$. In the proof, we also give a lower bound on the sensitiveness coefficient $\alpha$ related to these out-prediction matrices.

\begin{theorem}[Theorem~\ref{thm_sensitive}] \label{thm_sensitive_app}
	Mechanism~\ref{alg:single} is sensitive if either condition holds:
	\begin{itemize}
		\item [1.] $\forall i$, $Q_{-i}$ has rank $|\Theta|$.
		\item [2.] $\forall i, \sum_{i'\neq i}{(rank_k(G_{i'})-1)\cdot N_{i'}}+1\ge |\Theta|$.
	\end{itemize}
	Moreover, it is $e_i\cdot\frac{B}{n (R-L)}$-sensitive for agent $i$, where $e_{i}$ is the smallest singular value of matrix $P_{-i}$.
\end{theorem}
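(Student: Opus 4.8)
The plan is to reduce the expected payment to a familiar information-theoretic quantity and then read off sensitivity from a rank condition. Write $q(\bth) = p(\bth\mid D_i')$ for the posterior induced by agent $i$'s (possibly misreported) dataset. Since $p(\bth\mid D_{-i})/p(\bth) = p(D_{-i}\mid\bth)/p(D_{-i})$ by Bayes' rule, a one-line computation gives $PMI(D_i',D_{-i}) = \tilde q(D_{-i})/p(D_{-i})$, where $\tilde q(D_{-i}) := \sum_{\bth} q(\bth)\,p(D_{-i}\mid\bth)$ is the predictive distribution of $D_{-i}$ that $i$'s report implies. Because the score depends only on the reported posteriors, Lemma~\ref{lem:sens_def} lets me replace the accurate reports $\tilde D_{-i}(D_{-i})$ by the true $D_{-i}$. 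Taking expectations over $D_{-i}\sim p(\cdot\mid D_i)$ and writing $p^\star(D_{-i}) := p(D_{-i}\mid D_i)$ for the true predictive, the expected $\log$-$PMI$ payment equals $\E_{p^\star}[\log \tilde q(D_{-i})] - \E_{p^\star}[\log p(D_{-i})]$; the second term does not depend on $i$'s report, so the gap between truthful reporting and reporting $D_i'$ is exactly $\mathrm{KL}(p^\star \,\|\, \tilde q)$.

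This immediately yields parts (1) and (2) of sensitivity. By Gibbs' inequality $\E_{p^\star}[\log \tilde q]$ is maximized over all distributions $\tilde q$ at $\tilde q = p^\star$, which is attained by the truthful posterior, so truthful reporting is optimal; reports that fall into the zero-$PMI$ region receive the floor payment $0 \le$ the (nonnegative) truthful payment and can only lower the expectation, so they are handled separately. The gap is strictly positive precisely when $\tilde q \neq p^\star$. Now observe that $\tilde q - p^\star = P_{-i}\,\Delta_i$, where $\Delta_i(\bth)=q(\bth)-p(\bth\mid D_i)$ and $P_{-i}$ is the out-prediction matrix with entries $p(D_{-i}\mid\bth)$. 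Since $p(\bth\mid D_{-i}) = p(D_{-i}\mid\bth)p(\bth)/p(D_{-i})$, the matrices $Q_{-i}$ and $P_{-i}$ differ only by left- and right-multiplication by the positive diagonal matrices $\mathrm{diag}(1/p(D_{-i}))$ and $\mathrm{diag}(p(\bth))$, so they have equal rank. Hence if $Q_{-i}$ has rank $|\Theta|$ then $P_{-i}$ has trivial kernel, so any posterior change $\Delta_i\neq 0$ forces $\tilde q\neq p^\star$ and a strictly smaller expected payment. Because the normalization $s\mapsto \frac{B}{n}\frac{s-L}{R-L}$ is an increasing affine map, it preserves the argmax and all strict inequalities, establishing condition~1.

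For condition~2 it suffices to show that the stated Kruskal-rank inequality forces $Q_{-i}$ to have full column rank $|\Theta|$, after which the previous paragraph applies. Using the conditional-independence factorization in Lemma~\ref{lem:sens_def}, the column of $Q_{-i}$ indexed by $\bth$ is, up to a per-$\bth$ scalar $1/p(\bth)^c$, the tensor product over all data points of all providers $j\neq i$ of the corresponding columns of the single-point matrices $G_j$; equivalently $Q_{-i}$ is a diagonal rescaling of the column-wise Khatri--Rao product of the $G_j$, each appearing $N_j$ times (such row/column rescalings preserve Kruskal rank). I would then invoke the standard accumulation bound for the Kruskal rank of a Khatri--Rao product, $rank_k(A\odot B)\ge \min\{rank_k(A)+rank_k(B)-1,\ |\Theta|\}$, and iterate it over the $\sum_{j\neq i} N_j$ factors. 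This gives $rank_k(Q_{-i}) \ge \min\{\sum_{j\neq i}(rank_k(G_j)-1)N_j + 1,\ |\Theta|\}$; the hypothesis makes this equal to $|\Theta|$, so $Q_{-i}$ has full column rank and the mechanism is sensitive.

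Finally, for $\alpha$-sensitivity I would quantify the gap $\mathrm{KL}(p^\star\,\|\,\tilde q)$ in terms of $\|\Delta_i\|$. The clean linear-algebraic step is the singular-value bound $\|\tilde q - p^\star\|_2 = \|P_{-i}\Delta_i\|_2 \ge e_i\,\|\Delta_i\|_2$, where $e_i = \sigma_{\min}(P_{-i})$; after the affine normalization this contributes the factor $\tfrac{B}{n(R-L)}\,e_i$. The main obstacle is the passage from the information-theoretic gap to a bound controlled by $\|\Delta_i\|$: $\mathrm{KL}(p^\star\,\|\,\tilde q)$ is naturally second-order in $\tilde q - p^\star$ near the truthful report, so producing the claimed coefficient requires controlling the log-ratio using that the posteriors are bounded away from $0$ (the $\eta$-bounds of Appendix~\ref{app:single_LR}) and that the achievable posterior differences lie in a bounded set. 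Carefully tracking these constants to land exactly on $e_i\cdot\frac{B}{n(R-L)}$ is the delicate quantitative part, whereas the qualitative sensitivity statements follow directly from the rank conditions above.
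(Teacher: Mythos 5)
Your handling of the two rank conditions is correct and is essentially the paper's own argument, step for step: the paper likewise strips off the affine normalization $s\mapsto \frac{B}{n}\frac{s-L}{R-L}$, rewrites the expected log-PMI score as $\sum_{D_{-i}}p(D_{-i}|D_i)\log\tilde p(D_{-i}|\tilde D_i)$ minus a report-independent constant, identifies the loss from misreporting as $D_{KL}\bigl(p(\cdot|D_i)\,\Vert\,\tilde p(\cdot|\tilde D_i)\bigr)$, applies Gibbs' inequality for optimality of the truthful posterior, writes the predictive perturbation as $P_{-i}\Delta_i$, relates $Q_{-i}$ and $P_{-i}$ by the same two positive diagonal scalings $\mathrm{diag}(1/p(D_{-i}))$ and $\mathrm{diag}(p(\bth))$, and proves condition 2 by exactly the Kruskal-rank accumulation lemma for column-wise Kronecker (Khatri--Rao) products of Sidiropoulos--Bro, iterated over the $\sum_{j\neq i}N_j$ single-point factors. (One small point you handle more carefully than the paper: the Khatri--Rao factorization naturally involves the matrices with entries $p(d_j|\bth)$, while the theorem is stated with $G_j$ having entries $p(\bth|d_j)$; your remark that diagonal row/column rescalings preserve Kruskal rank is the needed bridge, which the paper uses implicitly.)

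The part you left open --- converting the KL gap into a bound \emph{linear} in $\Vert\Delta_i\Vert$ with the stated constant --- is a genuine gap in your write-up, but it is precisely the point at which the paper's own proof is unsound, and your stated obstacle is the correct diagnosis. The paper's chain is $\Delta_{Rev_i'} = D_{KL}(p\Vert\tilde p)\ \ge\ \Vert p-\tilde p\Vert_2^2\ =\ \Vert P_{-i}\Delta_i\Vert^2\ \ge\ e_i\,\Delta_i^T\Delta_i$, and in the final line $e_i\,\Delta_i^T\Delta_i$ is silently rewritten as $e_i\Vert\Delta_i\Vert$; this conflates $\Vert\Delta_i\Vert^2$ with $\Vert\Delta_i\Vert$, and in the lower-bound direction $\Vert\Delta_i\Vert^2\ge c\Vert\Delta_i\Vert$ fails as $\Vert\Delta_i\Vert\to 0$ --- exactly the second-order-versus-first-order issue you flagged. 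There is a second lapse at the same spot: for $P_{-i}^TP_{-i}-e_iI\succeq 0$ one needs $e_i$ to be the smallest \emph{eigenvalue} of $P_{-i}^TP_{-i}$, i.e.\ $\sigma_{\min}(P_{-i})^2$, not $\sigma_{\min}(P_{-i})$ itself (these differ whenever $\sigma_{\min}<1$, which is typical here since the columns of $P_{-i}$ sum to one). What the paper's argument actually establishes is a payment gap of order $\frac{B}{n(R-L)}\,\sigma_{\min}(P_{-i})^2\,\Vert\Delta_i\Vert^2$ (up to the Pinsker constant), i.e.\ sensitivity measured against the \emph{squared} norm of the posterior deviation, not the claimed $e_i\cdot\frac{B}{n(R-L)}$-sensitivity against $\Vert\Delta_i\Vert$. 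So your instinct was right that no argument of this shape can produce a linear-in-$\Vert\Delta_i\Vert$ constant, because the KL divergence vanishes quadratically at the truthful report; the quantitative claim would need either a redefinition of $\alpha$-sensitivity with $\Vert\Delta_i\Vert^2$, or an additional lower bound on achievable deviations.
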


\begin{proof}

	First, it is easy to verify that the mechanism is budget feasible because $s_i$ is bounded between $L$ and $R$.
	Let agent $i$'s expected revenue of Mechanism~\ref{alg:single} be $Rev_i$. Then we have 
	$$
	Rev_i=\frac{B}{n}\cdot \left(\frac{\sum_{D_{-i}\in \ds_i(D_{-i})} p(D_{-i}|D_i) \cdot \log PMI(\tilde{D}_i,D_{-i})-L}{R{}-L{}}\right).
	$$	
	We consider another revenue $Rev_i'\triangleq\sum_{D_{-i}} p(D_{-i}|D_i)\cdot \log\left(\sum_\bth \frac{p(\bth|\tilde{D}_i)\cdot p(\bth|D_{-i})}{p(\bth)}\right)$ assuming that $0\cdot\log 0=0$. Then we have 
	\begin{align*}
	Rev_i'&=\sum_{D_{-i}} p(D_{-i}|D_i)\cdot \log\left(\sum_\bth \frac{p(\bth|\tilde{D}_i)\cdot p(\bth|D_{-i})}{p(\bth)}\right)\\
	&=\sum_{D_{-i}, D_i\in \ds_i(D_{-i})} p(D_{-i}|D_i)\cdot \log PMI(\tilde{D}_i,D_{-i}) \\
	&\quad +\sum_{D_{-i}, D_i\notin \ds_i(D_{-i})} p(D_{-i}|D_i)\cdot \log PMI(\tilde{D}_i,D_{-i}) \\
	&=\sum_{D_{-i}, D_i\in \ds_i(D_{-i})} p(D_{-i}|D_i)\cdot \log PMI(\tilde{D}_i,D_{-i}) +\sum_{D_{-i}, D_i\notin \ds_i(D_{-i})} 0\cdot \log 0\\
	&= \sum_{D_{-i}, D_i\in \ds_i(D_{-i})} p(D_{-i}|D_i)\cdot \log PMI(\tilde{D}_i,D_{-i})\\
	&=Rev_i\cdot \frac{n}{B}\cdot (R{}-L{}) + L.
	\end{align*}
	$Rev_i'$ is a linear transformation of $Rev_i$. The coefficients $L{}$, $R{}$, $\frac{n}{B}$ do not depend on $\tilde{D}_i$. The ratio $\frac{n}{B}\cdot (R{}-L{})$ is larger than 0. Therefore, the optimal reported $\tilde{D}_i$ for $Rev_i$ should be the same as that for $Rev_i'$. If the a payment rule with revenue $Rev_i'$ is $e_{i}$ - sensitive for agent $i$, then the Mechanism~\ref{alg:single} would then be $e_i\cdot\frac{B}{n\cdot (R-L)}$ - sensitive. In the following part, we prove that real dataset $D_i$ would maximize the revenue $Rev_i'$ and the $Rev_i'$ is $e_i\cdot\frac{B}{|\mathcal{N}|\cdot (R-L)}$ - sensitive for all the agents. Thus in the following parts we prove the revenue $Rev_i'$ is $e_{i}$ - sensitive for agent $i$. 
	\begin{align*}
	Rev_i'&=\sum_{D_{-i}} p(D_{-i}|D_i)\cdot \log\left(\sum_\bth \frac{p(\bth|\tilde{D}_i)\cdot p(\bth|D_{-i})}{p(\bth)}\right)\\
	&= \sum_{D_{-i}} p(D_{-i}|D_i)\cdot \log \left(\sum_\bth \frac{p(\bth|\tilde{D}_i)\cdot p(\bth,D_{-i})}{p(\bth)}\right)-\sum_{D_{-i}} p(D_{-i}|D_i)\cdot  \log\left( p(D_{-i}) \right)\\
	&= \sum_{D_{-i}} p(D_{-i}|D_i)\cdot \log \left(\sum_\bth \frac{ p(\bth|\tilde{D}_i)\cdot p(\bth,D_{-i})}{p(\bth)}\right)-C.
	\end{align*}
	Since the term $\sum_{D_{-i}} p(D_{-i}|D_i)\cdot  \log\left( p(D_{-i}) \right)$ does not depend on $\tilde{D}_i$, agent $i$ could only manipulate to modify the term $\sum_{D_{-i}} p(D_{-i}|D_i)\cdot \log \left(\sum_\bth \frac{ p(\bth|\tilde{D}_i)\cdot p(\bth,D_{-i})}{p(\bth)}\right)$. 
	Since we have 
	\begin{align*}
	\sum_{D_{-i},\bth} \frac{ p(\bth|\tilde{D}_i)\cdot p(\bth,D_{-i})}{p(\bth)}
	&= \sum_{\bth}\frac{1}{p(\bth)}\left( \sum_{D_{-i}} p(\bth|\tilde{D}_i)\cdot p(\bth,D_{-i})\right)\\
	&= \sum_{\bth}\frac{1}{p(\bth)}\left( p(\bth|\tilde{D}_i)\cdot p(\bth)\right)\\
	&= \sum_{\bth}p(\bth|\tilde{D}_i)\\
	&=1,
	\end{align*}
	Since we have $\sum_{D_{-i}}\left(\sum_{\bth} \frac{ p(\bth|\tilde{D}_i)\cdot p(\bth,D_{-i})}{p(\bth)}\right) =1$, we could view the term $\sum_\bth\frac{ p(\bth|\tilde{D}_i)\cdot p(\bth,D_{-i})}{p(\bth)}$  as a probability distribution on the variable $D_{-i}$. Since it depends on $\tilde{D}_i$, we denote it as $\tilde{p}(D_{-i}|\tilde{D}_i)$. Since if we fix a distributions $p(\sigma)$, then the distribution $q(\sigma)$ that maximizes $\sum_\sigma p(\sigma)\log q(\sigma)$ should be the same as $p$. (If we assume that $0\cdot \log 0=0$, this still holds.)
	When agent $i$ report truthfully, 
	\begin{align*}
	\sum_\bth \frac{ p(\bth|D_i)\cdot p(\bth,D_{-i})}{p(\bth)}
	&=\sum_\bth\frac{ p(D_i,\bth)\cdot p(D_{-i},\bth)}{p(D_i)\cdot p(\bth)} \\
	&=\sum_\bth\frac{ p(D_i|\bth)\cdot p(D_{-i},\bth)}{p(D_i)}\\
	&=\sum_\bth\frac{ p(D_i|\bth)\cdot p(D_{-i}|\bth)\cdot p(\bth)}{p(D_i)}\\
	&=\sum_\bth\frac{ p(D_i,D_{-i},\bth)}{p(D_i)}\\
	&=p(D_{-i}|D_{i}).
	\end{align*}
	The data provider can always maximize $Rev_i'$ by truthfully reporting $D_i$. And we have proven the truthfulness of the mechanism.
	
	Then we need to prove the relation between the sensitiveness of the mechanism and the out-prediction matrices. When Alice reports $\tilde{D}_i$ the revenue difference from truthfully report is then 
	\begin{align*}
	\Delta_{Rev_i'} &= \sum_{D_{-i}} p(D_{-i}|D_{i})\log p(D_{-i}|D_i)- \sum_{D_{-i}} p(D_{-i}|D_{i})\log \tilde{p}(D_{-i}|D_i)\\
	&= \sum_{D_{-i}} p(D_{-i}|D_{i}) \log  \frac{p(D_{-i}|D_i)}{\tilde{p}(D_{-i}|D_i)}\\
	&=D_{KL}(p\Vert \tilde{p})\\
	&\ge \sum_{D_{-i}} \Vert p(D_{-i}|D_{i})-\tilde{p}(D_{-i}|D_i)\Vert^2.
	\end{align*} 
	We let the distribution difference vector be $\Delta_i$ (Note that here $\Delta_i$ is a $|\Theta|$-dimension vector), then we have 
	\begin{align*}
	\Delta_{Rev_i'} \ge \sum_{D_{-i}} | p(D_{-i}|D_{i})-\tilde{p}(D_{-i}|D_i)|^2
	&\ge \sum_{D_{-i}} \left\Vert\sum_\bth \left(p(\bth|D_i)-\tilde{p}(\bth|D_i)\right)\cdot p(D_{-i}|\bth)\right\Vert^2\\
	&=\Vert P_{-i} \Delta_i \Vert ^2.
	\end{align*} 
	Since $e_i$ is the minimum singular value of $P_{-i}$ and thus $P_{-i}^T P_{-i}-e_i I$ is semi-positive, we have
	\begin{align*}
	\Vert P_{-i} \Delta_i \Vert ^2&=\Delta_i^T P_{-i}^T P_{-i} \Delta_i\\
	&= \Delta_i^T ( P_{-i}^T P_{-i}-e_iI) \Delta_i + \Delta_i^T e_iI \Delta_i \\
	&\ge \Delta_i^T e_i I \Delta_i\\
	&\ge e_i \Delta_i^T \Delta_i\\
	&=  \Vert \Delta_i\Vert \cdot e_i.
	\end{align*}
	Finally get the payment rule with revenue $Rev_i'$ is $e_i$-sensitive for agent $i$. If all $P_{-i}$ has rank $|\Theta|$, then all the singular values of the matrix $P_{-i}$ should have positive singular values and for all $i$, $e_i>0$. By now we have proven that if all the $P_{-i}$ has rank $|\Theta|$, then the mechanism is sensitive. Since $p(\bth|D_i)=p(D_i|\bth)\cdot \frac{p(\bth)}{p(D_i)}$, we have the matrix equation:
	$$
	Q_{-i} = \Lambda^{D_{i}^{-1}} \cdot P_{-i}\cdot  \Lambda^{\bth},
	$$
	where $\Lambda^{D_{i}^{-1}}=\begin{bmatrix}
	\frac{1}{p(D_i^1)}& & &\\
	& \frac{1}{p(D_i^2)}& &\\
	& & \ddots &\\
	& & & \frac{1}{p(D_i^{|\dds_i|})}
	\end{bmatrix}$ and $\Lambda^{\bth}=\begin{bmatrix}
	p(\bth_1)& & &\\
	& p(\bth_2)& &\\
	& & \ddots &\\
	& & & p(\bth_{|\Theta|})
	\end{bmatrix}.$\\
	$p(D_i^j)$ is the probability that agent $i$ gets the dataset $D_i^j$. $p(\bth_k)$ is the probability of the prior of the parameter $\bth$ with index $k$. Both are all diagnal matrices. Both of the diagnal matrices well-defined and full-rank. Thus the rank of $P_{-i}$ should be the same as $Q_{-i}$ and we have proved the first condition.
	
The proof for the second sufficient condition is directly derived from the paper \cite{sidiropoulos2000uniqueness} and the condition 1. We first define a matrix $G_i'$ with the same size as $G_i$ while its elements are $p(d_i|\bth)$ rather than $p(\bth|d_i)$. Since for all $i'\in[n]$ the prediction matrix $P_{i'}$ is the columnwise Kronecker product (defined in Lemma 1 in \cite{sidiropoulos2000uniqueness} which is shown below) of $N_{i'}$ data generating matrices. By using the following Lemma in \cite{sidiropoulos2000uniqueness}, if the k-rank of $G'_{i'}$ is $r$, then each time we multiply(columnwise Kronecker product) a matrix by $G'_{i'}$, the k-rank would increase by at least $rank_k(G'_{i'}) -1$, or reach the cap of $|\Theta|$. 

\begin{lemma}
	Consider two matrices $\bm{A}=[\bm{a}_1,\bm{a}_2,\cdots,\bm{a}_F]\in \mathbb{R}^{I\times F},\bm{B}=[\bm{b}_1,\bm{b}_2,\cdots,\bm{b}_F]\in \mathbb{R}^{J\times F}$ and $\bm{A}\odot_c\bm{B}$ is the columnwise Krocnecker product of $\bm{A}$ and $\bm{B}$ defined as:
	$$
	\bm{A}\odot_c\bm{B}\triangleq\left[\bm{a}_1\otimes\bm{b}_1,\bm{a}_2\otimes\bm{b}_2,\cdots, \bm{a}_F\otimes\bm{b}_F\right],
	$$
	where $\otimes$ stands for the Kronecker product. It holds that
	$$
	rank_k(\bm{A}\odot_c\bm{B})\ge \min\{rank_k(\bm{A})+rank_k(\bm{B})-1,F\}.
	$$
\end{lemma}
Therefore the final k-rank of the $N_{i'}$ would be no less than $\min\{N_i\cdot (r-1)+1,|\Theta|\}$. We then need to calculate the k-rank of the out-prediction matrix of each agent $i$ and verify whether it is $|\Theta|$. Similarly, the out-prediction matrix of agent $i$ is the columnwise Kronecker product of all the other agent's prediction matrices. By the same lower bound tool in \cite{sidiropoulos2000uniqueness}, the k-rank of $P_{-i}$ should be at least $\min\{\sum_{i'\neq i}{(rank_k(G'_{i'})-1)\cdot N_{i'}}+1,|\Theta|\}$ and by Theorem~\ref{thm_sensitive}, if the k-rank of all prediction matrices are all $|\Theta|$, Mechanism~\ref{alg:single} should be sensitive. 
\end{proof}

\subsubsection{Missing Proof for Theorem~\ref{thm:single_sensitive_cont}}

When $\Theta \subseteq \mathbb{R}^m$ and a model in the exponential family is used, we prove that the mechanism will be sensitive if and only if for any $(\nu_i', \overline{\bm{\tau}}_i') \neq (\nu_i, \overline{\bm{\tau}}_i)$, 
	 \begin{eqnarray}\label{eqn:app_single_sensi}
	 \Pr_{D_{-i}} [h_{D_{-i}}(\nu_i', \overline{\bm{\tau}}_i') \neq h_{D_{-i}}(\nu_i, \overline{\bm{\tau}}_i)] > 0.	
	 \end{eqnarray}
	 We first show that the above condition 
	 is equivalent to that  for any $(\nu_i', \overline{\bm{\tau}}_i') \neq (\nu_i, \overline{\bm{\tau}}_i)$,
	 \begin{eqnarray} \label{eqn:app_single_sen}
	 \Pr_{D_{-i}|D_i} [h_{D_{-i}}(\nu_i', \overline{\bm{\tau}}_i') \neq h_{D_{-i}}(\nu_i, \overline{\bm{\tau}}_i)] > 0,
	 \end{eqnarray}
	 where $D_{-i}$ is drawn from $p(D_{-i}|D_i)$ but not $p(D_{-i})$. This is because, by conditional independence of the datasets, for any event $\mathcal{E}$, we have
	 $$
	 \Pr_{D_{-i}|D_i}[\mathcal{E}] = \int_{\bth\in \Theta} p(\bth|D_i) \Pr_{D_{-i}|\bth}[\mathcal{E}] \, d\bth
	 $$
	 and 
	 $$
	\Pr_{D_{-i}}[\mathcal{E}] = \int_{\bth\in \Theta} p(\bth) \Pr_{D_{-i}|\bth}[\mathcal{E}] \, d\bth.
	 $$
	 Since both $p(\bth)$ and $p(\bth|D_i)$ are always positive because they are in exponential family, it should hold that
	 $$
	 \Pr_{D_{-i}|D_i}[\mathcal{E}] > 0 \ \Longleftrightarrow \ \Pr_{D_{-i}}[\mathcal{E}] > 0.
	 $$
	 Therefore \eqref{eqn:app_single_sensi} is equivalent to \eqref{eqn:app_single_sen}, and we only need to show that the mechanism is sensitive if and only if \eqref{eqn:app_single_sen} holds. 
	 
	 When we're using a (canonical) model in exponential family, the prior $p(\bth)$ and the posteriors $p(\bth|D_i), p(\bth|D_{-i})$ can be represented in the standard form~\eqref{eqn:exp_fam_prior},
	 \begin{eqnarray*}
	 & p(\bth) = \mathcal{P}(\bth| \nu_0, \overline{\bm{\tau}}_0),\\
	 & p(\bth|D_i) = \mathcal{P}\big(\bth| \nu_i, \overline{\bm{\tau}}_i\big),\\
	 & p(\bth|D_{-i}) = \mathcal{P}\big(\bth| \nu_{-i}, \overline{\bm{\tau}}_{-i}\big),\\
	 & p(\bth|\tilde{D}_i) = \mathcal{P}\big(\bth| \nu_{i}', \overline{\bm{\tau}}_{i}'\big),
	 \end{eqnarray*}
 	where $\nu_0, \overline{\bm{\tau}}_0$ are the parameters for the prior $p(\bth)$, $\nu_i, \overline{\bm{\tau}}_i$ are the parameters for the posterior $p(\bth|D_i)$, $\nu_{-i}, \overline{\bm{\tau}}_{-i}$ are the parameters for the posterior $p(\bth|D_{-i})$, and $ \nu_{i}', \overline{\bm{\tau}}_{i}'$ are the parameters for $p(\bth|\tilde{D}_i)$.
 	
	 From the proof for Theorem~\ref{thm_truthful}, we know that the difference between the expected score of reporting $D_i$ and the expected score of reporting $\tilde{D}_i \neq D_i$ is equal to
	 $$
	 \Delta_{Rev} = D_{KL}(p(D_{-i}|D_i)\Vert p(D_{-i}|\tilde{D}_i)).
	 $$
	 Therefore if $p(D_{-i}|\tilde{D}_i)$ differs from $p(D_{-i}|D_i)$ with non-zero probability, that is, 
	 \begin{eqnarray} \label{eqn:app_single_sc}
	 \Pr_{D_{-i}|D_i}[ p(D_{-i}|D_i) \neq p(D_{-i}|\tilde{D}_i)] > 0,	 	
	 \end{eqnarray}
	 then $\Delta_{Rev} > 0$. By Lemma~\ref{lem12} and Lemma~\ref{lem:exp_int}, 
	 $$
	 p(D_{-i}|D_i) = \int_{\bth \in \Theta} \frac{p(\bth|D_i)p(\bth|D_{-i})}{p(\bth)}\, d\bth = \frac{g(\nu_i, \overline{\bm{\tau}}_i) g(\nu_{-i}, \overline{\bm{\tau}}_{-i})}{g(\nu_0, \overline{\bm{\tau}}_0) g(\nu_i +\nu_{-i}-\nu_0, \frac{\nu_{i} \overline{\bm{\tau}}_{i} + \nu_{-i} \overline{\bm{\tau}}_{-i} - \nu_{0}\overline{\bm{\tau}}_{0}}{\nu_i +\nu_{-i}-\nu_0} )}.
	 $$ 
	 $$
	 p(D_{-i}|\tilde{D}_i) = \int_{\bth \in \Theta} \frac{p(\bth|\tilde{D}_i)p(\bth|D_{-i})}{p(\bth)}\, d\bth = \frac{g(\nu_i', \overline{\bm{\tau}}_i') g(\nu_{-i}, \overline{\bm{\tau}}_{-i})}{g(\nu_0, \overline{\bm{\tau}}_0) g(\nu_i' +\nu_{-i}-\nu_0, \frac{\nu_{i}' \overline{\bm{\tau}}_{i}' + \nu_{-i} \overline{\bm{\tau}}_{-i} - \nu_{0}\overline{\bm{\tau}}_{0}}{\nu_i' +\nu_{-i}-\nu_0} )}.
	 $$
	 Therefore \eqref{eqn:app_single_sc} is equivalent to 
	 $$
	 \Pr_{D_{-i}|D_i}[h_{D_{-i}}(\nu_i, \overline{\bm{\tau}}_i) \neq h_{D_{-i}}(\nu_i', \overline{\bm{\tau}}_i')] > 0.
	 $$
	 Therefore if for all $(\nu_i', \overline{\bm{\tau}}_i') \neq (\nu_i, \overline{\bm{\tau}}_i)$, we have 	 
	 $$
	 \Pr_{D_{-i}|D_i}[h_{D_{-i}}(\nu_i, \overline{\bm{\tau}}_i) \neq h_{D_{-i}}(\nu_i', \overline{\bm{\tau}}_i')] > 0,
	 $$
	 then reporting any $(\nu_i', \overline{\bm{\tau}}_i') \neq (\nu_i, \overline{\bm{\tau}}_i)$ will lead to a strictly lower expected score, which means the mechanism is sensitive. To prove the other direction, if the above condition does not hold, i.e.,  there exists $(\nu_i', \overline{\bm{\tau}}_i') \neq (\nu_i, \overline{\bm{\tau}}_i)$ with
	 $$
	 \Pr_{D_{-i}|D_i} [h_{D_{-i}}(\nu_i', \overline{\bm{\tau}}_i') \neq h_{D_{-i}}(\nu_i, \overline{\bm{\tau}}_i)] = 0,
	 $$
 	then reporting $(\nu_i', \overline{\bm{\tau}}_i') \neq (\nu_i, \overline{\bm{\tau}}_i)$ will give the same expected score as truthfully reporting $(\nu_i, \overline{\bm{\tau}}_i)$, which means that the mechanism is not sensitive.

\section{Multiple-time data acquisition} 

\subsection{Sensitivity analysis}\label{app:multi_SA}
We first give the sensitivity analysis for finite-size $|\Theta|$. The results are basically the same as the ones for the one-time data acquisition mechanism except that we do not give a lower bound for $\alpha$.
\begin{theorem}\label{thm:multi_sensitive}
	When $|\Theta|$ is finite, if $f$ is strictly convex, then Mechanism~\ref{alg:multi} is sensitive in the first $T-1$ rounds if either of the following two conditions holds,
	\begin{enumerate}
		\item[(1)] $\forall i$, $Q_{-i}$ has rank $|\Theta|$.
		\item[(2)] $\forall i, \sum_{i'\neq i}({rank_k(G_{i'})-1)\cdot N_{i'}}+1\ge |\Theta|$.
	\end{enumerate}
\end{theorem}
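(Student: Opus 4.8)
The plan is to reduce the sensitivity claim to a uniqueness statement for the maximizer of the expected day-$(t-1)$ score, and then to a rank condition, paralleling the treatment of the one-time mechanism in Theorem~\ref{thm_sensitive}. First I would recall from the proof of Theorem~\ref{thm:multi_main} the exact form of the expected score. Fix a provider $i$ and a round $t-1\le T-1$, assume all other providers report accurate posteriors, and let $\tilde{\q}_i$ with $\tilde{\q}_i(\bth)=p(\bth\mid \tilde D_i^{(t-1)})$ denote the posterior induced by $i$'s report. Because days $t$ and $t-1$ come from independent ensembles, the two $PMI$ terms in~\eqref{eqn:multi_score} decouple: writing $v(\q, D_{-i})=\sum_{\bth}\q(\bth)\,p(\bth\mid D_{-i})/p(\bth)$ and $g_{\tilde{\q}_i}(D_{-i})=f'(1/v(\tilde{\q}_i, D_{-i}))$, the expected score equals $\E_{D_{-i}\sim p(D_{-i})}[g_{\tilde{\q}_i}(D_{-i})]-\E_{D_{-i}\sim p(D_{-i}\mid D_i)}[f^*(g_{\tilde{\q}_i}(D_{-i}))]$. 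This matches the right-hand side of Lemma~\ref{lem11} with $q=p(D_{-i})$ and $p=p(D_{-i}\mid D_i)$, so it is bounded above by $D_f(p(D_{-i}\mid D_i), p(D_{-i}))$, with equality if and only if $g_{\tilde{\q}_i}(D_{-i})=f'(p(D_{-i})/p(D_{-i}\mid D_i))$ for every $D_{-i}$ with $p(D_{-i})>0$.

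Since $f$ is strictly convex, $f'$ is strictly increasing and hence injective, so this equality of $f'$-values collapses to the linear condition $v(\tilde{\q}_i, D_{-i})=v(p(\bth\mid D_i), D_{-i})$ for all $D_{-i}$, where Lemma~\ref{lem12} identifies $p(D_{-i}\mid D_i)/p(D_{-i})$ with $v(p(\bth\mid D_i), D_{-i})$. The expected score depends on the report only through $\tilde{\q}_i$, so truthful reporting attains the bound (re-deriving part~(1) of the sensitivity definition from Theorem~\ref{thm:multi_main}); the real content is the converse. Stacking the condition over all realizations of $D_{-i}$ gives $Q_{-i}\Lambda\tilde{\q}_i=Q_{-i}\Lambda\,p(\bth\mid D_i)$, with $Q_{-i}$ the posterior matrix from the statement and $\Lambda=\mathrm{diag}(1/p(\bth))$. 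Under condition~(1), $Q_{-i}$ has rank $|\Theta|$, so $Q_{-i}\Lambda$ has linearly independent columns and $\tilde{\q}_i=p(\bth\mid D_i)$ is the unique solution. Hence every report with an inaccurate posterior falls strictly below the $f$-divergence bound, giving the strict loss required for sensitivity in rounds $1,\dots,T-1$.

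For condition~(2) I would reduce the rank hypothesis on $Q_{-i}$ to the Kruskal-rank bound, reusing the reduction behind Theorem~\ref{thm_sensitive}. Writing $p(\bth\mid D_{-i})\propto p(D_{-i}\mid\bth)p(\bth)$ shows that $Q_{-i}$ and the out-prediction matrix $P_{-i}$ (with entries $p(D_{-i}\mid\bth)$) differ only by invertible diagonal left and right scalings, so $\mathrm{rank}(Q_{-i})=\mathrm{rank}(P_{-i})$. By conditional independence, $P_{-i}$ is the column-wise Khatri--Rao product of the prediction matrices $P_j$ ($j\neq i$), and each $P_j$ is itself the column-wise Khatri--Rao product of $N_j$ copies of the data-generating matrix $G_j$. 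Applying the Kruskal-rank inequality $\mathrm{rank}_k(A\odot B)\ge\min\{\mathrm{rank}_k(A)+\mathrm{rank}_k(B)-1,\,|\Theta|\}$ from~\cite{sidiropoulos2000uniqueness} and iterating yields $\mathrm{rank}_k(P_{-i})\ge\min\{\sum_{j\neq i}(\mathrm{rank}_k(G_j)-1)N_j+1,\,|\Theta|\}$. Condition~(2) forces this to equal $|\Theta|$, so $Q_{-i}$ has full column rank and we are back in case~(1).

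The main obstacle is the \emph{uniqueness} half of the first step: converting ``equality in Lemma~\ref{lem11}'' into the statement that two posteriors agreeing on all functionals $v(\cdot, D_{-i})$ must coincide. This is exactly where strict convexity (injectivity of $f'$) and the rank of $Q_{-i}$ do the essential work, and it is the only place the argument genuinely differs from re-proving truthfulness. A secondary, bookkeeping-heavy point is the iterated Kruskal-rank accumulation for condition~(2): one must check that each Khatri--Rao factor increases the $k$-rank by at least $\mathrm{rank}_k(G_j)-1$ until the cap $|\Theta|$ is reached, which follows from the cited lemma but is sensitive to the order of multiplication. Finally, I would note that, unlike Theorem~\ref{thm_sensitive}, we do not extract an explicit $\alpha$-sensitivity constant here, because the quantitative gap below the $f$-divergence bound depends on the chosen $f$ rather than on a clean smallest-singular-value estimate.
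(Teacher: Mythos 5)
Your overall route is the paper's: express the expected day-$(t-1)$ score as the variational lower bound of Lemma~\ref{lem11} with $q = p(D_{-i})$ and $p = p(D_{-i}\mid D_i)$, use strict convexity of $f$ to turn the equality condition into a linear system in $\tilde{\q}_i$, and invoke the rank of $Q_{-i}$, with condition (2) reduced to condition (1) via the diagonal-rescaling identification $\mathrm{rank}(Q_{-i})=\mathrm{rank}(P_{-i})$ and the iterated Kruskal-rank/Khatri--Rao bound of \cite{sidiropoulos2000uniqueness}. However, there is one genuine gap, and it sits exactly where you ``stack the condition over all realizations of $D_{-i}$.'' Lemma~\ref{lem11}'s equality characterization constrains $g$ only at points where the \emph{first} argument of $D_f$ is positive, i.e.\ only for $D_{-i}$ with $p(D_{-i}\mid D_i) > 0$; you misquote it as holding for all $D_{-i}$ with $p(D_{-i}) > 0$. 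These supports genuinely differ: conditioning on $D_i$ can kill realizations of $D_{-i}$ that have positive marginal probability (for instance when $p(\bth\mid D_i)$ has smaller support than $p(\bth)$). What the equality condition actually yields is the system $Q_{-i}\bm{\Lambda}\tilde{\q}_i = Q_{-i}\bm{\Lambda}\, p(\bth\mid D_i)$ restricted to the rows indexed by the support of $p(\cdot\mid D_i)$, and full column rank of the \emph{whole} matrix $Q_{-i}$ does not imply full column rank of that row-submatrix. As written, the rank step therefore does not close.

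The paper fills this hole with a short normalization argument that your proposal is missing: on the support of $p(\cdot\mid D_i)$, the equality condition forces $p(D_{-i}\mid \tilde{D}_i) = p(D_{-i}\mid D_i)$; since both are probability distributions summing to one and $p(\cdot\mid \tilde{D}_i)\ge 0$, agreement on a set carrying all of the conditional mass forces $p(D_{-i}\mid\tilde{D}_i) = 0$ wherever $p(D_{-i}\mid D_i) = 0$. Hence $PMI(\tilde{D}_i, D_{-i}) = PMI(D_i, D_{-i})$ for \emph{all} $D_{-i}$, and only then may one stack every row and apply the rank-$|\Theta|$ hypothesis to conclude $\tilde{\q}_i = p(\bth\mid D_i)$. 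With that step inserted, the rest of your argument — strictness because a report with a wrong posterior cannot attain the supremum, the reduction of condition (2) to condition (1) by the Kruskal-rank accumulation, and the observation that no explicit $\alpha$ is extracted here — matches the paper's proof.
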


When $\Theta \subseteq \mathbb{R}^m$ is a continuous space, the results are entirely similar to the ones for Mechanism~\ref{alg:single} but with slightly different proofs. 

 Suppose the data analyst uses a model from the exponential family so that the prior and all the posterior of $\bth$ can be written in the form in Lemma~\ref{lem:multi_comp}. The sensitivity of the mechanism will depend on the normalization term $g(\nu, \overline{\bm{\tau}})$ (or equivalently, the partition function) of the pdf. Define
\begin{align} \label{eqn:h}
h_{D_{-i}}(\nu_i, \overline{\bm{\tau}}_i)= \frac{g(\nu_i, \overline{\bm{\tau}}_i) }{g(\nu_i +\nu_{-i}-\nu_0, \frac{\nu_{i} \overline{\bm{\tau}}_{i} + \nu_{-i} \overline{\bm{\tau}}_{-i} - \nu_{0}\overline{\bm{\tau}}_{0}}{\nu_i +\nu_{-i}-\nu_0} ) },
\end{align}
then we have the following sufficient and necessary conditions for the sensitivity of the mechanism. 
\begin{theorem} \label{thm:multi_sensitive_cont}
	When $\Theta \subseteq \mathbb{R}^m$, if the data analyst uses a model in the exponential family and a strictly convex $f$, then Mechanism~\ref{alg:multi} is sensitive in the first $T-1$ rounds if and only if
	for any $(\nu_i', \overline{\bm{\tau}}_i') \neq (\nu_i, \overline{\bm{\tau}}_i)$, we have
	$
	\Pr_{D_{-i}} [h_{D_{-i}}(\nu_i', \overline{\bm{\tau}}_i') \neq h_{D_{-i}}(\nu_i, \overline{\bm{\tau}}_i)] > 0.
	$
\end{theorem}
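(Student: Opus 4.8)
The plan is to mirror the proof of Theorem~\ref{thm:single_sensitive_cont}, with the $\log$-PMI score replaced by the $f$-mutual information gain score~\eqref{eqn:multi_score} and with strict convexity of $f$ supplying the strict gap. First I would reduce the marginal statement to a conditional one. For any event $\mathcal{E}$ depending on $D_{-i}$, conditional independence of the datasets gives
\begin{align*}
\Pr_{D_{-i}|D_i}[\mathcal{E}] &= \int_{\bth\in\Theta} p(\bth|D_i)\,\Pr_{D_{-i}|\bth}[\mathcal{E}]\,d\bth,\\
\Pr_{D_{-i}}[\mathcal{E}] &= \int_{\bth\in\Theta} p(\bth)\,\Pr_{D_{-i}|\bth}[\mathcal{E}]\,d\bth,
\end{align*}
and since $p(\bth)$ and $p(\bth|D_i)$ are everywhere positive for a model in the exponential family, $\Pr_{D_{-i}|D_i}[\mathcal{E}]>0 \Leftrightarrow \Pr_{D_{-i}}[\mathcal{E}]>0$. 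Thus it suffices to show the mechanism is sensitive in round $t-1$ iff, for every $(\nu_i',\overline{\bm{\tau}}_i')\neq(\nu_i,\overline{\bm{\tau}}_i)$, we have $\Pr_{D_{-i}|D_i}[h_{D_{-i}}(\nu_i',\overline{\bm{\tau}}_i')\neq h_{D_{-i}}(\nu_i,\overline{\bm{\tau}}_i)]>0$.

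Next I would compute agent $i$'s expected round-$(t-1)$ score when the others report truthfully and $i$ reports $\tilde{D}_i^{(t-1)}$ inducing posterior $\tilde{\q}_i=p(\bth|\tilde{D}_i^{(t-1)})$. Because day $t$ is independent of day $t-1$, the report $\tilde{D}_{-i}^{(t)}$ appearing in~\eqref{eqn:multi_score} is distributed as the marginal $p(D_{-i})$, while $\tilde{D}_{-i}^{(t-1)}$ is distributed as the conditional $p(D_{-i}|D_i^{(t-1)})$. Writing $g_{\tilde{\q}_i}(D_{-i}) = f'\!\left(1/PMI(\tilde{D}_i^{(t-1)}, D_{-i})\right)$ (in the exponential family all posteriors are strictly positive, so $PMI>0$ and the score is always well-defined), the expected score equals $\E_{p(D_{-i})}[g_{\tilde{\q}_i}] - \E_{p(D_{-i}|D_i)}[f^*(g_{\tilde{\q}_i})]$, which is exactly the quantity maximized on the right-hand side of Lemma~\ref{lem11} applied with $p=p(D_{-i}|D_i)$ and $q=p(D_{-i})$. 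Note $PMI$ depends on the report only through $\tilde{\q}_i$, so this is a well-defined function of the induced posterior.

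I would then invoke Lemma~\ref{lem11}: the expected score is bounded above by $D_f(p(D_{-i}|D_i), p(D_{-i}))$, and since $f$ is strictly convex, $f'$ is strictly increasing, so $\partial f$ is single-valued and injective and equality holds iff $g_{\tilde{\q}_i}(D_{-i}) = f'\!\left(p(D_{-i})/p(D_{-i}|D_i)\right)$ for every $D_{-i}$ of positive probability. By Lemma~\ref{lem12} the argument inside is $1/PMI(D_i, D_{-i})$, so strict monotonicity of $f'$ makes the equality equivalent to $PMI(\tilde{D}_i, D_{-i}) = PMI(D_i, D_{-i})$ for $p(D_{-i}|D_i)$-almost every $D_{-i}$. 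Using the exponential-family form of Lemma~\ref{lem:exp_int} (equivalently Lemma~\ref{lem:multi_comp}), $PMI(\tilde{D}_i, D_{-i}) = \frac{g(\nu_{-i},\overline{\bm{\tau}}_{-i})}{g(\nu_0,\overline{\bm{\tau}}_0)}\, h_{D_{-i}}(\nu_i',\overline{\bm{\tau}}_i')$, where the prefactor depends only on $D_{-i}$ and the prior and not on $i$'s report; hence the pointwise $PMI$-equality is equivalent to $h_{D_{-i}}(\nu_i',\overline{\bm{\tau}}_i') = h_{D_{-i}}(\nu_i,\overline{\bm{\tau}}_i)$ almost surely.

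Finally I would assemble both directions. The mechanism is sensitive iff every report with $(\nu_i',\overline{\bm{\tau}}_i')\neq(\nu_i,\overline{\bm{\tau}}_i)$ fails to attain the supremum, and by the previous paragraph this failure occurs exactly when $h_{D_{-i}}(\nu_i',\overline{\bm{\tau}}_i')\neq h_{D_{-i}}(\nu_i,\overline{\bm{\tau}}_i)$ on a set of positive $p(D_{-i}|D_i)$-measure, which by the first reduction is the stated marginal condition; conversely, agreement of the two $h$-values almost surely makes the misreport tie with truth-telling, so the mechanism is not sensitive. The hard part will be the careful bookkeeping of the two different distributions (day-$t$ marginal versus day-$(t-1)$ conditional) so that the expected score lines up precisely with the variational bound of Lemma~\ref{lem11}, together with the correct use of strict convexity to pass from ``$g_{\tilde{\q}_i}$ lies in the subdifferential'' to the pointwise $PMI$-equality that $h_{D_{-i}}$ tracks; the marginal-to-conditional reduction and the exponential-family closed form are then routine given the earlier lemmas.
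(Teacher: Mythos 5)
Your proposal is correct and follows essentially the same route as the paper's proof: the same marginal-to-conditional reduction via conditional independence and positivity of exponential-family densities, the same application of Lemma~\ref{lem11} with $p = p(D_{-i}|D_i)$ and $q = p(D_{-i})$ (using strict convexity to make the equality condition pointwise equality of $PMI$), and the same passage from $PMI$-equality to $h_{D_{-i}}$-equality via Lemma~\ref{lem:exp_int}. Your factorization $PMI(\tilde{D}_i, D_{-i}) = \frac{g(\nu_{-i},\overline{\bm{\tau}}_{-i})}{g(\nu_0,\overline{\bm{\tau}}_0)}\, h_{D_{-i}}(\nu_i',\overline{\bm{\tau}}_i')$ is just a compact restatement of the paper's cancellation of the report-independent factors, so the two arguments coincide.
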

See Section~\ref{sec:single} for interpretations of this theorem.

\subsection{Missing proofs}\label{app:multi_proofs}
The following part are the proofs for our results.

\paragraph{Proof of Theorem~\ref{thm:multi_main}.}
It is easy to verify that the mechanism is IR, budget feasible and symmetric. We prove the truthfulness as follows.

Let's look at the payment for day $t$. At day $t$,  data provider $i$ reports a dataset  $\tilde{D}_i^{(t)}$. Assuming that all other data providers truthfully report $D^{(t)}_{-i}$, data provider $i$'s expected payment is decided by his expected score
\begin{align} 
&\E_{(D_{-i}^{(t)}, D_{-i}^{(t+1)})|D_i^{(t)}} [s_i] \notag\\
= &\E_{D_{-i}^{(t+1)}} f'\left(\frac{1}{PMI(\tilde{D}_i^{(t)}, D_{-i}^{(t+1)})}\right) - \E_{D_{-i}^{(t)}|D_i^{(t)}}f^*\left( f'\left( \frac{1}{PMI(\tilde{D}_i^{(t)}, D_{-i}^{(t)}) }\right)\right).
\end{align}
The first expectation is taken over the marginal distribution $p( D_{-i}^{(t+1)})$ without conditioning on $ D_i^{(t)}$ because $D^{(t+1)}$ is independent from $D^{(t)}$, so we have $p( D_{-i}^{(t+1)} | D_i^{(t)}) = p( D_{-i}^{(t+1)})$. Since the underlying distributions for different days are the same, we drop the superscripts for simplicity in the rest of the proof, so the expected score is written as
\begin{align} \label{eqn:e_payment}
\E_{D_{-i}}\ f'\left(\frac{1}{PMI(\tilde{D}_i, D_{-i})}\right) - \E_{D_{-i}|D_i} \ f^*\left( f'\left( \frac{1}{PMI(\tilde{D}_i, D_{-i}) }\right)\right).
\end{align}

We then use Lemma~\ref{lem11} to get an upper bound of the expected score~\eqref{eqn:e_payment} and show that truthfully reporting $D_i$ achieves the upper bound. We apply Lemma~\ref{lem11} on two distributions of $D_{-i}$, the distribution of $D_{-i}$ conditioning on the observed $D_i$, $p( D_{-i} | D_i)$, and the marginal distribution $p( D_{-i})$. Then we get
\begin{align} \label{eqn:f_div_multi}
D_f( p( D_{-i} | D_i), p( D_{-i})) \ge \sup_{g \in \mathcal{G}} \E_{D_{-i}}[g(D_{-i})] - \E_{D_{-i}|D_i} [f^*(g(D_{-i}))], 
\end{align}
where $f$ is the given convex function,  $\mathcal{G}$ is the set of all real-valued functions of $D_{-i}$. The supremum is achieved and only achieved at function $g$ with
\begin{align} \label{eqn:f_div_max}
g(D_{-i}) = f'\left( \frac{p( D_{-i})}{p(D_{-i} | D_i)} \right) \text{ for all } D_{-i} \text{ with } p(D_{-i} | D_i)>0.
\end{align}
For a dataset $\tilde{D}_i$,
define function $$ g_{\tilde{D}_i}(D_{-i}) = f'\left(\frac{1}{PMI(\tilde{D}_i, D_{-i})}\right).
$$
Then \eqref{eqn:f_div_multi} gives an upper bound of the expected score~\eqref{eqn:e_payment} as
\begin{align*}
&D_f( p( D_{-i} | D_i), p( D_{-i})) \\
\ge & \ \E_{D_{-i}}\big[g_{\tilde{D}_i}(D_{-i})\big] - \E_{D_{-i}|D_i} \big[f^*\big(g_{\tilde{D}_i}(D_{-i})\big) \big]\\
 = & \ \E_{D_{-i}}\left[f'\left(\frac{1}{PMI(\tilde{D}_i, D_{-i})}\right)\right] - \E_{D_{-i}|D_i} \left[f^*\left(f'\left(\frac{1}{PMI(\tilde{D}_i, D_{-i})}\right)\right)\right]\\
 = & \ \eqref{eqn:e_payment}.
\end{align*}
By~\eqref{eqn:f_div_max}, the upper bound is achieved only when 
\begin{align*}
g_{\tilde{D}_i}(D_{-i}) = f'\left( \frac{p( D_{-i})}{p(D_{-i} | D_i)} \right) \text{ for all } D_{-i} \text{ with } p(D_{-i} | D_i)>0,
\end{align*}
that is 
\begin{align} \label{eqn:opt_cond}
f'\left(\frac{1}{PMI(\tilde{D}_i, D_{-i})}\right) = f'\left( \frac{p( D_{-i})}{p(D_{-i} | D_i)} \right) \text{ for all } D_{-i} \text{ with } p(D_{-i} | D_i)>0.
\end{align}

Then it is easy to prove the truthfulness. Truthfully reporting $D_i$ achieves~\eqref{eqn:opt_cond} because by Lemma~\ref{lem12}, for all $D_i$ and $D_{-i}$,
\begin{align*}
PMI(D_i, D_{-i}) = \frac{p(D_i, D_{-i})}{p(D_i) p(D_{-i})} = \frac{p(D_{-i}|D_i)}{p(D_{-i})}.	
\end{align*}


Again, let $\bm{Q_{-i}}$ be a $(\Pi_{j\in [n], j\neq i} |\mathcal{D}_j|^{N_j}) \times |\Theta|$ matrix with elements equal to $p(\bth|D_{-i})$ and let $G_i$ be the $|\mathcal{D}_i|\times|\Theta|$ data generating matrix  with elements equal to $p(\bth|d_i)$. Then we have the following sufficient conditions for the mechanism's sensitivity.

\paragraph{Proof of Theorem~\ref{thm:multi_sensitive}.}
We then prove the sensitivity. For discrete and finite-size $\Theta$, we prove that when $f$ is strictly convex and $\bm{Q}_{-i}$ has rank $|\Theta|$, the mechanism is sensitive. When $f$ is strictly convex, $f'$ is a strictly increasing function. Let $\tilde{\q}_i = p(\bth|\tilde{D}_{i})$. Then accordint to the definition of $PMI(\cdot)$, condition~\eqref{eqn:opt_cond} is equivalent to 
	\begin{align} \label{eqn:opt_cond_strict}
		PMI(\tilde{D}_i, D_{-i}) = \sum_{\bth \in \Theta} \frac{\tilde{\q}_i\cdot p(\bth|D_{-i})}{p(\bth)} = \frac{p( D_{-i} | D_i)}{p(D_{-i})} \  \text{ for all } D_{-i} \text{ with } p(D_{-i} | D_i)>0.
	\end{align}
	We show that when matrix $\bm{Q}_{-i}$ has rank $|\Theta|$, $\tilde{\q}_i = p(\bth|D_i)$ is the only solution of~\eqref{eqn:opt_cond_strict}, which means that the payment rule is sensitive.  Then suppose $\tilde{\q}_i =p(\bth |D_i)$ and $\tilde{\q}_i = p(\bth |\tilde{D}_i)$ are both solutions of~\eqref{eqn:opt_cond_strict}, then we should have 
	\begin{align*}
	    p(D_{-i}|\tilde{D}_i) = p(D_{-i}|D_i) \text{ for all } D_{-i} \text{ with } p(D_{-i} | D_i)>0.
	\end{align*}
	In addition, because
	$$
	\sum_{D_{-i}} p(D_{-i}|\tilde{D}_i) = 1 = \sum_{D_{-i}} p(D_{-i}|D_i)
	$$
	and $p(D_{-i}|\tilde{D}_i)\ge 0$, we must also have  $p(D_{-i}|\tilde{D}_i) = 0$ for all $D_{-i}$ with  $p(D_{-i} | D_i)=0$. Therefore we have 
	$$
	PMI(\tilde{D}_i, D_{-i}) = PMI(D_i, D_{-i}) \text{ for all } D_{-i}.
	$$
	Since  $PMI(\cdot)$ can be written as,
	\begin{align*} \label{eqn:opt_cond_strict2}
	PMI(\tilde{D}_i, D_{-i}) = \sum_{\bth \in \Theta} \frac{p(\bth|\tilde{D}_{i}) p(\bth|D_{-i})}{p(\bth)} = (\bm{Q}_{-i} \bm{\Lambda} \tilde{\q}_i)_{D_{-i}}
	\end{align*}
	where $\bm{\Lambda}$ is the $|\Theta| \times |\Theta|$ diagonal matrix with $1/p(\bth)$ on the diagonal.
	So we have
	$$
	\bm{Q}_{-i} \bm{\Lambda} p(\bth |D_i) = \bm{Q}_{-i} \bm{\Lambda} \bm{q} \quad \Longrightarrow \quad \bm{Q}_{-i} \bm{\Lambda} (p(\bth |D_i) - \bm{q}) = 0.
	$$
	Since $\bm{Q}_{-i} \bm{\Lambda}$ must have rank $|\Theta|$, which means that the columns of $\bm{Q}_{-i} \bm{\Lambda}$ are linearly independent, we must have 
	$$
	p(\bth |D_i) - \bm{q} = 0,
	$$
	which completes our proof of sensitivity for finite-size $\Theta$. The proof of condition (2) is the same as the proof of Theorem~\ref{thm_sensitive_app} condition (2).

	
\paragraph{Proof of Theorem~\ref{thm:multi_sensitive_cont}.} When $\Theta \subseteq \mathbb{R}^m$ and a model in the exponential family is used, we prove that when $f$ is strictly convex, the mechanism will be sensitive if and only if for any $(\nu_i', \overline{\bm{\tau}}_i') \neq (\nu_i, \overline{\bm{\tau}}_i)$, 
	 \begin{eqnarray}\label{eqn:app_multi_sensi}
	 \Pr_{D_{-i}} [h_{D_{-i}}(\nu_i', \overline{\bm{\tau}}_i') \neq h_{D_{-i}}(\nu_i, \overline{\bm{\tau}}_i)] > 0.	
	 \end{eqnarray}
	 We first show that the above condition 
	 is equivalent to that  for any $(\nu_i', \overline{\bm{\tau}}_i') \neq (\nu_i, \overline{\bm{\tau}}_i)$,
	 \begin{eqnarray} \label{eqn:app_multi_sen}
	 \Pr_{D_{-i}|D_i} [h_{D_{-i}}(\nu_i', \overline{\bm{\tau}}_i') \neq h_{D_{-i}}(\nu_i, \overline{\bm{\tau}}_i)] > 0,
	 \end{eqnarray}
	 where $D_{-i}$ is drawn from $p(D_{-i}|D_i)$ but not $p(D_{-i})$. This is because, by conditional independence of the datasets, for any event $\mathcal{E}$, we have
	 $$
	 \Pr_{D_{-i}|D_i}[\mathcal{E}] = \int_{\bth\in \Theta} p(\bth|D_i) \Pr_{D_{-i}|\bth}[\mathcal{E}] \, d\bth
	 $$
	 and 
	 $$
	\Pr_{D_{-i}}[\mathcal{E}] = \int_{\bth\in \Theta} p(\bth) \Pr_{D_{-i}|\bth}[\mathcal{E}] \, d\bth.
	 $$
	 Since both $p(\bth)$ and $p(\bth|D_i)$ are always positive because they are in exponential family, it should hold that
	 $$
	 \Pr_{D_{-i}|D_i}[\mathcal{E}] > 0 \ \Longleftrightarrow \ \Pr_{D_{-i}}[\mathcal{E}] > 0.
	 $$
	 Therefore \eqref{eqn:app_multi_sensi} is equivalent to \eqref{eqn:app_multi_sen}, and we only need to show that the mechanism is sensitive if and only if \eqref{eqn:app_multi_sen} holds.
	 
	 Let $\tilde{\q}_i = p(\bth|\tilde{D}_i)$. We then again apply Lemma~\ref{lem11}. By Lemma~\ref{lem11} and the strict convexity of $f$, $\tilde{\q}_i$ achieves the supremum if and only if
	 $$
	 PMI(\tilde{D}_i, D_{-i}) = \frac{p( D_{-i} | D_i)}{p(D_{-i})} \  \text{ for all } D_{-i} \text{ with } p(D_{-i} | D_i)>0.
	 $$ 
	 By the definition of $PMI$ and Lemma~\ref{lem12}, the above condition is equivalent to 
	 \begin{eqnarray} \label{eqn:multi_exp_cond}
	 \int_{\bth\in \Theta} \frac{\tilde{\q}_i(\bth)p(\bth|D_{-i})}{p(\bth)} \,d\bth = \int_{\bth\in \Theta} \frac{p(\bth|D_i)p(\bth|D_{-i})}{p(\bth)}\, d\bth \  \text{ for all } D_{-i} \text{ with } p(D_{-i} | D_i)>0.
	 \end{eqnarray}
	 When we're using a (canonical) model in exponential family, the prior $p(\bth)$ and the posteriors $p(\bth|D_i), p(\bth|D_{-i})$ can be represented in the standard form~\eqref{eqn:exp_fam_prior},
	 \begin{eqnarray*}
	 & p(\bth) = \mathcal{P}(\bth| \nu_0, \overline{\bm{\tau}}_0),\\
	 & p(\bth|D_i) = \mathcal{P}\big(\bth| \nu_i, \overline{\bm{\tau}}_i\big),\\
	 & p(\bth|D_{-i}) = \mathcal{P}\big(\bth| \nu_{-i}, \overline{\bm{\tau}}_{-i}\big),\\
	 & \tilde{\q}_i = \mathcal{P}\big(\bth| \nu_{i}', \overline{\bm{\tau}}_{i}'\big),
	 \end{eqnarray*}
 	where $\nu_0, \overline{\bm{\tau}}_0$ are the parameters for the prior $p(\bth)$, $\nu_i, \overline{\bm{\tau}}_i$ are the parameters for the posterior $p(\bth|D_i)$, $\nu_{-i}, \overline{\bm{\tau}}_{-i}$ are the parameters for the posterior $p(\bth|D_{-i})$, and $ \nu_{i}', \overline{\bm{\tau}}_{i}'$ are the parameters for $\tilde{\q}_i$. Then by Lemma~\ref{lem:exp_int}, the condition that $\tilde{\q}_i$ achieves the supremum~\eqref{eqn:multi_exp_cond} is equivalent to
 	\begin{eqnarray}
 	 \frac{g(\nu_i', \overline{\bm{\tau}}_i') }{ g(\nu_i' +\nu_{-i}-\nu_0, \frac{\nu_{i}' \overline{\bm{\tau}}_{i}' + \nu_{-i} \overline{\bm{\tau}}_{-i} - \nu_{0}\overline{\bm{\tau}}_{0}}{\nu_i' +\nu_{-i}-\nu_0} )} = \frac{g(\nu_i, \overline{\bm{\tau}}_i) }{ g(\nu_i +\nu_{-i}-\nu_0, \frac{\nu_{i} \overline{\bm{\tau}}_{i} + \nu_{-i} \overline{\bm{\tau}}_{-i} - \nu_{0}\overline{\bm{\tau}}_{0}}{\nu_i +\nu_{-i}-\nu_0} )}.
 	\end{eqnarray}
	which, by our definition of $h(\cdot)$, is just
	$$
	h_{D_{-i}}(\nu_i', \overline{\bm{\tau}}_i') = h_{D_{-i}}(\nu_{i}, \overline{\bm{\tau}}_{i}), \quad \text{ for all } D_{-i} \text{ with } p(D_{-i} | D_i)>0.
	$$
	
	Now we are ready to prove Theorem~\ref{thm:multi_sensitive_cont}. Since \eqref{eqn:app_multi_sensi} is equivalent to \eqref{eqn:app_multi_sen}, we only need to show that the mechanism is sensitive if and only if for all $(\nu_i', \overline{\bm{\tau}}_i') \neq (\nu_i, \overline{\bm{\tau}}_i)$,
	\begin{eqnarray*}
	 \Pr_{D_{-i}|D_i} [h_{D_{-i}}(\nu_i', \overline{\bm{\tau}}_i') \neq h_{D_{-i}}(\nu_i, \overline{\bm{\tau}}_i)] > 0.
	 \end{eqnarray*}
	 If the above condition holds, then $\tilde{\q}_i$ with parameters $(\nu_i', \overline{\bm{\tau}}_i') \neq (\nu_i, \overline{\bm{\tau}}_i)$ should have a non-zero loss in the expected score~\eqref{eqn:e_payment} compared to the optimal solution $p(\bth|D_i)$ with parameters $(\nu_i, \overline{\bm{\tau}}_i)$, which means that the mechanism is sensitive. For the other direction, if the condition does not hold, i.e., there exists $(\nu_i', \overline{\bm{\tau}}_i') \neq (\nu_i, \overline{\bm{\tau}}_i)$ with
	 $$
	 \Pr_{D_{-i}|D_i} [h_{D_{-i}}(\nu_i', \overline{\bm{\tau}}_i') \neq h_{D_{-i}}(\nu_i, \overline{\bm{\tau}}_i)] = 0,
	 $$
 	then reporting $(\nu_i', \overline{\bm{\tau}}_i') \neq (\nu_i, \overline{\bm{\tau}}_i)$ will give the same expected score as truthfully reporting $(\nu_i, \overline{\bm{\tau}}_i)$, which means that the mechanism is not sensitive.

\end{document}